\documentclass[12pt]{article}
\usepackage[a4paper,margin=1in]{geometry}
\usepackage{amsmath,amsthm,mathtools}
\usepackage[T1]{fontenc}
\usepackage[utf8]{inputenc}
\usepackage{newtxtext,newtxmath}
\usepackage{graphicx}
\usepackage{subcaption}
\usepackage{bm}
\usepackage{enumitem,microtype,setspace}
\usepackage{booktabs,placeins,titlesec}
\usepackage[font={normalsize,doublespacing},labelfont=bf,skip=10pt]{caption}
\usepackage[authoryear,round]{natbib}
\usepackage[hidelinks]{hyperref}

\newcommand{\AuthorName}{%
	Ping Zhao\textsuperscript{1} \quad
	Dan Zhuang\textsuperscript{2,*}%
}

\newcommand{\AuthorAffiliation}{%
	\textsuperscript{1}School of Mathematical Sciences,\\
	Tiangong University, Tianjin 300071, China\\[0.4em]
	\textsuperscript{2}School of Mathematics and Statistics,\\
	Fujian Normal University, Fuzhou 350117, China\\[0.4em]
	\textsuperscript{*}Corresponding author%
}

\newcommand{\AuthorEmail}{zhgdan@fjnu.edu.cn}
\mathtoolsset{showonlyrefs}
\allowdisplaybreaks[4]
\titleformat{\section}{\normalfont\normalsize\bfseries}{\thesection}{1em}{}
\titleformat{\subsection}{\normalfont\normalsize\bfseries\itshape}{\thesubsection}{1em}{}
\titleformat{\subsubsection}{\normalfont\normalsize\itshape}{\thesubsubsection}{1em}{}
\makeatletter
\renewcommand{\abstract}{\if@twocolumn
  \section*{\abstractname}%
  \else\begin{center}\bfseries\abstractname\end{center}\quotation\fi}
\renewcommand{\endabstract}{\if@twocolumn\else\endquotation\fi}
\makeatother

\newtheorem{assumption}{Assumption}
\newtheorem{theorem}{Theorem}[section]
\newtheorem{proposition}[theorem]{Proposition}
\newtheorem{corollary}[theorem]{Corollary}
\newtheorem{lemma}{Lemma}[section]
\theoremstyle{remark}

\DeclareMathOperator{\tr}{tr}
\DeclareMathOperator{\Var}{Var}
\DeclareMathOperator{\Cov}{Cov}
\DeclareMathOperator{\diag}{diag}

\DeclareMathOperator{\argmax}{arg\,max}
\newcommand{\R}{\mathbb R}
\newcommand{\E}{\mathbb E}
\newcommand{\PP}{\mathbb P}
\newcommand{\I}{\mathbf I}
\newcommand{\one}{\bm 1}
\newcommand{\trans}{\mathsf T}
\newcommand{\dto}{\xrightarrow{d}}
\newcommand{\pto}{\xrightarrow{p}}
\newcommand{\aalpha}{\bm\alpha}
\newcommand{\ahat}{\widehat{\bm\alpha}}
\newcommand{\Sig}{\mathbf\Sigma_N}
\newcommand{\SSS}{\mathbf S_T}
\newcommand{\QQ}{\mathbf Q_T}
\newcommand{\calH}{\mathcal H}
\newcommand{\calL}{\mathcal L}
\newcommand{\Gam}{\mathbf\Gamma}
\newcommand{\ind}{\mathbf 1}

\newcommand{\supp}{\operatorname{supp}}
\newcommand{\ii}{\mathrm{i}}
\hypersetup{pdfauthor={\AuthorName},pdftitle={Cauchy-Aggregated Ridge Tests for High-Dimensional Factor Pricing Models},pdfsubject={Joint null and local-alternative theory for covariance-based regularized alpha tests}}
\begin{document}
\begin{center}
{\large\bfseries Cauchy-Aggregated Ridge Tests for \\
High-Dimensional Factor Pricing Models}\par
\AuthorName\par
\AuthorAffiliation\par
\ifx\AuthorEmail\empty\else\AuthorEmail\par\fi
\end{center}

\begin{abstract}
In high-dimensional factor pricing models, the power of regularized alpha tests depends on the ridge parameter, whose optimal value varies with the unknown direction of pricing errors. We address this tuning uncertainty by combining ridge-specific p-values using the Cauchy rule. The resulting test retains the least-squares alpha estimator and residual sample covariance and accommodates more assets than observations. We establish the joint Gaussian limits of the component statistics under the null and local alternatives and derive an explicit covariance formula across ridge parameters. These results characterize the combined test's asymptotic distribution and local power and justify its tail calibration. 
Simulations show empirical rejection rates close to the nominal levels and power gains over the fixed-ridge benchmark, with power approaching that of signal-informed ridge benchmarks across the simulated settings.
\end{abstract}
\noindent\textit{Keywords:} alpha test; factor pricing model; regularized GRS test; ridge parameter; Cauchy combination; local power.

\clearpage
\section{Introduction}
Testing whether alphas are jointly zero is a central problem in the empirical evaluation of linear factor pricing models. An alpha measures the expected excess return that is not accounted for by the candidate factors. A joint test therefore evaluates the pricing restrictions across the entire collection of test assets rather than considering each asset separately. 
The classical Gibbons–Ross–Shanken (GRS) test \citep{GRS1989} measures the estimated alpha vector relative to the residual covariance matrix and adjusts for the estimation of factor loadings. Its implementation requires inversion of the residual covariance matrix and becomes problematic when the number of test assets approaches the sample size, as the covariance estimate becomes ill-conditioned and eventually singular.

A substantial literature develops high-dimensional alpha tests for different patterns of pricing errors and residual dependence. Quadratic-form and random-projection procedures provide feasible joint tests for large cross-sections \citep{Lan2018,PY2024}. Screening, maximum-type tests and adaptive signal selection improve sensitivity to sparse pricing errors \citep{Fan2015,Feng2022,XiaZhang2024}. Combination procedures pool complementary evidence from sum-type, maximum-type and higher-order-norm statistics to accommodate unknown sparsity \citep{Yu2024,ZhaoAdaptive2026,ZhaoFisher2026,ZhaoLq2026}. Distribution-free and spatial-sign methods address non-Gaussian returns through sign transformations, inverse-norm weighting and robust sum--max combinations \citep{Gungor2013,ZhaoNonparam2022,Liu2023,ZhaoRobust2026}. For serially dependent observations, \citet{MaDepend2025} develop sum-type, maximum-type and Cauchy-combined alpha tests.

A complementary approach is to regularize the covariance inverse while retaining its information about cross-sectional dependence. Regularized Hotelling tests were developed by \citet{Chen2011}.
\citet{Li2020ARHT} studied power-directed ridge selection and constructed a composite test by combining ridge statistics tailored to different local alternatives.
The general linear hypothesis theory of \citet{LAP2020} accommodates nuisance regressors and analytic transformations of the residual covariance spectrum. Building on this theory, \citet{CKMS2025} extend the GRS test to settings in which the number of assets can exceed the number of observations.

Ridge regularization stabilizes the inverse residual covariance matrix, while the centering and variance corrections account for high dimensionality. The ridge parameter determines the relative weights assigned to covariance directions and hence affects power. Smaller values place greater relative weight on low-variance directions, whereas larger values produce more uniform weighting. The preferred choice therefore depends on the alignment of pricing errors with the residual covariance eigenvectors.

The local-power criterion depends on the unknown alpha vector, or its covariance allocation under the random local alternatives of \citet{CKMS2025}. These authors propose a power-directed selection rule and report favorable performance for a standardized ridge near one in their simulations. 
We address uncertainty in the power-optimal ridge parameter by aggregating ridge-specific p-values over a prespecified grid using the Cauchy combination rule of \citet{LiuXie2020}. The resulting procedure does not require estimation of the unknown pricing-error allocation.
We derive joint Gaussian limits under the null and local alternatives, accounting for factor-loadings estimation and nonzero factor means, and obtain an explicit cross-ridge covariance. These results characterize the combined test's local power and establish relative size validity of its analytic calibration at small significance levels, without estimating dependence across ridge values. Simulations show power gains over the recommended fixed-ridge CKMS specification and performance close to signal-informed ridge benchmarks.

The remainder of the paper is organized as follows. Section~\ref{sec:method} presents the model, regularity conditions, single-ridge procedures and the proposed combination test. Section~\ref{sec:theory} establishes the joint asymptotic limits and size properties. Section~\ref{sec:simulation} reports the simulation designs and finite-sample results. Section~\ref{sec:conclusion} concludes. Auxiliary results and proofs are collected in the appendices.

\noindent\textit{Notation.} 
Let $T$ denote the number of time observations, and let $N$ and $p$ denote the numbers of test assets and factors, respectively, with $p$ fixed.
Vectors are columns and matrices are bold capitals. The symbols $\|\cdot\|$, $\|\cdot\|_\infty$ and $\|\cdot\|_{\mathrm{op}}$ denote the Euclidean, maximum-coordinate and spectral norms, respectively; $\tr(\cdot)$ denotes the trace. Normal distributions are denoted by $\mathcal{N}$. The standard normal distribution function is $\Phi$, its density is $\phi$, and $z_u=\Phi^{-1}(u)$. The significance level is denoted by $\eta$, and a point mass at $x$ by $\delta_x$. Constants denoted by $C$ are finite, may differ between displays, and do not depend on $N$ or $T$.

\section{From a single ridge parameter to a combined test}\label{sec:method}
\subsection{The model and factor adjustment}
For $t=1,\ldots,T$, let $\bm y_t\in\R^N$ collect the excess returns on $N$ test assets and let $\bm f_t\in\R^p$ collect the observed factors, where $p$ is fixed. We consider
\begin{equation*}
 \bm y_t=\aalpha+\mathbf B\bm f_t+\bm\varepsilon_t,
 \qquad \bm\varepsilon_t=\Sig^{1/2}\bm u_t,
 \qquad H_0:\aalpha=\bm0\quad\text{against}\quad H_1:\aalpha\ne\bm0.
\end{equation*}
Here $\mathbf B\in\R^{N\times p}$ is the factor-loading matrix and $\Sig$ is the residual covariance. The innovations are independent of the factor design, and inference is conditional on that design. The specification includes a single candidate portfolio ($p=1$) and a fixed number of factors, as in Appendix D of \citet{CKMS2025}. \footnote{References
	to specific results, section or equation numbers, and implementation details
	from \citet{CKMS2025} refer to the manuscript version dated February 27, 2025.}

Write $\bar{\bm y}=T^{-1}\sum_t\bm y_t$, $\bar{\bm f}=T^{-1}\sum_t\bm f_t$ and
\[
 \mathbf V_{f,T}=\frac1T\sum_{t=1}^T
 (\bm f_t-\bar{\bm f})(\bm f_t-\bar{\bm f})^{\trans}.
\]
The least-squares regression includes an intercept. Its slope and intercept estimates are
\begin{equation}\label{eq:OLS}
 \widehat{\mathbf B}=\left\{\frac1T\sum_{t=1}^T
 (\bm y_t-\bar{\bm y})(\bm f_t-\bar{\bm f})^{\trans}\right\}\mathbf V_{f,T}^{-1},
 \qquad
 \ahat=\bar{\bm y}-\widehat{\mathbf B}\bar{\bm f}.
\end{equation}
Let $\widehat{\bm a}_t=\bm y_t-\widehat{\mathbf B}\bm f_t$. The covariance estimate and factor-mean correction are
\begin{equation}\label{eq:cov}
 \SSS=\frac1T\sum_{t=1}^T
 (\widehat{\bm a}_t-\ahat)(\widehat{\bm a}_t-\ahat)^{\trans},
 \qquad d_T=1+\bar{\bm f}^{\trans}\mathbf V_{f,T}^{-1}\bar{\bm f}.
\end{equation}
The covariance normalization by $T$ follows \citet{CKMS2025}.

For the matrix representation, let $\mathbf Y=(\bm y_1,\ldots,\bm y_T)$ and $\mathbf E=(\bm\varepsilon_1,\ldots,\bm\varepsilon_T)$. Let $\mathbf F$ have rows $\bm f_t^{\trans}$, set $\mathbf F_c=\mathbf F-\one\bar{\bm f}^{\trans}$ and $\mathbf X=(\one,\mathbf F)$, and let $\bm e_1$ be the first coordinate vector in $\R^{p+1}$. Define
\begin{equation}\label{eq:projection}
 \begin{split}
 \mathbf M_X&=\I_T-\mathbf X(\mathbf X^{\trans}\mathbf X)^{-1}\mathbf X^{\trans},\qquad
 \bm v_T=\frac1T\{\one-\mathbf F_c\mathbf V_{f,T}^{-1}\bar{\bm f}\},\\
 \ahat&=\aalpha+\mathbf E\bm v_T,\qquad
 \SSS=\frac1T\mathbf E\mathbf M_X\mathbf E^{\trans},\qquad
 \|\bm v_T\|^2=\frac{d_T}{T}.
 \end{split}
\end{equation}
The projection eliminates $\aalpha$ and $\mathbf B$ from $\SSS$, while the factor design enters the alpha estimation error through $\bm v_T$.

\subsection{Regularity conditions}
The sampling and local-alternative framework follows \citet{CKMS2025}. Let $\tau_{r,N}$ be the eigenvalues of $\Sig$ and $H_N=N^{-1}\sum_{r=1}^N\delta_{\tau_{r,N}}$. For probability distributions $F_1,F_2$, define
\[
 W_1(F_1,F_2)=\sup_{\operatorname{Lip}(f)\le1}
 \left|\int f(x)\,dF_1(x)-\int f(x)\,dF_2(x)\right|,
\]
where the supremum is over real functions with Lipschitz constant at most one.

\begin{assumption}[Sampling and spectral stability]\label{ass:base}
The $u_{it}$ are i.i.d., with $\E u_{it}=0$, $\E u_{it}^2=1$ and $\E u_{it}^4<\infty$. Further,
\begin{enumerate}[label=(\roman*)]
\item $c_T=N/T\to c\in(0,\infty)$ and $\sqrt T\,|c_T-c|\to0$.
\item $\Sig\succeq0$, $\|\Sig\|_{\mathrm{op}}\le C_\Sigma$ and $\sqrt N\,W_1(H_N,H)\to0$, where $H$ has compact support in $(0,\infty)$.
\item The factor covariance is positive definite and uniformly bounded.
\end{enumerate}
\end{assumption}
Assumption~\ref{ass:base} permits dense residual covariance matrices and non-Gaussian innovations with finite fourth moments. Spectral stability concerns the empirical eigenvalue distribution; a uniform positive lower bound on individual eigenvalues is not required.

The regression design is also required to be bounded, fixed-dimensional and asymptotically full rank, with an estimable intercept. These conditions are stated in Appendix~\ref{app:design} and correspond to C5--C6 of \citet{LAP2020}, the regression theorem used by \citet{CKMS2025}. They supplement the factor-covariance condition in Assumption~\ref{ass:base}. Let $d=\lim_T d_T\in[1,\infty)$.

For the local alternatives, let $\mathbf\Omega_N$ be a deterministic nonnegative definite matrix and $\bm\gamma_N=(\gamma_1,\ldots,\gamma_N)^{\trans}$ an independent draw, independent of the innovations and design. For $s>0$, $m_\Omega(s)$ denotes a finite deterministic limit. The local signal is
\begin{equation}\label{eq:localalpha}
 \aalpha_T=T^{-1/4}N^{-1/2}\mathbf\Omega_N^{1/2}\bm\gamma_N.
\end{equation}
\begin{assumption}[Local alternatives]\label{ass:alt}
In \eqref{eq:localalpha}, the $\gamma_i$ are i.i.d., with $\E\gamma_i=0$, $\E\gamma_i^2=1$ and $\E\gamma_i^4<\infty$. For each $s>0$,
\begin{equation}\label{eq:originalsignal}
 \sqrt T\,\aalpha_T^{\trans}(\Sig+s\I_N)^{-1}\aalpha_T\pto m_\Omega(s),
 \qquad
 \frac1N\tr\{\mathbf\Omega_N(\Sig+s\I_N)^{-1}\}\to m_\Omega(s).
\end{equation}
\end{assumption}
The two limits in \eqref{eq:originalsignal} specify concentration of the signal quadratic form around its deterministic trace limit, as in equation (34) of \citet{CKMS2025}. Appendix~\ref{app:signal} derives the corresponding weighted spectral limit. The null corresponds to $\mathbf\Omega_N=\bm0$.

\subsection{Single-ridge asymptotic results}
For $\lambda>0$, let $\QQ(\lambda)=(\SSS+\lambda\I_N)^{-1}$. The regularized GRS statistic of \citet{CKMS2025} is
\begin{equation}\label{eq:W}
 W_T(\lambda)=\frac{\ahat^{\trans}\QQ(\lambda)\ahat}{d_T}.
\end{equation}
Define $\widehat m_T(\lambda)=N^{-1}\tr\QQ(\lambda)$ and
\begin{equation}\label{eq:feasible}
 \begin{split}
 \widehat\xi_T(\lambda)&=\{1-c_T+c_T\lambda\widehat m_T(\lambda)\}^{-1}-1,\\
 \widehat\Delta_T(\lambda)&=2\{\widehat\xi_T(\lambda)+\lambda\widehat\xi_T'(\lambda)\}
                         \{1+\widehat\xi_T(\lambda)\}^2,\\
 Z_T(\lambda)&=\frac{\sqrt T\{W_T(\lambda)-\widehat\xi_T(\lambda)\}}
                         {\sqrt{\widehat\Delta_T(\lambda)}}.
 \end{split}
\end{equation}
Primes denote derivatives with respect to the positive ridge argument $\lambda$.

For population quantities, let $b_\lambda$ be the unique solution in $(0,1)$ of
\begin{equation*}
 b_\lambda+c\int\frac{b_\lambda\tau}{\lambda+b_\lambda\tau}\,dH(\tau)=1.
\end{equation*}
Set
\begin{equation*}
 \xi_\lambda=b_\lambda^{-1}-1,
 \qquad
 \Delta_\lambda=2(\xi_\lambda+\lambda\xi_\lambda')(1+\xi_\lambda)^2,
 \qquad
 \calH_\Omega(\lambda)=\frac{1}{d b_\lambda}
                        m_\Omega(\lambda/b_\lambda).
\end{equation*}
The effective ridge is $\lambda/b_\lambda$, and the local signal functional is scaled by $(d b_\lambda)^{-1}$.

Write $\delta_\Omega(\lambda)=\calH_\Omega(\lambda)/\sqrt{\Delta_\lambda}$.
\begin{proposition}[Single-ridge RHT; \citealp{CKMS2025}]\label{prop:marginal}
For each fixed $\lambda>0$, in the maintained regression setting,
\begin{equation}\label{eq:marginalalt}
 Z_T(\lambda)\dto
 \begin{cases}
 \mathcal{N}(0,1),& H_0\text{ and Assumption~\ref{ass:base}},\\
 \mathcal{N}\{\delta_\Omega(\lambda),1\},&\text{Assumptions~\ref{ass:base}--\ref{ass:alt}}.
 \end{cases}
\end{equation}
\end{proposition}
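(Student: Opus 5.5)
Since the statement is attributed to \citet{CKMS2025}, the plan is to reduce it to the general linear hypothesis theory of \citet{LAP2020} and verify its hypotheses in the present parametrization. First I would use the representation \eqref{eq:projection}. Under the design conditions of Appendix~\ref{app:design}, $\mathbf M_X$ is a projection of rank $T-p-1$, so $\SSS$ equals $T^{-1}\Sig^{1/2}\mathbf U\mathbf M_X\mathbf U^{\trans}\Sig^{1/2}$ with $\mathbf U=(\bm u_1,\ldots,\bm u_T)$. The vector $\bm v_T$ is orthogonal to the columns of $\mathbf M_X$ because $\bm v_T$ lies in the column space of $\mathbf X$. Hence $\ahat-\aalpha=\mathbf E\bm v_T$ is uncorrelated with $\SSS$, and $\sqrt{T/d_T}\,\mathbf E\bm v_T$ behaves like $\Sig^{1/2}$ times a normalized isotropic vector. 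The quadratic form therefore decomposes as
\[
 d_T W_T(\lambda)=\bm v_T^{\trans}\mathbf E^{\trans}\QQ(\lambda)\mathbf E\bm v_T+2\aalpha^{\trans}\QQ(\lambda)\mathbf E\bm v_T+\aalpha^{\trans}\QQ(\lambda)\aalpha .
\]

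For the null part, conditionally on $\SSS$ and the design, I would apply a CLT for quadratic forms in i.i.d.\ entries (a martingale-difference argument with a fourth-moment condition). This gives a centering $T^{-1}d_T\tr\{\Sig\QQ(\lambda)\}$. The fluctuation has variance of order $T^{-1}\tr\{(\Sig\QQ)^2\}$ plus a fourth-cumulant term involving $\sum_i(\Sig^{1/2}\QQ\Sig^{1/2})_{ii}^2$, weighted by $\sum_t v_{tt}^4$. That weighted sum is $O(T^{-3})$ under bounded design, so the non-Gaussian term is negligible. Next I would use the deterministic equivalent $\QQ(\lambda)\asymp b_\lambda^{-1}(\Sig+\lambda b_\lambda^{-1}\I_N)^{-1}$ from Marchenko--Pastur theory. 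Since $\mathbf M_X$ has rank $T-p-1$, the fixed-rank correction only shifts $c_T$ by $O(T^{-1})$. With this equivalent, $c_T N^{-1}\tr\{\Sig\QQ(\lambda)\}$ is identified with $\widehat\xi_T(\lambda)$ through the identity
\[
 1-c_T+c_T\lambda\widehat m_T=\frac{1}{1+c_TN^{-1}\tr(\Sig\QQ)}+o_p(T^{-1/2}).
\]
The variance limit is identified with $\Delta_\lambda$ via the derivative formula $N^{-1}\tr\{(\Sig\QQ)^2\}\leftrightarrow\xi_\lambda+\lambda\xi'_\lambda$, multiplied by $(1+\xi_\lambda)^2$. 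Because $\widehat\xi_T'$ is a ratio of traces of $\QQ$ and $\QQ^2$, consistency of $\widehat\Delta_T$ follows from the same deterministic equivalents.

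\textbf{Main obstacle.} The hardest step is showing that $\sqrt T\{c_TN^{-1}\tr(\Sig\QQ)-\widehat\xi_T\}$ is $o_p(1)$. This requires the exact algebraic identity between $\tr\QQ$ and $\tr(\Sig\QQ)$ plus a resolvent bias bound of order $o(T^{-1/2})$. The conditions $\sqrt T|c_T-c|\to0$ and $\sqrt N W_1(H_N,H)\to0$ in Assumption~\ref{ass:base} are exactly what make the replacement of $c_T,H_N$ by $c,H$ cost only $o(T^{-1/2})$. I would take this step from Theorem~3 of \citet{LAP2020}, rechecking that the rank-$(p+1)$ projection and the normalization by $T$ rather than $T-p-1$ only add $O(T^{-1})$ terms.

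For the alternative, $\aalpha_T$ is independent of $\mathbf E$, so the cross term has conditional variance $O(T^{-1}\aalpha_T^{\trans}\QQ\Sig\QQ\aalpha_T)=O(T^{-3/2})$, and $\sqrt T$ times it is $o_p(1)$. For the signal term, $\bm\gamma_N$ is independent of $\SSS$. A quadratic-form concentration argument with $\E\gamma_i^4<\infty$ gives
\[
 \sqrt T\aalpha_T^{\trans}\QQ\aalpha_T=N^{-1}\tr\{\mathbf\Omega_N\QQ(\lambda)\}+o_p(1).
\]
The deterministic equivalent then turns this into $b_\lambda^{-1}N^{-1}\tr\{\mathbf\Omega_N(\Sig+\lambda b_\lambda^{-1}\I_N)^{-1}\}\to b_\lambda^{-1}m_\Omega(\lambda/b_\lambda)$ by Assumption~\ref{ass:alt}. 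Dividing by $d_T\to d$ and $\sqrt{\widehat\Delta_T}$ yields the shift $\delta_\Omega(\lambda)$, and Slutsky's lemma completes both cases of \eqref{eq:marginalalt}.
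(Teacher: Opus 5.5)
Your decomposition and the limits you aim for are correct. Under $H_0$ with Gaussian innovations, given $\SSS$ the statistic $W_T(\lambda)$ has mean $T^{-1}\tr(\Sig\QQ)$. Its conditional variance satisfies $T\Var\{W_T(\lambda)\mid\SSS\}=2c_TN^{-1}\tr\{(\Sig\QQ)^2\}\to2cA_{\lambda\lambda}/(1-cb_\lambda^2A_{\lambda\lambda})=\Delta_\lambda$, consistent with Lemma~\ref{lem:fixedpoint}, and the signal term tends to $\calH_\Omega(\lambda)$.

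The gap is the step where you condition on $\SSS$ and apply a CLT for quadratic forms with i.i.d.\ entries. That step needs $\ahat-\aalpha=\mathbf E\bm v_T$ to be independent of $\SSS=T^{-1}\mathbf E\mathbf M_X\mathbf E^{\trans}$. The orthogonality $\mathbf M_X\bm v_T=\bm0$ gives independence only for Gaussian innovations. Assumption~\ref{ass:base} allows any i.i.d.\ $u_{it}$ with a finite fourth moment. Both objects are then built from the same $\bm\varepsilon_t$, and they are not even uncorrelated when third moments are nonzero:
\[
 \Cov\{(\mathbf E\bm v_T)_a,(\SSS)_{bc}\}
 =\frac1T\Bigl\{\sum_{t=1}^Tv_{t,T}(\mathbf M_X)_{tt}\Bigr\}
   \E\{(\bm\varepsilon_1)_a(\bm\varepsilon_1)_b(\bm\varepsilon_1)_c\},
 \qquad
 \sum_{t=1}^Tv_{t,T}(\mathbf M_X)_{tt}=1+O(T^{-1}).
\]
Given $\SSS$, the entries of $\sum_tv_{t,T}\bm u_t$ are therefore in general neither independent nor centered, so no conditional quadratic-form CLT applies. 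The same conditioning underlies your variance bound for the cross term $2\aalpha_T^{\trans}\QQ\mathbf E\bm v_T$. As written, your argument covers only Gaussian innovations.

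A simple moment bound will not remove this coupling, because it has order-one effects and the Gaussian answer emerges only from its exact structure. Under $H_0$, split $\I_T-\mathbf M_X=\bm v_T\bm v_T^{\trans}/\|\bm v_T\|^2+\mathbf P_\perp$, where $\mathbf P_\perp$ is the rank-$p$ projection onto the part of the column space of $\mathbf X$ orthogonal to $\bm v_T$. Then $\widetilde{\mathbf S}_T:=\SSS+d_T^{-1}\ahat\ahat^{\trans}=T^{-1}\mathbf E(\I_T-\mathbf P_\perp)\mathbf E^{\trans}$. Sherman--Morrison gives exactly $W_T(\lambda)=\varrho_T/(1-\varrho_T)$ with $\varrho_T=d_T^{-1}\ahat^{\trans}(\widetilde{\mathbf S}_T+\lambda\I_N)^{-1}\ahat$. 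In a date-by-date expansion of $\varrho_T$, the same-date terms are shrunk by about $(1+\xi_\lambda)^{-1}$ relative to what an independent resolvent would give. For general innovations this must be tracked explicitly, for example with leave-one-date-out resolvents and a martingale CLT over $t$, or by a Lindeberg comparison with Gaussian innovations.

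The paper never conditions on $\SSS$. The exact identities $\mathbf Y\bm q_T=\sqrt{T/d_T}\,\ahat$ and $W_T(\lambda)=M_{n,T}\{(x+r_T\lambda)^{-1}\}$ place $W_T$ inside Theorems 2.1 and 2.3 of \citet{LAP2020} (Lemma~\ref{lem:external}), whose proofs handle this dependence under fourth moments. What remains is bookkeeping:
\begin{itemize}
\item verifying C1--C6 and the signal condition through $\widetilde\nu_T$ (Lemma~\ref{lem:signal});
\item removing the conditioning on $\bm\gamma_N$ (Lemma~\ref{lem:randomCLT});
\item passing between the $n$- and $T$-normalizations (Lemma~\ref{lem:normalization}).
\end{itemize}

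A smaller point concerns your signal step. The quadratic-form concentration you invoke for $\bm\gamma_N$ controls the conditional variance only through $\lambda^{-2}N^{-2}\tr(\mathbf\Omega_N^2)$. This need not vanish, because Assumption~\ref{ass:alt} admits $\|\mathbf\Omega_N\|_{\mathrm{op}}$ of order $N$. An isotropic deterministic equivalent for $\QQ$, or the paper's route through the realized measure $\widetilde\nu_T$, is needed there.
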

Thus the test $Z_T(\lambda)>z_{1-\eta}$ has limiting local power
\begin{equation}\label{eq:singlepower}
 \beta_\lambda(\eta,\Omega)=1-\Phi\{z_{1-\eta}-\delta_\Omega(\lambda)\}.
\end{equation}
Proposition~\ref{prop:marginal} restates Theorems 5--7 of \citet{CKMS2025} in the present notation.

\subsection{The power-optimal ridge depends on alpha}
Fix a compact candidate interval $\calL=[\lambda_-,\lambda_+]\subset(0,\infty)$. Since \eqref{eq:singlepower} is increasing in $\delta_\Omega(\lambda)$, the population power-optimal ridge is
\begin{equation}\label{eq:oracle}
 \lambda^\star_\Omega\in\argmax_{\lambda\in\calL}
       \frac{\calH_\Omega(\lambda)}{\sqrt{\Delta_\lambda}}
 =\argmax_{\lambda\in\calL}
       \frac{m_\Omega(\lambda/b_\lambda)}{b_\lambda\sqrt{\Delta_\lambda}}.
\end{equation}
The factor $d^{-1}$ affects local power but cancels from the maximization because it is independent of $\lambda$. The same criterion applies to a finite candidate set in place of $\calL$.

Let $\bm\psi_{r,N}$ be an orthonormal eigenbasis of $\Sig$. The finite-dimensional signal term corresponding to \eqref{eq:oracle} is
\begin{equation}\label{eq:alphadependence}
 \frac{\sqrt T}{d_T\sqrt{\Delta_\lambda}}
 \aalpha_T^{\trans}(\lambda\I_N+b_\lambda\Sig)^{-1}\aalpha_T
 =\frac{\sqrt T}{d_T\sqrt{\Delta_\lambda}}
   \sum_{r=1}^N\frac{(\bm\psi_{r,N}^{\trans}\aalpha_T)^2}
                        {\lambda+b_\lambda\tau_{r,N}}.
\end{equation}
Consequently, changing the allocation of alpha across the covariance eigenvectors can change the maximizing ridge even when the total alpha magnitude is held fixed. Under the random alternatives, this allocation is represented by $\mathbf\Omega_N$ and $m_\Omega$. Neither is known in the testing problem.

The objective is continuously differentiable and attains its maximum on $\calL$. Every interior maximizer satisfies
\[
 2\calH_\Omega'(\lambda)\Delta_\lambda
   -\calH_\Omega(\lambda)\Delta_\lambda'=0.
\]
This first-order condition depends on both the residual covariance spectrum and the unknown signal allocation.

\citet[Section~3.3 and Appendix F]{CKMS2025} estimate the signal component of the power criterion by a leave-one-out construction. Using the factor-adjusted observations $\widehat{\bm a}_t$ from \eqref{eq:cov}, define
\[
 \widehat{\mathbf\Psi}_T=\frac1T\sum_{t=1}^T
       \widehat{\bm a}_t\widehat{\bm a}_t^{\trans},\qquad
 \widehat{\aalpha}_{-t}=\ahat-\frac{\widehat{\bm a}_t}{T},\qquad
 \widehat{\mathbf\Psi}_{-t}=\widehat{\mathbf\Psi}_T-
       \frac{\widehat{\bm a}_t\widehat{\bm a}_t^{\trans}}{T},
\]
\[
 \widehat{\mathbf S}_{-t}=\widehat{\mathbf\Psi}_{-t}
       -\widehat{\aalpha}_{-t}\widehat{\aalpha}_{-t}^{\trans},\qquad
 \mathbf Q_{-t}(\lambda)=(\widehat{\mathbf S}_{-t}+\lambda\I_N)^{-1}.
\]
The leave-one-out signal estimate pairs the omitted observation with the ridge-weighted mean formed from the remaining observations:
\[
 \widehat H_T^{\mathrm{LOO}}(\lambda)=\frac1T\sum_{t=1}^T
       \widehat{\aalpha}_{-t}^{\trans}\mathbf Q_{-t}(\lambda)
       \widehat{\bm a}_t.
\]
These quantities retain the denominator $T$, and the factor loadings are estimated once from the full sample.

For computation, let $\widehat{\mathbf A}=(\widehat{\bm a}_1,\ldots,\widehat{\bm a}_T)$ and set
\[
 \mathbf A_T=T^{-1}\widehat{\mathbf A}^{\trans}\widehat{\mathbf A},\qquad
 \widehat s_T^{\mathrm{raw}}=\frac{\tr(\mathbf A_T)}{N}
   =\frac{\tr(\SSS)+\|\ahat\|^2}{N},\qquad
 \mathbf G_T(\lambda)=(\lambda\I_T+\mathbf A_T)^{-1}\mathbf A_T.
\]
Write $w_t(\lambda)=\{\mathbf G_T(\lambda)\}_{tt}$ and $\bar w_t(\lambda)=\{\mathbf G_T(\lambda)\one\}_t$. Replacing the leave-one-out inverse's denominator by its full-sample approximation, their closed-form algorithm uses
\[
 \widetilde H_T(\lambda)=
 \frac{\displaystyle\sum_{t=1}^T
       \frac{\bar w_t(\lambda)-w_t(\lambda)}{1-w_t(\lambda)}}
      {\displaystyle1-T^{-1}\sum_{t=1}^T\bar w_t(\lambda)},
 \qquad
 \widehat\Pi_T(\lambda)=
       \frac{\widetilde H_T(\lambda)}{\sqrt{\widehat\Delta_T(\lambda)}}.
\]
The full-sample denominator approximation and the common factor $1/T$ distinguish the preceding leave-one-out average from this computational score; the common factor does not affect the maximizer. The variance estimate $\widehat\Delta_T$ is defined in \eqref{eq:feasible}. Their candidate multipliers and selected penalty are
\begin{equation}\label{eq:loo_rule}
 \begin{aligned}
 \mathcal A_{\mathrm{CKMS}}&=\{10^{-5},10^{-3},10^{-2},10^{-1},1,10,20,100\},\\
 \widehat a_{\mathrm L}&\in\argmax_{a\in\mathcal A_{\mathrm{CKMS}}}
       \widehat\Pi_T(a\widehat s_T^{\mathrm{raw}}),\qquad
 \widehat\lambda_{\mathrm L,T}=\widehat a_{\mathrm L}\widehat s_T^{\mathrm{raw}}.
 \end{aligned}
\end{equation}
The selected-ridge test then compares $Z_T(\widehat\lambda_{\mathrm L,T})$ with $z_{1-\eta}$.

This rule estimates the power criterion and selects a ridge using the same observations employed for testing. Under $H_0$, the population signal criterion is zero at every ridge, so its maximization does not identify a unique population optimum. Normal calibration at each prespecified ridge does not justify normal calibration after selecting a ridge from the data. In Section~\ref{sec:simulation}, the LOO rule implemented with the common grid and degrees-of-freedom correction fails to control size in three of the four designs, despite the reasonably accurate sizes of the fixed-ridge tests. This limitation motivates combining evidence across a prespecified grid. We retain the recommended fixed multiplier $a=1$ as CKMS and report the adaptive LOO rule separately.

\subsection{Cauchy combination across ridge values}
Choose a fixed number $J$ of distinct positive ridge values $\lambda_1,\ldots,\lambda_J$ and fixed weights $w_j>0$ with $\sum_jw_j=1$. Define
\begin{equation}\label{eq:Cauchy}
 \begin{split}
 p_{j,T}&=1-\Phi\{Z_T(\lambda_j)\},\\
 C_T&=\sum_{j=1}^Jw_j\tan\{\pi(1/2-p_{j,T})\},\\
 p_{C,T}&=\frac12-\frac1\pi\arctan(C_T).
 \end{split}
\end{equation}
The analytic rule rejects when $p_{C,T}\le\eta$, equivalently when $C_T\ge\cot(\pi\eta)$ for $0<\eta<1/2$. Equal weights provide a simple default. The choice of grid and weights is made before examining the marginal $p$-values.

All components share the regression estimates in \eqref{eq:OLS}--\eqref{eq:cov}. Their marginal normal-reference $p$-values are obtained from \eqref{eq:W}--\eqref{eq:feasible} and combined through \eqref{eq:Cauchy}. The analytic Cauchy calibration requires no estimate of the cross-ridge covariance. If a marginal variance or an estimated ridge scale is zero, set all $Z_T(\lambda_j)=0$, $C_T=0$ and $p_{C,T}=1/2$; this event has probability tending to zero under the stated conditions.

\section{Joint asymptotic theory and size}\label{sec:theory}
\subsection{Cross-ridge covariance and the null distribution}
For $\lambda,\mu>0$, define the deterministic kernel
\begin{equation}\label{eq:K}
 K(\lambda,\mu)=
 \begin{cases}
 2(1+\xi_\lambda)(1+\xi_\mu)
 \displaystyle\frac{\mu\xi_\mu-\lambda\xi_\lambda}{\mu-\lambda},&\lambda\ne\mu,\\[7pt]
 \Delta_\lambda,&\lambda=\mu.
 \end{cases}
\end{equation}
Write $K_{ij}=K(\lambda_i,\lambda_j)$, $\mathbf K=(K_{ij})_{i,j=1}^J$ and $\Gamma_{ij}=K_{ij}/\sqrt{\Delta_{\lambda_i}\Delta_{\lambda_j}}$. Let $\Gam=(\Gamma_{ij})_{i,j=1}^J$.

\begin{theorem}[Joint null limit]\label{thm:null}
Under Assumption~\ref{ass:base}, in the maintained regression setting,
\begin{equation}\label{eq:jointnull}
 \bigl(Z_T(\lambda_1),\ldots,Z_T(\lambda_J)\bigr)^{\trans}
 \dto \mathcal{N}_J(\bm0,\Gam)\qquad\text{under }H_0.
\end{equation}
\end{theorem}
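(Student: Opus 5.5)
By the Cramér--Wold device, it suffices to show that for every fixed $\bm c=(c_1,\ldots,c_J)^{\trans}\in\R^J$ the scalar $\sum_jc_jZ_T(\lambda_j)$ converges in distribution to $\mathcal N(0,\bm c^{\trans}\Gam\bm c)$. Proposition~\ref{prop:marginal} already gives $\widehat\Delta_T(\lambda_j)\pto\Delta_{\lambda_j}>0$, so by Slutsky's lemma I can replace the random denominators by $\sqrt{\Delta_{\lambda_j}}$. The problem then becomes a joint central limit theorem for the centered numerators $\sqrt T\{W_T(\lambda_j)-\widehat\xi_T(\lambda_j)\}$, with limiting covariance matrix $\mathbf K$.

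\textbf{Step 1 (decoupling).} Under $H_0$, \eqref{eq:projection} gives $\ahat=\mathbf E\bm v_T$. Since $\bm v_T=\mathbf X(\mathbf X^{\trans}\mathbf X)^{-1}\bm e_1$ lies in the column space of $\mathbf X$, we have $\mathbf M_X\bm v_T=\bm 0$. Thus $\ahat$ depends on $\mathbf E$ only through a direction orthogonal to the one defining $\SSS$. Write $\bm x_T=\ahat/\sqrt{d_T}$; its coordinates are $\sum_t u$-weighted sums with weights $v_t/\|\bm v_T\|$ of order $T^{-1/2}$, because the design is bounded. I would then show that, conditionally on $\SSS$ up to a negligible rank-one perturbation (the one obtained by removing the $\bm v_T$-direction from $\mathbf E$), $\sqrt T\,\bm x_T$ behaves as $\Sig^{1/2}\bm g$ with $\bm g$ having independent standardized entries. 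Their fourth cumulants are of order $\sum_t v_t^4/\|\bm v_T\|^4=O(T^{-1})$, and this is the source of the Gaussian-type variance formula below.

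\textbf{Step 2 (centering).} Write
\[
 W_T(\lambda)-\widehat\xi_T(\lambda)=\Bigl\{\bm x_T^{\trans}\QQ(\lambda)\bm x_T-\tfrac1T\tr\bigl(\Sig\QQ(\lambda)\bigr)\Bigr\}+\Bigl\{\tfrac1T\tr\bigl(\Sig\QQ(\lambda)\bigr)-\widehat\xi_T(\lambda)\Bigr\}.
\]
For the second bracket I would invoke the argument behind Theorem 5 of \citet{CKMS2025}, applied at each of the finitely many $\lambda_j$. That argument rests on the identity $T^{-1}\tr\{\SSS\QQ(\lambda)\}=c_T\{1-\lambda\widehat m_T(\lambda)\}$ and the deterministic-equivalent expansion, and gives $\sqrt T\{T^{-1}\tr(\Sig\QQ)-\widehat\xi_T\}\pto0$. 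Since $J$ is fixed, these $o_p(1)$ remainders hold jointly. Consequently only the vector of centered quadratic forms
\[
 q_j=\sqrt T\Bigl\{\bm x_T^{\trans}\QQ(\lambda_j)\bm x_T-\tfrac1T\tr\bigl(\Sig\QQ(\lambda_j)\bigr)\Bigr\},\qquad j=1,\ldots,J,
\]
matters.

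\textbf{Step 3 (joint CLT and covariance).} Conditionally on $\SSS$, the combination $\sum_jc_jq_j$ is a centered quadratic form in $\bm g$ with matrix $\mathbf A=T^{-1/2}\sum_jc_j\Sig^{1/2}\QQ(\lambda_j)\Sig^{1/2}$. I would apply the martingale-difference CLT for quadratic forms used by \citet{LAP2020}; its Lindeberg condition holds because $\|\mathbf A\|_{\mathrm{op}}=O(T^{-1/2})$. The conditional covariance of $q_i$ and $q_j$ equals
\[
 \frac{2}{T}\tr\bigl\{\Sig\QQ(\lambda_i)\Sig\QQ(\lambda_j)\bigr\}+o_p(1),
\]
where the fourth-cumulant term is negligible by Step~1. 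To identify its limit, I would use the deterministic equivalent $\QQ(\lambda)\approx b_\lambda^{-1}(\Sig+\lambda b_\lambda^{-1}\I_N)^{-1}$, together with the derivative of $T^{-1}\tr\{\Sig\QQ(\lambda)\}\to\xi_\lambda$ and the resolvent identity $\QQ(\lambda)-\QQ(\mu)=(\mu-\lambda)\QQ(\lambda)\QQ(\mu)$. Because a two-resolvent trace is not simply a product of one-resolvent traces, I would instead obtain $\lim T^{-1}\tr\{\Sig\QQ(\lambda)\Sig\QQ(\mu)\}$ by differentiating the deterministic equivalent of $T^{-1}\tr\{\Sig(\SSS+\lambda\I_N+z\Sig)^{-1}\}$ in $z$ at $z=0$. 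The target is to show that this limit equals $(1+\xi_\lambda)(1+\xi_\mu)(\mu\xi_\mu-\lambda\xi_\lambda)/(\mu-\lambda)$. As $\mu\to\lambda$, this expression reduces to $\Delta_\lambda/2$, consistent with \eqref{eq:K}. The limit is deterministic, so the conditional CLT lifts to an unconditional one and yields $\mathcal N(0,\bm c^{\trans}\mathbf K\bm c)$ for the numerators. Dividing by $\sqrt{\Delta_{\lambda_i}\Delta_{\lambda_j}}$ then gives $\Gam$.

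\textbf{Main obstacle.} I expect the hardest step to be the cross-ridge trace limit in Step~3: verifying that the two-resolvent functional closes into exactly the difference-quotient form \eqref{eq:K}, which requires the rate control from $\sqrt N\,W_1(H_N,H)\to0$. A secondary difficulty is making the conditioning in Step~1 rigorous for non-Gaussian $u_{it}$. For that I would bound the effect of the rank-$(p+1)$ projection via Sherman--Morrison, which perturbs each trace by $O_p(1/T)$.
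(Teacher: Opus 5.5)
Your target limit is correct, but the argument has a genuine gap. Your Step~1 is exact only for Gaussian innovations, and the conditional CLT of Step~3 depends on it. Write $\mathbf U=(u_{it})$ with rows $\bm u_i^{\trans}$, so that $\bm g=\mathbf U\bm q_T$ and $g_i=\bm u_i^{\trans}\bm q_T$, while $\SSS$ is a function of $\mathbf U\mathbf M_X$. These are uncorrelated. Under Assumption~\ref{ass:base} they are not independent, and the dependence is not small.

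To see this, take $\Sig=\I_N$, which the assumption allows. Then $T(\SSS)_{ii}=\bm u_i^{\trans}\mathbf M_X\bm u_i$. Since $\one^{\trans}\bm v_T=1$ and $\max_t\{1-(\mathbf M_X)_{tt}\}\le C/T$,
\[
 \Cov\bigl\{g_i,T(\SSS)_{ii}\bigr\}
 =\E u_{11}^3\sum_{t=1}^T (\bm q_T)_t(\mathbf M_X)_{tt}
 =\E u_{11}^3\bigl\{\sqrt{T/d_T}+O(T^{-1/2})\bigr\},
\]
while $\Var\{T(\SSS)_{ii}\}\le(2+|\E u_{11}^4-3|)T$ by Lemma~\ref{lem:quadratic}. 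For skewed innovations each $g_i$ therefore keeps a correlation of order one with $(\SSS)_{ii}$. So $\E(g_i\mid\SSS)$ is of order one, and $\bm g$ is not conditionally standardized.

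The error this induces in $\E(q_j\mid\SSS)$ contains the term $T^{-1/2}\sum_i\{\QQ(\lambda_j)\}_{ii}\{\E(g_i^2\mid\SSS)-1\}$. This is a normalized sum of $N\asymp T$ centered order-one terms, so it is generically $O_p(1)$, not $o_p(1)$. Correspondingly, the conditional variance is not $2T^{-1}\tr\{\QQ(\lambda_j)^2\}$. The two defects cancel only after integrating over $\SSS$, which is why your target limit is still right. The dependence is coordinatewise and accumulates over $N$ coordinates; it is not a rank-$(p+1)$ effect, so Sherman--Morrison cannot remove it.

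The argument therefore has to be unconditional. One option is a martingale CLT over dates with leave-one-out resolvents. The paper instead uses the general linear spectral theorem of \citet{LAP2020} (Lemma~\ref{lem:external}), which needs only four moments. It exploits the linearity of $M_{n,T}(f)$ in $f$: the Cram\'er--Wold combination is itself $M_{n,T}(f_{\bm t})$ with $f_{\bm t}(x)=\sum_jt_j\Delta_{\lambda_j}^{-1/2}(x+\lambda_j)^{-1}$. One application of the scalar theorem then gives the joint limit. For Gaussian innovations your decoupling is exact, and your route would work once the cross-ridge limit below is supplied.

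Second, your mechanism for the cross-ridge limit does not reach $\lambda\ne\mu$. Differentiating $T^{-1}\tr\{\Sig(\SSS+\lambda\I_N+z\Sig)^{-1}\}$ at $z=0$ gives $-T^{-1}\tr\{\Sig\QQ(\lambda)\Sig\QQ(\lambda)\}$. That is a single-ridge quantity and only recovers $\Delta_\lambda$. The resolvent identity gives $\tr\{\Sig\QQ(\lambda)\QQ(\mu)\}$, without the second $\Sig$.

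The off-diagonal entries of $\mathbf K$, which are the new content of the theorem, need a two-resolvent deterministic equivalent. In the notation of Appendix~\ref{app:kernel}, this is $T^{-1}\tr\{\Sig\QQ(\lambda)\Sig\QQ(\mu)\}\to cA_{\lambda\mu}/(1-cb_\lambda b_\mu A_{\lambda\mu})$. You would have to derive it, for instance by leave-one-out expansions, and then convert it to \eqref{eq:K} through the fixed-point algebra of Lemma~\ref{lem:kernel}.

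The paper avoids this computation. The variance functional $\mathfrak V$ of \citet{LAP2020} is bilinear with the explicit kernel $\mathcal D(z_1,z_2)$. Applying the residue identity \eqref{eq:residueidentity} in each argument gives $\mathfrak V(f_\lambda,f_\mu)=2\mathcal D(-\lambda,-\mu)=K(\lambda,\mu)$ directly.

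Two minor points. Consistency of $\widehat\Delta_T(\lambda_j)$ is not contained in Proposition~\ref{prop:marginal} as stated and must be imported, as the paper does via \eqref{eq:externalestimation}. Also, the covariance limit needs only $H_N\Rightarrow H$, not the rate $\sqrt N\,W_1(H_N,H)\to0$.
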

The matrix $\Gam$ is positive definite on distinct ridge values, with $0<\Gamma_{ij}<1$ for $i\ne j$.
The shared alpha and covariance estimators induce dependence across ridge values. The joint limit follows from the general linear spectral theorem applied to finite linear combinations of resolvents, with the factor projection included in each combination.

\subsection{The joint law under local alternatives}
Write $\bm\delta_\Omega=(\delta_\Omega(\lambda_1),\ldots,\delta_\Omega(\lambda_J))^{\trans}$.
\begin{theorem}[Joint local-alternative limit]\label{thm:alternative}
Under Assumptions~\ref{ass:base}--\ref{ass:alt}, in the maintained regression setting,
\begin{equation}\label{eq:jointalt}
 \bigl(Z_T(\lambda_1),\ldots,Z_T(\lambda_J)\bigr)^{\trans}
 \dto \mathcal{N}_J(\bm\delta_\Omega,\Gam).
\end{equation}
\end{theorem}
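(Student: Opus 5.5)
The plan is to reduce the statement to Theorem~\ref{thm:null} plus a deterministic shift, via the decomposition $\ahat=\aalpha_T+\mathbf E\bm v_T$ from \eqref{eq:projection}. Since $\SSS=T^{-1}\mathbf E\mathbf M_X\mathbf E^{\trans}$ does not involve $\aalpha_T$, the matrices $\QQ(\lambda_j)$, the estimates $\widehat\xi_T(\lambda_j)$ and $\widehat\Delta_T(\lambda_j)$ have the same laws under the alternative as under $H_0$. Expanding the quadratic form gives
\[
 \sqrt T\,W_T(\lambda)=\frac{\sqrt T}{d_T}\Bigl\{\bm v_T^{\trans}\mathbf E^{\trans}\QQ(\lambda)\mathbf E\bm v_T
 +2\aalpha_T^{\trans}\QQ(\lambda)\mathbf E\bm v_T
 +\aalpha_T^{\trans}\QQ(\lambda)\aalpha_T\Bigr\}.
\]
The first term, centred by $\widehat\xi_T$ and scaled by $\widehat\Delta_T^{-1/2}$, is exactly the null vector. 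Theorem~\ref{thm:null} therefore gives its joint limit $\mathcal N_J(\bm0,\Gam)$. By Slutsky's lemma applied jointly over $j=1,\ldots,J$, it then suffices to show two things for each fixed $\lambda$: the cross term is $o_p(1)$, and the signal term converges in probability to $\calH_\Omega(\lambda)$. The deterministic shift $\bm\delta_\Omega$ then results from dividing by $\sqrt{\Delta_\lambda}$, since $\widehat\Delta_T(\lambda)\pto\Delta_\lambda$ (part of the proof of Proposition~\ref{prop:marginal}).

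\emph{Cross term.} Condition on $\bm\gamma_N$ and the design. Because $\bm\gamma_N$ is independent of $\mathbf E$ and centred, the cross term has conditional mean zero given $\mathbf E$. Hence its second moment is
\[
 4T d_T^{-2}\,\E\bigl[\bm v_T^{\trans}\mathbf E^{\trans}\QQ\,\E(\aalpha_T\aalpha_T^{\trans})\,\QQ\mathbf E\bm v_T\bigr],
 \qquad \E(\aalpha_T\aalpha_T^{\trans})=T^{-1/2}N^{-1}\mathbf\Omega_N .
\]
Using $\|\QQ\|_{\mathrm{op}}\le\lambda^{-1}$ and $\|\mathbf E\bm v_T\|^2=O_p(N/T)$, this is of order $T^{1/2}N^{-1}\|\mathbf\Omega_N\|_{\mathrm{op}}\cdot N/T$. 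That bound suffices only if $\|\mathbf\Omega_N\|_{\mathrm{op}}=o(\sqrt T)$, which need not hold. A sharper route uses the trace instead. The same second moment is bounded by
\[
 C\,T^{1/2}N^{-1}\,\E\tr\bigl(\mathbf\Omega_N\QQ\mathbf E\bm v_T\bm v_T^{\trans}\mathbf E^{\trans}\QQ\bigr).
\]
Write $\mathbf E\bm v_T$ as $\Sig^{1/2}\mathbf U\bm v_T$, a vector nearly independent of $\SSS$ (the $\mathbf M_X$ projection removes the $\one$ direction). Its conditional covariance is then approximately $(d_T/T)\Sig$, so the trace is $O\{T^{-1}\tr(\mathbf\Omega_N)\}$. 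Since $N^{-1}\tr\mathbf\Omega_N$ is bounded, via $m_\Omega(s)$ and $\|\Sig\|_{\mathrm{op}}\le C_\Sigma$, the whole bound is $O(T^{-1/2})\to0$.

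\emph{Signal term.} This is the main step. We need
\[
 \sqrt T\,\aalpha_T^{\trans}\QQ(\lambda)\aalpha_T\pto b_\lambda^{-1}m_\Omega(\lambda/b_\lambda).
\]
Conditionally on $\mathbf E$, the left side is $T^{-1/4}$-scaled quadratic form $N^{-1}\bm\gamma_N^{\trans}\mathbf\Omega_N^{1/2}\QQ\mathbf\Omega_N^{1/2}\bm\gamma_N\cdot\sqrt T\cdot T^{-1/2}$. By the i.i.d.\ fourth-moment structure of $\bm\gamma_N$, its conditional variance is $O(N^{-1})$, so it concentrates at $N^{-1}\tr\{\mathbf\Omega_N\QQ(\lambda)\}$.

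Next apply the anisotropic (deterministic-equivalent) local law for sample-covariance resolvents, as used in \citet{LAP2020} and \citet{CKMS2025}. For deterministic matrices of bounded normalized trace norm,
\[
 N^{-1}\tr\{\mathbf\Omega_N(\SSS+\lambda\I_N)^{-1}\}-N^{-1}\tr\{\mathbf\Omega_N(b_\lambda\Sig+\lambda\I_N)^{-1}\}\pto0,
\]
with the rank-$(p+1)$ projection $\mathbf M_X$ perturbing only $O(1/N)$. The right side equals $b_\lambda^{-1}N^{-1}\tr\{\mathbf\Omega_N(\Sig+(\lambda/b_\lambda)\I_N)^{-1}\}\to b_\lambda^{-1}m_\Omega(\lambda/b_\lambda)$ by Assumption~\ref{ass:alt}; this identification is the weighted spectral limit of Appendix~\ref{app:signal}, consistent with \eqref{eq:alphadependence}. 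Dividing by $d_T\to d$ yields $\calH_\Omega(\lambda)$.

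I expect the obstacle to be this deterministic equivalent for general $\mathbf\Omega_N$ that is not a function of $\Sig$. If a direct citation is unavailable, I would first reduce it to the single-ridge Proposition~\ref{prop:marginal}, whose alternative case already contains exactly this limit at each fixed $\lambda$. The signal-term convergence can then be imported marginally, with joint convergence following since all $J$ limits are constants.
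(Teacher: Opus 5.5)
Your reduction is sound and differs from the paper's route. Since $\SSS$, $\bm v_T$, $\widehat\xi_T$ and $\widehat\Delta_T$ do not involve $\aalpha_T$, the noise-only part of $\ahat=\aalpha_T+\mathbf E\bm v_T$ reproduces the null vector exactly. Theorem~\ref{thm:null} and a joint Slutsky argument then reduce the claim to two facts: the cross term tends to $0$, and the signal term tends to $\calH_\Omega(\lambda_j)$ in probability. The paper never splits the quadratic form. It applies Lemma~\ref{lem:external} to the full statistic through $f_{\bm t}$, treating $\aalpha_T$ as a fixed sequence after conditioning on $\bm\gamma_N$ (Lemmas~\ref{lem:signal} and \ref{lem:randomCLT}).

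The gap is in your signal step. Given $\mathbf E$, Lemma~\ref{lem:quadratic} bounds the conditional variance of $N^{-1}\bm\gamma_N^{\trans}\mathbf\Omega_N^{1/2}\QQ\mathbf\Omega_N^{1/2}\bm\gamma_N$ by $CN^{-2}\tr(\mathbf\Omega_N\QQ\mathbf\Omega_N\QQ)\le C\lambda^{-2}N^{-2}\tr(\mathbf\Omega_N^2)$. Assumption~\ref{ass:alt} does not control $\tr(\mathbf\Omega_N^2)$, and two examples show this. With Gaussian $\gamma_i$ and $K_N\to\infty$ coordinate spikes of size $N/K_N$, which is compatible with \eqref{eq:originalsignal}, the conditional variance is of order $K_N^{-1}$, not $N^{-1}$. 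With Rademacher $\gamma_i$, diagonal $\Sig$ and a single coordinate spike of size $N$, \eqref{eq:originalsignal} holds while $N^{-2}\tr(\mathbf\Omega_N^2)=1$, so the trace bound gives nothing. The assumption supplies concentration of the $\bm\gamma_N$-quadratic form only for the deterministic matrices $(\Sig+s\I_N)^{-1}$. Transferring it to the random $\QQ(\lambda)$ is not a moment computation in $\bm\gamma_N$ alone. Your fallback does not close this either: Proposition~\ref{prop:marginal} gives the law of the sum $Z_T(\lambda)$, which does not determine the probability limit of one of its dependent summands.

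The missing idea is to reverse the order of averaging, as the paper does. Condition on $\bm\gamma_N$, so that $\aalpha_T$ is a fixed vector independent of $\mathbf E$ with $\sqrt T\|\aalpha_T\|^2=O_p(1)$. An isotropic deterministic equivalent along that fixed direction then gives $\sqrt T\{\aalpha_T^{\trans}\QQ(\lambda)\aalpha_T-\aalpha_T^{\trans}(\lambda\I_N+b_\lambda\Sig)^{-1}\aalpha_T\}\pto0$. This is the mechanism behind Lemma~\ref{lem:external}, with the input \eqref{eq:deterministicaltinput} at $z=-\lambda$. Since $(\lambda\I_N+b_\lambda\Sig)^{-1}=b_\lambda^{-1}\{\Sig+(\lambda/b_\lambda)\I_N\}^{-1}$, the first limit in \eqref{eq:originalsignal} at $s=\lambda/b_\lambda$ finishes the step. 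This uses the assumption exactly where it applies. It also makes unnecessary the trace-level equivalent for arbitrary $\mathbf\Omega_N$ that you identified as the obstacle.

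A smaller gap is in the cross term. Your trace bound treats $\mathbf E\bm v_T$ as independent of $\SSS$ with covariance $(d_T/T)\Sig$, which is exact only for Gaussian innovations. Under Assumption~\ref{ass:base} the two are merely uncorrelated, because $\mathbf M_X\bm v_T=\bm0$. Your near-independence claim therefore needs a decoupling argument. The paper avoids this by leaving the cross term inside the spectral limit of the full quadratic form.
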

Local alternatives shift the means while leaving the first-order cross-ridge covariance unchanged. After factor-loadings estimation, a nonzero factor mean reduces each local noncentrality by the common factor $d^{-1}$ and leaves $\Gam$ unchanged.

Let $\mu_1=\int\tau\,dH(\tau)>0$ and fix distinct positive constants $a_1,\ldots,a_J$.
\begin{corollary}[A common estimated ridge scale]\label{cor:scale}
Theorems~\ref{thm:null}--\ref{thm:alternative} remain valid for
\[
 \widehat s_T^{\mathrm{raw}}=\frac{\tr(\SSS)+\|\ahat\|^2}{N},
 \qquad \lambda_{j,T}=a_j\widehat s_T^{\mathrm{raw}},
\]
with population ridge $\lambda_j=a_j\mu_1$.
\end{corollary}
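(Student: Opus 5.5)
The plan is to reduce the random-ridge case to the fixed-ridge Theorems~\ref{thm:null}--\ref{thm:alternative} by a stochastic equicontinuity argument in $\lambda$.

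\emph{Consistency of the scale.} I will first show that $\widehat s_T^{\mathrm{raw}}-\mu_1=O_p(T^{-1/2})$; consistency alone is not enough. From \eqref{eq:projection}, $\tr(\SSS)/N=(NT)^{-1}\tr(\mathbf E\mathbf M_X\mathbf E^{\trans})$. Its mean is $(1-(p+1)/T)\,N^{-1}\tr\Sig$, and a fourth-moment computation gives variance $O(N^{-1}T^{-1})$. By Assumption~\ref{ass:base}(ii), $N^{-1}\tr\Sig=\int\tau\,dH_N$, and this differs from $\mu_1$ by at most $W_1(H_N,H)=o(N^{-1/2})$, since $f(x)=x$ is $1$-Lipschitz. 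Next, $\|\ahat\|^2/N=\|\aalpha_T+\mathbf E\bm v_T\|^2/N$. Here $\E\|\mathbf E\bm v_T\|^2/N=N^{-1}\tr\Sig\,d_T/T=O(T^{-1})$. Under the local alternatives, $\|\aalpha_T\|^2/N=T^{-1/2}N^{-2}\bm\gamma_N^{\trans}\mathbf\Omega_N\bm\gamma_N=O_p(T^{-1/2}N^{-1})$, using $N^{-1}\tr\mathbf\Omega_N=O(1)$ (which follows from the trace limit in \eqref{eq:originalsignal} together with boundedness of $\Sig$). Hence $\sqrt T(\widehat s_T^{\mathrm{raw}}-\mu_1)=O_p(1)$, so $\lambda_{j,T}=\lambda_j+O_p(T^{-1/2})$ with $\lambda_j=a_j\mu_1>0$.

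\emph{Smoothness in the ridge.} Fix a compact neighbourhood $\mathcal U_j=[\lambda_j-\epsilon,\lambda_j+\epsilon]\subset(0,\infty)$. On $\mathcal U_j$ the maps $\lambda\mapsto W_T(\lambda)$ and $\lambda\mapsto\widehat\xi_T(\lambda)$ are smooth. The resolvent identity gives $\partial_\lambda\QQ=-\QQ^2$, with $\|\QQ(\lambda)\|_{\mathrm{op}}\le\lambda^{-1}$. I will write
\[
\sqrt T\{W_T(\lambda)-\widehat\xi_T(\lambda)\}=\sqrt T\{W_T(\lambda)-\xi_T^{\circ}(\lambda)\}-\sqrt T\{\widehat\xi_T(\lambda)-\xi_T^{\circ}(\lambda)\},
\]
where $\xi_T^{\circ}$ is the deterministic finite-$T$ centering used in the proof of Theorem~\ref{thm:null}. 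Both bracketed processes are $O_p(1)$ at each fixed ridge and jointly Gaussian over finite grids.

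The key step is tightness in $C(\mathcal U_j)$. For this I will bound the derivative processes. The quantity $\sqrt T\,\partial_\lambda\{W_T-\xi_T^\circ\}$ is again a centred quadratic form, now in $\QQ^2$, and $\QQ(\lambda)^2=-\partial_\lambda\QQ(\lambda)$ is a limit of resolvent differences. The same linear spectral statistic CLT therefore gives $\sup_{\lambda\in\mathcal U_j}|\cdot|=O_p(1)$; pointwise this is immediate, and uniformity follows from a bound on the second derivative and a finite net. A parallel bound holds for $\sqrt T\,\partial_\lambda\{\widehat\xi_T-\xi_T^\circ\}$ and for $\widehat\Delta_T'$.

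Consequently
\[
\sqrt T\{W_T(\lambda_{j,T})-\widehat\xi_T(\lambda_{j,T})\}-\sqrt T\{W_T(\lambda_j)-\widehat\xi_T(\lambda_j)\}=O_p(1)\cdot|\lambda_{j,T}-\lambda_j|=o_p(1).
\]
This step needs only the centred difference, and it is where the $T^{-1/2}$ rate is used with room to spare: $O_p(T^{-1/2})$ suffices. The uncentred terms $\sqrt T\,\xi^\circ_T(\lambda)$ cancel between $W_T$ and $\widehat\xi_T$ for every $\lambda$, so the random shift in $\lambda$ does not produce a drift. By continuity of $\widehat\Delta_T$ and positivity of $\Delta_{\lambda_j}$, also $\widehat\Delta_T(\lambda_{j,T})\pto\Delta_{\lambda_j}$.

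\emph{Conclusion.} It follows that $Z_T(\lambda_{j,T})=Z_T(\lambda_j)+o_p(1)$ jointly in $j$. Slutsky's lemma combined with Theorems~\ref{thm:null}--\ref{thm:alternative} at $\lambda_j=a_j\mu_1$ then yields the stated limits, with $\Gam$ and $\bm\delta_\Omega$ evaluated at the population ridges. The ridges $\lambda_j$ are distinct because the $a_j$ are. Under alternatives the signal term $\sqrt T\,\ahat^{\trans}\QQ(\lambda)\ahat$ has a derivative of order $O_p(1)$ uniformly on $\mathcal U_j$, so it is covered by the same argument.

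I expect the main obstacle to be the uniform derivative bound, that is, tightness of the centred process in $\lambda$. I would handle it with a Kolmogorov-type moment criterion, $\E|X_T(\lambda)-X_T(\mu)|^2\le C|\lambda-\mu|^2$, computed from the quadratic-form variance formulas applied to $\QQ(\lambda)-\QQ(\mu)=(\mu-\lambda)\QQ(\lambda)\QQ(\mu)$.
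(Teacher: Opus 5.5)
Your outline --- a rate for $\widehat s_T^{\mathrm{raw}}$, an expansion of $Z_T$ in the ridge, then Slutsky with Theorems~\ref{thm:null}--\ref{thm:alternative} at $\lambda_j=a_j\mu_1$ --- matches the paper's. The difference is how you balance the two factors, and the step you call the main obstacle is not closed by either argument you offer.

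\textbf{The gap.} Write $X_T(\lambda)=\sqrt T\{W_T(\lambda)-\widehat\xi_T(\lambda)\}$. You need $\sup_{\lambda\in\mathcal U_j}|X_T'(\lambda)|=O_p(1)$ on a \emph{fixed} neighbourhood.
\begin{itemize}
\item A finite net of fixed mesh needs an $O_p(1)$ bound for $X_T''$, uniform on $\mathcal U_j$. That is the same kind of statement one derivative higher, so the argument regresses. The bound you can actually prove, from $\|\QQ(\lambda)\|_{\mathrm{op}}\le\lambda^{-1}$ and the rational form \eqref{eq:marginalmoments}, is $\sup_{\mathcal U_j}|X_T''|=O_p(\sqrt T)$. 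That forces a mesh of order $T^{-1/2}$, and pointwise $O_p(1)$ bounds at a growing number of points do not give an $O_p(1)$ maximum.
\item Lemma~\ref{lem:external} is a distributional limit for one fixed $f$. It gives neither tightness in $\lambda$ nor moment bounds.
\item The Kolmogorov criterion needs $\E|X_T(\lambda)-X_T(\mu)|^2\le C|\lambda-\mu|^2$ uniformly in $T$. Lemma~\ref{lem:quadratic} only covers deterministic matrices, whereas $\QQ(\lambda)\QQ(\mu)$ is random and, without Gaussianity, dependent on $\ahat$. Even in the Gaussian case, where conditioning on $\SSS$ is legitimate, you would need uniform second-moment bounds on increments of $\sqrt T\{T^{-1}\tr(\Sig\QQ(\lambda))-\widehat\xi_T(\lambda)\}$. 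Nothing in the paper supplies these.
\item Your treatment of the alternative also fails as stated. $\sqrt T\,\ahat^{\trans}\QQ(\lambda)^2\ahat$ is of order $\sqrt T$ under either hypothesis. The cross term $\sqrt T\,\aalpha_T^{\trans}\QQ(\lambda)^2\mathbf E\bm v_T$ is only $O_p(T^{1/4})$ by Cauchy--Schwarz. So a separate bound on ``the signal term'' does not dispatch the alternative.
\end{itemize}

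\textbf{The fix is already in your first step.} Your own bounds are: conditional variance of $\tr(\SSS)/N$ of order $(NT)^{-1}$, bias $O(T^{-1})+W_1(H_N,H)$ with $\sqrt N\,W_1(H_N,H)\to0$, and $\|\ahat\|^2/N=O_p(T^{-1})$. Together they give $\sqrt T\max_j|\lambda_{j,T}-\lambda_j|=O_p(T^{-1/2})+O\{\sqrt T\,W_1(H_N,H)\}=o_p(1)$, not merely $O_p(1)$.

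The paper pairs exactly this rate with the crude bound $\sup_{[\ell,u]}|Z_T'(\lambda)|\le C\sqrt T(1+\|\ahat\|^2)$. That bound holds on the event where $\inf_{[\ell,u]}\{1-h_T(\lambda)\}\ge b_\ell/2$ (by monotonicity of $1-h_T$) and $\widehat\Delta_T$ is uniformly positive (by a finite mesh). The mean-value product is then $o_p(1)$, and no derivative process needs to be controlled.

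If you prefer to keep only the $O_p(T^{-1/2})$ rate, localize instead of seeking uniformity. A second-order expansion at the fixed point gives $X_T(\lambda_{j,T})-X_T(\lambda_j)=X_T'(\lambda_j)(\lambda_{j,T}-\lambda_j)+O_p(\sqrt T)(\lambda_{j,T}-\lambda_j)^2$, so you need $X_T'$ only at $\lambda_j$. There $X_T'(\lambda_j)=O_p(1)$ follows from two facts:
\begin{itemize}
\item the scalar limit for $f(x)=-(x+\lambda_j)^{-2}$, centred at $\xi'_{\lambda_j}$ and extended to the alternative by Lemma~\ref{lem:randomCLT};
\item $\sqrt n\{\overline\xi_T'(\lambda_j)-\xi'_{\lambda_j}\}\pto0$, obtained from \eqref{eq:externalestimation} by solving $\Delta_\lambda=2(\xi_\lambda+\lambda\xi'_\lambda)(1+\xi_\lambda)^2$ for $\xi'_\lambda$, and transferred to $\widehat\xi_T'$ as in Lemma~\ref{lem:normalization}.
\end{itemize}
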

The scale $\widehat s_T^{\mathrm{raw}}=(NT)^{-1}\sum_{t=1}^T\|\widehat{\bm a}_t\|^2$ is the average second moment of the factor-adjusted returns.

\subsection{The distribution of the Cauchy statistic}
For $z\in\R$, let $g(z)=\tan\{\pi(\Phi(z)-1/2)\}$. For a vector $\bm\delta\in\R^J$ and a correlation matrix $\mathbf R$, define the random variable
\begin{equation*}
 \mathscr C(\bm\delta,\mathbf R)=\sum_{j=1}^Jw_jg(G_j+\delta_j),
 \qquad \bm G\sim \mathcal{N}_J(\bm0,\mathbf R).
\end{equation*}
The definition also applies when $\mathbf R$ is singular.

\begin{corollary}[Limit of the combined statistic]\label{cor:Climit}
Under the respective conditions of Theorems~\ref{thm:null} and \ref{thm:alternative},
\[
 C_T\dto\mathscr C(\bm0,\Gam)\quad(H_0),\qquad
 C_T\dto\mathscr C(\bm\delta_\Omega,\Gam)\quad(H_1).
\]
\end{corollary}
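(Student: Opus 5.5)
This is a continuous-mapping argument applied to the joint limits in Theorems~\ref{thm:null} and \ref{thm:alternative}. The first step is to rewrite $C_T$ as a fixed function of the vector $\bm Z_T=(Z_T(\lambda_1),\ldots,Z_T(\lambda_J))^{\trans}$. Since $p_{j,T}=1-\Phi\{Z_T(\lambda_j)\}$, we have $\pi(1/2-p_{j,T})=\pi\{\Phi(Z_T(\lambda_j))-1/2\}$. Hence $\tan\{\pi(1/2-p_{j,T})\}=g\{Z_T(\lambda_j)\}$ and
\[
 C_T=h(\bm Z_T),\qquad h(\bm z)=\sum_{j=1}^Jw_jg(z_j).
\]
Because $\Phi(z)\in(0,1)$ for every real $z$, the argument $\pi\{\Phi(z)-1/2\}$ lies in $(-\pi/2,\pi/2)$. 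On that interval $\tan$ is continuous, so $g$ is continuous and finite on all of $\R$, and $h:\R^J\to\R$ is continuous everywhere. No discontinuity set needs to be controlled, and this remains true when the limiting correlation matrix is singular.

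The degenerate convention must be handled, since it sets $C_T=0$ when a marginal variance or ridge scale estimate vanishes. Let $\mathcal E_T$ denote this event. By the remark after \eqref{eq:Cauchy}, $\PP(\mathcal E_T)\to0$ under the stated conditions. Write $\widetilde C_T=h(\bm Z_T)$ on $\mathcal E_T^c$ and define it arbitrarily on $\mathcal E_T$. Then $C_T-\widetilde C_T$ is nonzero only on $\mathcal E_T$, so it converges to zero in probability. Slutsky's lemma therefore reduces the claim to the limit of $h(\bm Z_T)$ with the $Z_T$ defined through \eqref{eq:feasible}.

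Under $H_0$ and Assumption~\ref{ass:base}, Theorem~\ref{thm:null} gives $\bm Z_T\dto\bm G\sim\mathcal N_J(\bm0,\Gam)$. The continuous mapping theorem then yields $C_T\dto h(\bm G)=\mathscr C(\bm0,\Gam)$.

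Under Assumptions~\ref{ass:base}--\ref{ass:alt}, Theorem~\ref{thm:alternative} gives $\bm Z_T\dto\mathcal N_J(\bm\delta_\Omega,\Gam)$. This limit has the same law as $\bm G+\bm\delta_\Omega$ with $\bm G\sim\mathcal N_J(\bm0,\Gam)$. Hence $C_T\dto\sum_jw_jg(G_j+\delta_{\Omega,j})=\mathscr C(\bm\delta_\Omega,\Gam)$.

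$\Gam$ is a valid correlation matrix, with unit diagonal since $K(\lambda,\lambda)=\Delta_\lambda$, so $\mathscr C$ is well defined. For the variant in Corollary~\ref{cor:scale}, the same joint limits hold, and the identical argument applies verbatim.

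The only genuinely delicate point is making sure the joint, rather than marginal, convergence is used, since $h$ couples all coordinates. The theorems supply this directly, so no real obstacle remains beyond the bookkeeping of the degenerate event.
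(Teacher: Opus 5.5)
Your proof is correct, but it takes a more elementary route than the paper. You note that $C_T=\sum_j w_j g\{Z_T(\lambda_j)\}$ with $g$ continuous on $\R$. You then apply the continuous mapping theorem directly to the joint limits of Theorems~\ref{thm:null}--\ref{thm:alternative}.

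The paper instead goes through Lemma~\ref{lem:transform}, in three steps:
\begin{enumerate}[label=(\alph*)]
\item It shows that the bounded-Lipschitz distance $\mathfrak d_T(\Omega)$ between $\bm Z_T$ and $\bm Y_\Omega$ tends to zero.
\item It clips each coordinate to $[-M,M]$, so that the Cauchy map becomes Lipschitz with constant $L_M$, and smooths the indicator over a window of width $h$.
\item It uses the density bound $L_J$ from Lemma~\ref{lem:continuity} to obtain the explicit Kolmogorov-distance budget $\sup_x|\PP(C_T\le x)-\PP\{\mathscr C(\bm\delta_\Omega,\Gam)\le x\}|\le\mathfrak e_T^C(\Omega)$.
\end{enumerate}

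For the statement as posed, your argument suffices and needs no density bound. Combined with continuity of the limit law (Lemma~\ref{lem:continuity}) and P\'olya's theorem, it also recovers uniform CDF convergence and hence the power formula \eqref{eq:Cpower}.

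What the paper's heavier route buys is a quantitative error term. The remark after Theorem~\ref{thm:size} concerns a simultaneous sequence $\eta_T\downarrow0$, which requires $\mathfrak e_T^C(0)/\eta_T\to0$. That needs a rate, and the continuous mapping theorem does not provide one.

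A minor point: your Slutsky step for the degenerate event is unnecessary. The convention sets every $Z_T(\lambda_j)=0$, and $g(0)=0$, so $C_T=h(\bm Z_T)$ holds identically.
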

The limits have continuous distribution functions. At a fixed $\eta\in(0,1/2)$, the limiting local rejection probability is
\begin{equation}\label{eq:Cpower}
 \beta_C(\eta,\Omega)=\PP\{\mathscr C(\bm\delta_\Omega,\Gam)>\cot(\pi\eta)\}.
\end{equation}
The components of $\bm\delta_\Omega$ are nonnegative and $g$ is increasing. Under a common Gaussian coupling, the limiting statistic under the alternative therefore dominates its null counterpart pointwise.

\subsection{Size of the analytic Cauchy rule}
The analytic rule in \eqref{eq:Cauchy} satisfies the following relative size guarantee under cross-ridge dependence.
\begin{theorem}[Relative small-level size validity]\label{thm:size}
Under the conditions of Theorem~\ref{thm:null},
\begin{equation}\label{eq:size}
 \lim_{\eta\downarrow0}\lim_{T\to\infty}
 \frac{\PP_0(p_{C,T}\le\eta)}{\eta}=1.
\end{equation}
\end{theorem}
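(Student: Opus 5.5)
The proof has two steps: pass to the limit in $T$ for each fixed $\eta$, then analyse the tail of the limiting Cauchy statistic as $\eta\downarrow0$.

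\emph{Step 1 (limit in $T$).} Fix $\eta\in(0,1/2)$. Then $\{p_{C,T}\le\eta\}=\{C_T\ge\cot(\pi\eta)\}$. Corollary~\ref{cor:Climit} gives $C_T\dto\mathscr C(\bm0,\Gam)$ under $H_0$, and this limit has a continuous distribution function. The portmanteau theorem therefore yields
\[
\lim_{T\to\infty}\PP_0(p_{C,T}\le\eta)=\PP\{\mathscr C(\bm0,\Gam)\ge\cot(\pi\eta)\}.
\]
The degenerate event in which all components are set to zero has vanishing probability and does not affect this limit. It remains to show that $\PP\{\mathscr C(\bm0,\Gam)>x\}/\PP\{\mathcal C>x\}\to1$ as $x=\cot(\pi\eta)\to\infty$, where $\mathcal C$ is standard Cauchy and $\PP(\mathcal C>\cot(\pi\eta))=\eta$.

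\emph{Step 2 (tail of the limit).} Each $g(G_j)=\tan\{\pi(\Phi(G_j)-1/2)\}$ is exactly standard Cauchy, since $\Phi(G_j)$ is uniform. The pair $(G_i,G_j)$ is bivariate normal with correlation $\Gamma_{ij}<1$ for $i\ne j$, as noted after Theorem~\ref{thm:null}. This is precisely the setting of Theorem~1 of \citet{LiuXie2020}: a weighted sum of Cauchy transforms of a Gaussian vector with pairwise correlations strictly below one, with fixed $J$ and fixed positive weights summing to one. That theorem states that the tail is asymptotically standard Cauchy. I would invoke it directly, or reprove it along the following lines.

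\begin{enumerate}
\item Use the tail bound $\PP(g(G_j)>x)\sim 1/(\pi x)$.
\item Show that the pairwise joint exceedances are negligible: $\PP\{g(G_i)>\epsilon x,\ g(G_j)>\epsilon x\}=o(1/x)$. Since $g(z)\approx \{\pi z\phi(z)\}^{-1}\cdot z^2$ for large $z$, the event $g(G)>y$ corresponds to $G>z_y$ with $z_y\sim\sqrt{2\log y}$. For correlation $\rho<1$, the bivariate Gaussian joint tail at level $z_y$ is of order $\exp\{-z_y^2/(1+\rho)\}$, which is $o(e^{-z_y^2/2})=o(1/y)$.
\item Combine these in the standard sum-of-heavy-tails argument. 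For the upper bound, $\{\sum_j w_jg(G_j)>x\}$ is contained in $\bigcup_j\{w_jg(G_j)>(1-\epsilon)x\}$ together with events requiring two large components, or one large component compensated by a negative one. For the lower bound, use $\{w_jg(G_j)>(1+\epsilon)x\}$ while all other components exceed $-\epsilon x/J$.
\item Sum the resulting probabilities: $\sum_j w_j/(\pi x)=1/(\pi x)$. Let $\epsilon\downarrow0$.
\end{enumerate}

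\emph{Main obstacle.} The delicate part is controlling the left tails in the upper bound. The Cauchy variables have heavy negative tails, so the event that one component is very large and positive while another is very large and negative is not trivially excluded. It has to be bounded by the same Gaussian joint-tail computation, now with $G_i$ large positive and $G_j$ large negative. For $\rho>-1$ this is again a Gaussian joint tail of order $o(1/x)$, and here $\Gamma_{ij}>0$ makes it even easier.

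Note that the theorem cannot be proved merely by checking that $\Gam$ is nonsingular. What is needed is the strict inequality $\Gamma_{ij}<1$ for distinct ridges, which follows from the positive-definiteness claim for $\Gam$ stated after Theorem~\ref{thm:null}. Combining Steps 1 and 2 gives \eqref{eq:size}.
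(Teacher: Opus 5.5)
Your proposal is correct and follows essentially the paper's route. You take the fixed-$\eta$ limit in $T$ through the continuous limit law of $C_T$ (the paper uses the uniform bound of Lemma~\ref{lem:transform}), and you then derive the standard Cauchy right tail of $\mathscr C(\bm0,\Gam)$ from exact Cauchy marginals, Gaussian joint-exceedance bounds over all four sign pairs, and a one-large-component sandwich, as in Lemma~\ref{lem:tail}.

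The paper uses the threshold $v=x^{(3+\rho_*)/4}$ in place of your $\epsilon x$, which gives the explicit remainder \eqref{eq:sizebound} rather than only the qualitative limit \eqref{eq:size}. Your self-contained reproof is safer than citing \citet{LiuXie2020} directly, since their theorem concerns two-sided $p$-values. Also compare the joint tail with $1/y$ directly, because $e^{-z_y^2/2}\asymp z_y/y$; the polynomial gain $y^{-2/(1+\rho)}$, up to logarithms, settles this.
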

More precisely, for $J\ge2$ define $\rho_* =\max_{i\ne j}\Gamma_{ij}<1$ and $\kappa=(1-\rho_*)/4$. For some $C<\infty$ and all sufficiently small $\eta>0$,
\begin{equation}\label{eq:sizebound}
 \left|\lim_{T\to\infty}\frac{\PP_0(p_{C,T}\le\eta)}{\eta}-1\right|
 \le C\eta^\kappa\{1+|\log\eta|\}^2.
\end{equation}
For $J=1$, the inner limit is one at every fixed $\eta\in(0,1)$.
Equation~\eqref{eq:size} is a sequential asymptotic result: $N,T\to\infty$ at a fixed significance level, followed by $\eta\downarrow0$. A simultaneous sequence $\eta_T\downarrow0$ additionally requires control of the Gaussian approximation error relative to $\eta_T$. 

\section{Simulation studies}\label{sec:simulation}
%The simulations use the Gaussian single-factor design of \citet{CKMS2025}, with $p=1$, $\mathbf B=\bm\beta$ and $\Sig=\diag(\tau_1,\ldots,\tau_N)$. The components are generated independently as

The simulations follow the single-factor Gaussian design of
\citet{CKMS2025}, with $p=1$, $\mathbf B=\bm\beta$, and
$\Sig=\diag(\tau_1,\ldots,\tau_N)$. 
The components are generated independently as
\[
 f_t\sim \mathcal{N}(0.07,0.14^2),\qquad
 \bm u_t\sim \mathcal{N}_N(\bm0,\I_N),\qquad
 \tau_i\sim\operatorname{Unif}[1,2],\qquad
 \beta_i\sim\operatorname{Unif}[-3,3].
\]
The factors and innovations are i.i.d. over time. The eigenvalues and loadings are redrawn across replications and held fixed within each replication.

Under $H_0$, we set $\aalpha=\bm0$. 
 Under $H_1$, we draw
$\gamma_i\stackrel{\mathrm{i.i.d.}}{\sim}\mathcal{N}(0,1)$ and define $\widetilde\alpha_i=T^{-1/4}N^{-1/2}(1+\tau_i/2)^{1/2}\gamma_i$.
This construction yields the allocation
$\mathbf\Omega_N=\I_N+\Sig/2$ in \eqref{eq:localalpha}.
%For signal strength $q>0$, use
For a signal strength parameter $q>0$, we set
\begin{equation}\label{eq:sim_signal}
 \aalpha(q)=
 \left\{\frac{q d_T}{T\widetilde{\aalpha}^{\trans}
                          \Sig^{-1}\widetilde{\aalpha}}\right\}^{1/2}
          \widetilde{\aalpha},
 \qquad
 \frac{T\aalpha(q)^{\trans}\Sig^{-1}\aalpha(q)}{d_T}=q.
\end{equation}
The designs summarized in Table~\ref{tab:sim_design} use $2{,}000$
replications under the null and $1{,}000$ replications for each
alternative configuration, with $\eta=0.05$. All methods are evaluated
on the same simulated samples within each replication.

\begin{table}[htbp]
	\centering
	\caption{Simulation designs and theoretically optimal ridge penalties.}
	\label{tab:sim_design}
	\begin{tabular}{rrrc}
		\toprule
		$T$ & $N$ & $\lambda_{\mathrm F,T}^{\star}$ & Signal strength $q$\\
		\midrule
		100 & 200 & 52.1848  & $10,20,\ldots,90,110$\\
		100 & 500 & 131.6813 & $16,32,\ldots,160$\\
		200 & 200 & 25.6823  & $10,20,\ldots,90,110$\\
		200 & 500 & 65.4254  & $20,35,\ldots,140,160$\\
		\bottomrule
	\end{tabular}
\end{table}

All methods use the least-squares estimates and covariance $\SSS$ defined in 
 Section~\ref{sec:method}, with the marginal degrees-of-freedom correction $n=T-2$ detailed in Appendix~\ref{app:sim_df}. Following Corollary~\ref{cor:scale}, we set
\begin{equation}\label{eq:sim_scale}
 \widehat s_T^{\mathrm{raw}}=\frac{\tr(\SSS)+\|\ahat\|^2}{N},
 \qquad \lambda_{j,T}=a_j\widehat s_T^{\mathrm{raw}},
 \qquad \mathcal A=\{0.1,1,10,20,100\}.
\end{equation}
CKMS denotes the recommended fixed 
multiplier specification with $a=1$,
whereas the proposed Cauchy procedure combines the ridge-specific \(p\)-values corresponding to the five multipliers in \(\mathcal A\) using equal weights and the analytic calibration in \eqref{eq:Cauchy}.

LOO denotes the power-directed leave-one-out selector of \citet[Section~3.3]{CKMS2025}, evaluated over the same set $\mathcal A$. 
It selects a ridge separately in each sample by maximizing the estimated signal criterion divided by its estimated standard deviation, and applies the normal critical value to the selected ridge statistic. Both selection and testing use the common degrees-of-freedom correction. 
Appendix~\ref{app:sim_loo} states the implemented criterion. %Thus 
CKMS is the fixed-ridge recommendation used in their main simulations, while LOO represents their feasible data-driven selection
rule under the present grid and normalization.

Two benchmarks assess the power attainable with additional information about the alternative. The \emph{empirical best lambda} selects, for each $(T,N,q)$, the %single
multiplier in $\mathcal A$ giving the highest rejection frequency across the alternative replications. The \emph{formula optimum} maximizes the population criterion \eqref{eq:oracle} under $H=\operatorname{Unif}[1,2]$ and $\mathbf\Omega_N=\I_N+\Sig/2$. %Its penalties,
The resulting ridge penalties,
reported in Table~\ref{tab:sim_design}, are fixed for 
each design; 
Appendix~\ref{app:sim_formula} provides the corresponding calculation
and covariance-scale conversion.

Table~\ref{tab:sim_size} reports empirical size at the nominal \(5\%\) level. The CKMS benchmark, Cauchy aggregation, and the formula optimum all exhibit reasonable size control, with only mild over-rejection. The individual prespecified ridge tests show similar behavior across the candidate multipliers. In contrast, LOO is substantially oversized in three of the four designs. The pointwise size behavior of the fixed-ridge tests is therefore not preserved by the data-dependent LOO selection step. Cauchy aggregation maintains rejection frequencies comparable to the fixed-ridge procedures while combining evidence over the full grid. The empirical-grid oracle is not reported under the null because its multiplier is defined using rejection frequencies under the alternative.

\begin{table}[htbp]
\centering
\caption{Empirical size (\%) at the nominal $5\%$ level based on $2{,}000$ null replications.}
\label{tab:sim_size}
\begingroup\normalsize
\begin{tabular}{lrrrr}
\toprule
 & \multicolumn{2}{c}{$T=100$} & \multicolumn{2}{c}{$T=200$}\\
\cmidrule(lr){2-3}\cmidrule(lr){4-5}
 & $N=200$ & $N=500$ & $N=200$ & $N=500$\\
\midrule
\multicolumn{5}{l}{\emph{Panel A: Fixed ridge multipliers}}\\
$a=0.1$ & 6.30 & 5.75 & 6.60 & 6.65\\
$a=1$   & 5.95 & 5.95 & 6.40 & 6.20\\
$a=10$  & 6.20 & 5.70 & 5.75 & 6.05\\
$a=20$  & 5.90 & 5.45 & 5.65 & 6.00\\
$a=100$ & 5.80 & 5.30 & 5.80 & 6.05\\
\midrule
\multicolumn{5}{l}{\emph{Panel B: Testing procedures}}\\
CKMS ($a=1$) & 5.95 & 5.95 & 6.40 & 6.20\\
LOO & 9.30 & 5.75 & 10.70 & 9.00\\
Cauchy & 6.30 & 6.00 & 6.25 & 6.35\\
Formula optimum & 6.00 & 5.30 & 5.65 & 6.00\\
\bottomrule
\end{tabular}
\endgroup
\end{table}

Figures~\ref{fig:sim_power_T100_N200}--\ref{fig:sim_power_T200_N500} report power against the signal strength $q$ in \eqref{eq:sim_signal}, using the same simulated samples for all procedures. Cauchy aggregation has higher power than fixed-ridge CKMS throughout all four designs. The gains are most pronounced at intermediate signals, where improved sensitivity is particularly useful for detecting pricing errors that are not already large enough for all procedures to reject. This advantage is accompanied by size behavior comparable to CKMS, supporting aggregation as an effective way to improve power over the recommended fixed-ridge specification.

Cauchy aggregation also closely tracks both signal-informed benchmarks across the designs and signal strengths. 
The empirical-grid oracle selects its multiplier using rejection frequencies across alternative replications, whereas the formula optimum uses the generating signal allocation. In contrast, Cauchy aggregation uses only the observed sample and a prespecified ridge grid. It therefore attains power close to these benchmarks without knowledge of the direction of the pricing errors or estimation of the power-maximizing penalty. In these designs, aggregation thereby reduces the power loss associated with committing to a single ridge while retaining size behavior comparable to the prespecified ridge tests.

LOO is oversized in three of the four designs, so its higher rejection frequencies cannot be interpreted as power gains at a common size. In the remaining design, its size is satisfactory but its power is lower than that of the other procedures.
\FloatBarrier

\begin{figure}[htbp]
	\centering
	
	\begin{subfigure}[t]{0.48\textwidth}
		\centering
		\includegraphics[width=\textwidth]
		{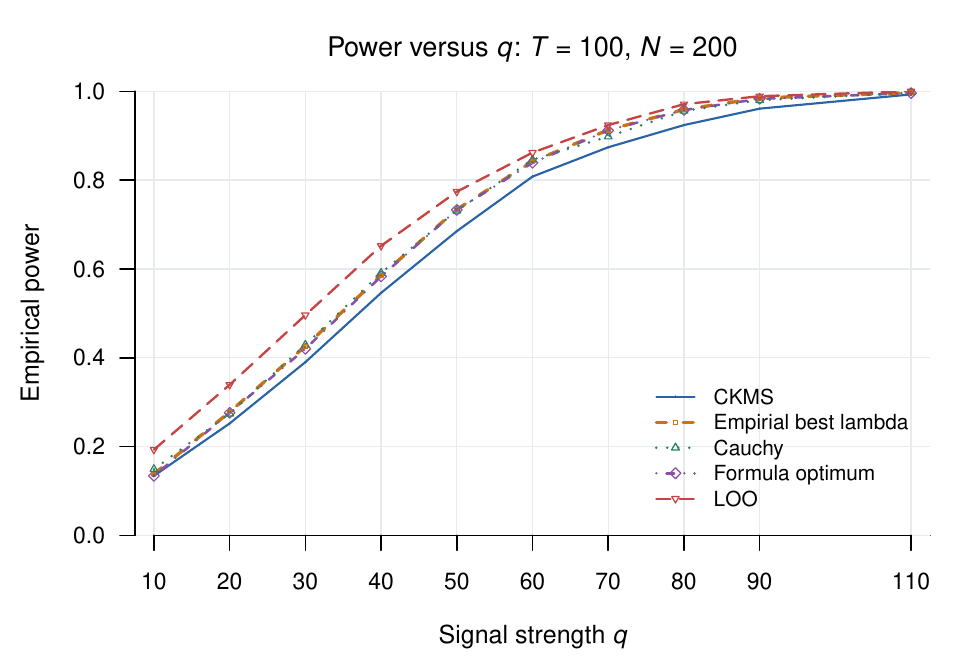}
		\caption{$T=100$, $N=200$.}
		\label{fig:sim_power_T100_N200}
	\end{subfigure}
	\hfill
	\begin{subfigure}[t]{0.48\textwidth}
		\centering
		\includegraphics[width=\textwidth]
		{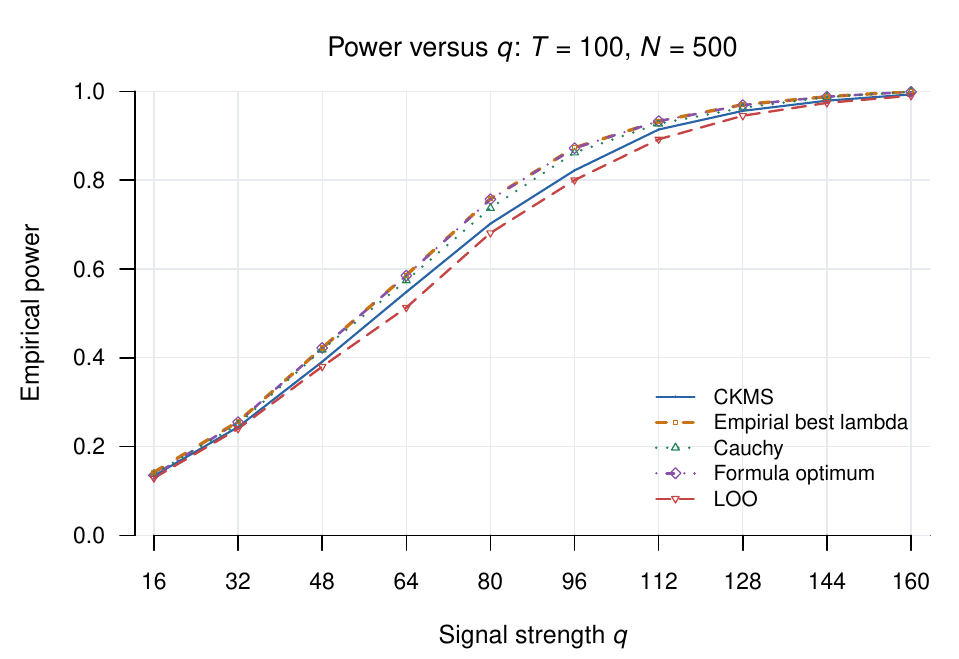}
		\caption{$T=100$, $N=500$.}
		\label{fig:sim_power_T100_N500}
	\end{subfigure}
	
	\vspace{0.8em}
	
	\begin{subfigure}[t]{0.48\textwidth}
		\centering
		\includegraphics[width=\textwidth]
		{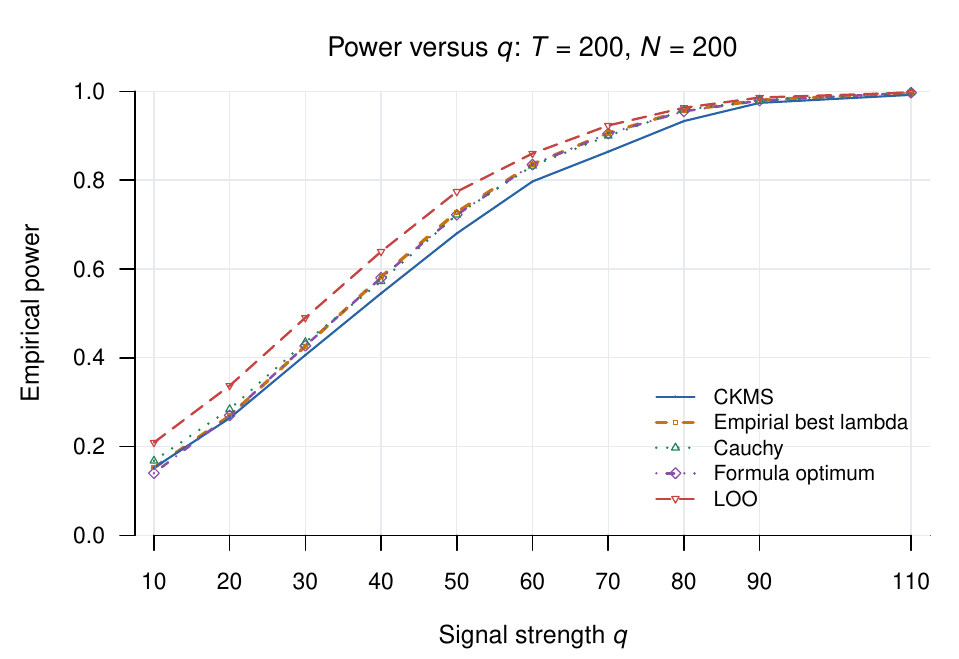}
		\caption{$T=200$, $N=200$.}
		\label{fig:sim_power_T200_N200}
	\end{subfigure}
	\hfill
	\begin{subfigure}[t]{0.48\textwidth}
		\centering
		\includegraphics[width=\textwidth]
		{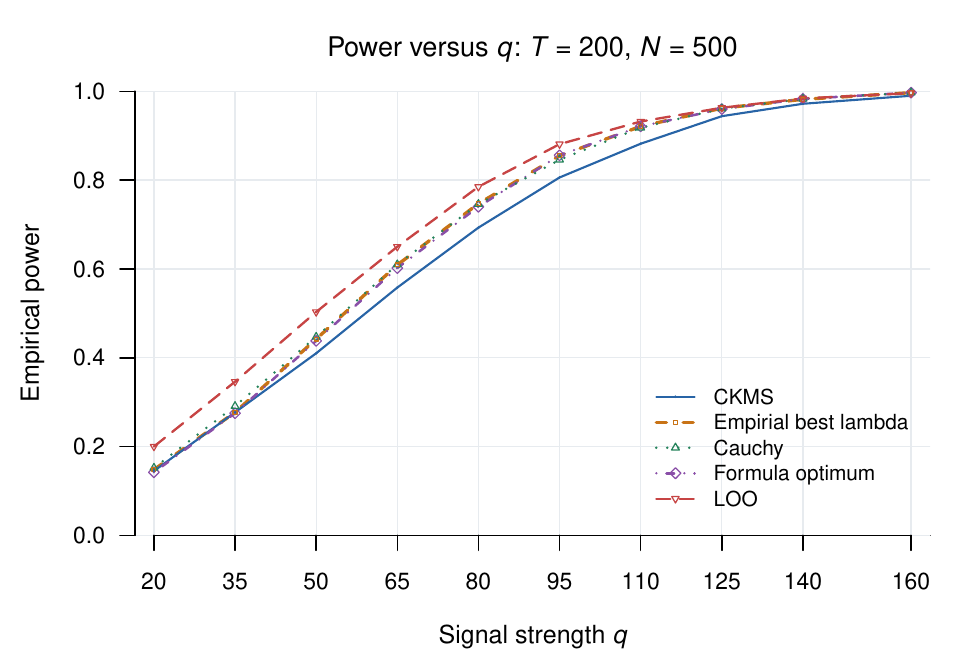}
		\caption{$T=200$, $N=500$.}
		\label{fig:sim_power_T200_N500}
	\end{subfigure}
	
	\caption{Empirical power against the signal strength $q$
		at the nominal level $\eta=0.05$ across the four simulation designs.}
	\label{fig:sim_power}
\end{figure}

\FloatBarrier
\section{Conclusion}\label{sec:conclusion}
This paper develops a Cauchy combination of regularized alpha tests over a finite ridge grid. Aggregation accommodates uncertainty about the pricing-error allocation that determines the power-optimal ridge. The joint null and local-alternative limits account for factor-loadings estimation and yield an explicit cross-ridge covariance. The analytic calibration has relative size validity at small significance levels and requires no estimation of the cross-ridge correlation matrix. In the Gaussian simulations, aggregation improves on the recommended fixed-ridge CKMS specification and attains power close to the signal-informed benchmarks. Extensions to serially dependent innovations and growing ridge grids warrant further study.
\FloatBarrier

\par\medskip
\noindent\textit{Funding.}
Ping Zhao was supported by the National Natural Science Foundation of China
(Grant No.~12601527), and Dan Zhuang was supported by the National Natural
Science Foundation of China (Grant No.~12401380).

\par\medskip
\noindent\textit{Disclosure statement.}
The authors report no competing interests to declare.

\par\medskip
\noindent\textit{Data availability.}
The data in this study are generated by Monte Carlo simulation. The accompanying
source archive includes the rejection-frequency summaries underlying the tables
and figures.

\par\medskip
\noindent\textit{Declaration of generative AI use.}
OpenAI ChatGPT (GPT-6) was used for manuscript organization, language editing, and
\LaTeX{} formatting. All AI-assisted content was reviewed and verified by the
authors, who reviewed the applicable terms of use and take full responsibility
for the originality, accuracy, integrity, and final content of the manuscript.

\begingroup
\doublespacing
\setlength{\bibsep}{0pt}
\bibliographystyle{apalike}
\bibliography{references}
\endgroup
\clearpage
\appendix
\section{Preliminaries for the joint ridge limits}\label{app:prelim}
\subsection{Regression design and factor adjustment}\label{app:design}
The regression-design conditions C5--C6 of \citet{LAP2020}, used in Appendix D of \citet{CKMS2025}, take the form
\begin{equation}\label{eq:designscope}
 p\text{ fixed},\qquad \sup_{T,t}\|\bm x_t\|<\infty,\qquad
 \mathbf D_T:=T^{-1}\mathbf X^{\trans}\mathbf X\longrightarrow\mathbf D\succ0,
 \qquad \bm x_t=(1,\bm f_t^{\trans})^{\trans}.
\end{equation}
The intercept is estimable since $\bm e_1^{\trans}\mathbf D^{-1}\bm e_1>0$.

For random factors, the following calculations are conditional on an independent design in \eqref{eq:designscope}. The innovation law is unchanged by that conditioning.

\begin{lemma}[Projection identities]\label{lem:projection}
Equations~\eqref{eq:projection} hold, and
\begin{equation*}
 \max_t|v_{t,T}|\le C/T,\qquad
 \max_t\{1-(\mathbf M_X)_{tt}\}\le C/T,\qquad
 \Cov(\ahat-\aalpha\mid\mathbf X)=\frac{d_T}{T}\Sig.
\end{equation*}
\end{lemma}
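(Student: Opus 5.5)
The plan is to verify the algebraic identities in \eqref{eq:projection} directly from the least-squares formulas \eqref{eq:OLS}, and then read off the three bounds from the design condition \eqref{eq:designscope}.

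\emph{Step 1: the alpha estimator.} Substitute $\bm y_t=\aalpha+\mathbf B\bm f_t+\bm\varepsilon_t$ into \eqref{eq:OLS}. The intercept cancels under centering, so
\[
 \widehat{\mathbf B}=\mathbf B+T^{-1}\mathbf E\mathbf F_c\mathbf V_{f,T}^{-1}.
\]
Then $\ahat=\aalpha+\bar{\bm\varepsilon}-(\widehat{\mathbf B}-\mathbf B)\bar{\bm f}=\aalpha+\mathbf E\bm v_T$, with $\bm v_T$ as in \eqref{eq:projection}. Using $\one^{\trans}\mathbf F_c=\bm0$ and $\mathbf F_c^{\trans}\mathbf F_c=T\mathbf V_{f,T}$, the cross term vanishes and
\[
 \|\bm v_T\|^2=T^{-2}\{T+T\bar{\bm f}^{\trans}\mathbf V_{f,T}^{-1}\bar{\bm f}\}=d_T/T.
\]
Equivalently, $\bm v_T^{\trans}=\bm e_1^{\trans}(\mathbf X^{\trans}\mathbf X)^{-1}\mathbf X^{\trans}$, the intercept row of the OLS projection. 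This follows from the block inverse of $\mathbf X^{\trans}\mathbf X$ with Schur complement $T\mathbf V_{f,T}$, and I will record it for later use.

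\emph{Step 2: the covariance.} The residuals $\widehat{\bm a}_t-\ahat=\bm y_t-\ahat-\widehat{\mathbf B}\bm f_t$ are the full OLS residuals. In matrix form they are $\mathbf Y\mathbf M_X=\mathbf E\mathbf M_X$, since $\mathbf M_X$ annihilates $\one$ and $\mathbf F$, which removes both $\aalpha$ and $\mathbf B$. Because $\mathbf M_X$ is symmetric and idempotent, $\SSS=T^{-1}\mathbf E\mathbf M_X\mathbf E^{\trans}$.

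\emph{Step 3: the bounds.} Write $\mathbf D_T=T^{-1}\mathbf X^{\trans}\mathbf X$. The design condition \eqref{eq:designscope} gives $\mathbf D_T\to\mathbf D\succ0$, so for $T$ large $\|\mathbf D_T^{-1}\|_{\mathrm{op}}\le C$; the finitely many remaining $T$ are absorbed into the constant. Then:
\begin{itemize}[label={}]
\item $|v_{t,T}|=T^{-1}|\bm e_1^{\trans}\mathbf D_T^{-1}\bm x_t|\le C\sup_t\|\bm x_t\|/T$.
\item $1-(\mathbf M_X)_{tt}=T^{-1}\bm x_t^{\trans}\mathbf D_T^{-1}\bm x_t\le C/T$.
\end{itemize}
For the covariance, condition on $\mathbf X$, which is independent of the innovations. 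Since $\bm\varepsilon_t=\Sig^{1/2}\bm u_t$ are independent over $t$ with covariance $\Sig$,
\[
 \Cov(\mathbf E\bm v_T\mid\mathbf X)=\sum_t v_{t,T}^2\Sig=\|\bm v_T\|^2\Sig=(d_T/T)\Sig.
\]

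The only point needing care is the uniformity of the constant $C$ in Step 3. The limit $\mathbf D_T\to\mathbf D\succ0$ bounds $\|\mathbf D_T^{-1}\|_{\mathrm{op}}$ only for $T$ large, so the statement should be read as holding for all sufficiently large $T$, or with $C$ enlarged to cover the finitely many early $T$, where $\mathbf D_T$ is assumed invertible. Everything else is direct algebra.
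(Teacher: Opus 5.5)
Your proof is correct and follows essentially the same route as the paper: identify $\bm v_T$ as the intercept row $\mathbf X(\mathbf X^{\trans}\mathbf X)^{-1}\bm e_1$ via the block inverse with Schur complement $T\mathbf V_{f,T}$, use the fact that $\mathbf M_X$ annihilates $\one$ and $\mathbf F$ to get $\SSS=T^{-1}\mathbf E\mathbf M_X\mathbf E^{\trans}$, and read off the $C/T$ bounds from $\|\mathbf D_T^{-1}\|_{\mathrm{op}}\le C$ for large $T$ together with $\sup_t\|\bm x_t\|<\infty$. The only cosmetic difference is how you obtain $\ahat=\aalpha+\mathbf E\bm v_T$ and $\|\bm v_T\|^2=d_T/T$: you substitute directly and use $\one^{\trans}\mathbf F_c=\bm0$, whereas the paper reads them off the normal equations and $\bm e_1^{\trans}(\mathbf X^{\trans}\mathbf X)^{-1}\bm e_1$.
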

\begin{proof}
Let $\mathbf X_c=(\one,\mathbf F_c)$ and $\mathbf A_f=\left(\begin{smallmatrix}1&\bar{\bm f}^{\trans}\\ \bm0&\I_p\end{smallmatrix}\right)$. Centering gives
\[
 \mathbf X=\mathbf X_c\mathbf A_f,\qquad
 \mathbf X_c^{\trans}\mathbf X_c=
 T\begin{pmatrix}1&\bm0^{\trans}\\\bm0&\mathbf V_{f,T}\end{pmatrix},\qquad
 \mathbf A_f^{-1}=\begin{pmatrix}1&-\bar{\bm f}^{\trans}\\\bm0&\I_p\end{pmatrix}.
\]
Consequently,
\[
 (\mathbf X^{\trans}\mathbf X)^{-1}
 =\frac1T\begin{pmatrix}
 1+\bar{\bm f}^{\trans}\mathbf V_{f,T}^{-1}\bar{\bm f}
 &-\bar{\bm f}^{\trans}\mathbf V_{f,T}^{-1}\\
 -\mathbf V_{f,T}^{-1}\bar{\bm f}&\mathbf V_{f,T}^{-1}
 \end{pmatrix},
 \qquad
 \mathbf X(\mathbf X^{\trans}\mathbf X)^{-1}\bm e_1=\bm v_T.
\]
The least-squares normal equations and the matrix model imply
\[
 \begin{aligned}
 \mathbf X^{\trans}\bm v_T&=\bm e_1,
 &\one^{\trans}\bm v_T&=1,
 &\mathbf F^{\trans}\bm v_T&=\bm0,\\
 \ahat&=\mathbf Y\bm v_T=\aalpha+\mathbf E\bm v_T,
 &\mathbf Y\mathbf M_X&=\mathbf E\mathbf M_X,
 &\mathbf M_X\bm v_T&=\bm0.
 \end{aligned}
\]
The projection has the explicit form
\[
 \I_T-\mathbf M_X
 =\frac{\one\one^{\trans}}T+
   \frac1T\mathbf F_c\mathbf V_{f,T}^{-1}\mathbf F_c^{\trans},\qquad
 \mathbf M_X^{\trans}=\mathbf M_X,\quad \mathbf M_X^2=\mathbf M_X.
\]
Hence
\[
 \SSS=T^{-1}(\mathbf Y\mathbf M_X)(\mathbf Y\mathbf M_X)^{\trans}
       =T^{-1}\mathbf E\mathbf M_X\mathbf E^{\trans},\qquad
 \|\bm v_T\|^2=\bm e_1^{\trans}(\mathbf X^{\trans}\mathbf X)^{-1}\bm e_1=d_T/T.
\]
Independence across dates now gives
\[
 \Cov(\mathbf E\bm v_T\mid\mathbf X)
 =\sum_{t,s}v_{t,T}v_{s,T}\E(\bm\varepsilon_t\bm\varepsilon_s^{\trans})
 =\sum_t v_{t,T}^2\Sig=\frac{d_T}{T}\Sig.
\]
Since $\|\mathbf D_T^{-1}\|_{\mathrm{op}}\le C$ for large $T$,
\[
 |v_{t,T}|\le T^{-1}\|\bm x_t\|\|\mathbf D_T^{-1}\|_{\mathrm{op}},\qquad
 1-(\mathbf M_X)_{tt}
 \le T^{-1}\|\bm x_t\|^2\|\mathbf D_T^{-1}\|_{\mathrm{op}}.
\]
Finally, the inverse identity yields the explicit design error
\begin{align*}
 d&=\bm e_1^{\trans}\mathbf D^{-1}\bm e_1,\qquad
 \mathbf D_T^{-1}-\mathbf D^{-1}
   =\mathbf D_T^{-1}(\mathbf D-\mathbf D_T)\mathbf D^{-1},\\
 |d_T-d|&\le C\|\mathbf D_T-\mathbf D\|_{\mathrm{op}}\longrightarrow0.
 \end{align*}
These covariance calculations use independence of the innovation vectors across dates, without requiring independence of $\ahat$ and $\SSS$.
\end{proof}

\subsection{Quadratic moments and the common ridge scale}
\begin{lemma}[Quadratic-form covariance]\label{lem:quadratic}
Let $\bm z$ have independent standardized entries with common fourth cumulant $\chi_4=\E z_1^4-3$. For deterministic symmetric matrices $\mathbf A,\mathbf B$,
\begin{equation}\label{eq:quadraticmoment}
 \Cov(\bm z^{\trans}\mathbf A\bm z,\bm z^{\trans}\mathbf B\bm z)
 =2\tr(\mathbf A\mathbf B)+\chi_4\sum_a A_{aa}B_{aa}.
\end{equation}
\end{lemma}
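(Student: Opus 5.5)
Expanding both quadratic forms in coordinates turns the covariance into a sum of fourth-moment terms, and independence with mean zero and unit variance leaves only a few index patterns, each of which I would evaluate separately. Write $\bm z^{\trans}\mathbf A\bm z=\sum_{a,b}A_{ab}z_az_b$ and $\bm z^{\trans}\mathbf B\bm z=\sum_{c,d}B_{cd}z_cz_d$. First, $\E(\bm z^{\trans}\mathbf A\bm z)=\sum_{a,b}A_{ab}\E(z_az_b)=\tr\mathbf A$, and likewise $\E(\bm z^{\trans}\mathbf B\bm z)=\tr\mathbf B$. Hence
\[
 \Cov(\bm z^{\trans}\mathbf A\bm z,\bm z^{\trans}\mathbf B\bm z)
 =\sum_{a,b,c,d}A_{ab}B_{cd}\,\E(z_az_bz_cz_d)-\tr\mathbf A\,\tr\mathbf B .
\]

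The main step is the fourth moment. By independence and $\E z_a=0$, the expectation $\E(z_az_bz_cz_d)$ vanishes unless the indices pair up. It equals $\E z_1^4=3+\chi_4$ when $a=b=c=d$, and it equals $1$ for each of the three pairings $\{a=b\ne c=d\}$, $\{a=c\ne b=d\}$, $\{a=d\ne b=c\}$. A compact way to record this is
\[
 \E(z_az_bz_cz_d)=\delta_{ab}\delta_{cd}+\delta_{ac}\delta_{bd}+\delta_{ad}\delta_{bc}+\chi_4\,\ind\{a=b=c=d\}.
\]
The only thing to check is the diagonal case: the three Kronecker products each equal one there, contributing $3$, and the $\chi_4$ term supplies the rest, giving $3+\chi_4$.

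Substituting this identity into the expanded sum gives four pieces. The first pairing contributes $\sum_{a,c}A_{aa}B_{cc}=\tr\mathbf A\,\tr\mathbf B$, which cancels the subtracted product of means. The second pairing contributes $\sum_{a,b}A_{ab}B_{ab}$, and the third contributes $\sum_{a,b}A_{ab}B_{ba}$. By symmetry of $\mathbf A$ and $\mathbf B$, both of these equal $\tr(\mathbf A\mathbf B)$. The cumulant term contributes $\chi_4\sum_aA_{aa}B_{aa}$. Adding the pieces yields \eqref{eq:quadraticmoment}.

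No step is a genuine obstacle. The moment conditions are used only through finiteness of $\E z_1^4$, which makes every term integrable; for finite dimension this also justifies exchanging the sum and the expectation. Symmetry of the matrices is used only to merge the last two pairings into $2\tr(\mathbf A\mathbf B)$.
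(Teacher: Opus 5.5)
Your proposal is correct and follows essentially the same route as the paper's proof: the same fourth-moment identity $\E(z_az_bz_cz_d)=\delta_{ab}\delta_{cd}+\delta_{ac}\delta_{bd}+\delta_{ad}\delta_{bc}+\chi_4\ind\{a=b=c=d\}$, then the four-term expansion, cancellation of $\tr\mathbf A\,\tr\mathbf B$, and symmetry to merge the two cross pairings into $2\tr(\mathbf A\mathbf B)$. One small precision: in the all-equal case you need $\E z_a^4=3+\chi_4$ for every $a$, which is exactly what the common fourth cumulant assumption provides.
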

\begin{proof}
Independence and $\E z_a=0$ give
\[
 \E(z_az_bz_cz_d)
 =\ind_{\{a=b\}}\ind_{\{c=d\}}
 +\ind_{\{a=c\}}\ind_{\{b=d\}}
 +\ind_{\{a=d\}}\ind_{\{b=c\}}
 +\chi_4\ind_{\{a=b=c=d\}}.
\]
Indeed, an index occurring once makes the expectation zero; paired indices contribute one, and four equal indices contribute $3+\chi_4$.
\begin{align*}
 \E(\bm z^{\trans}\mathbf A\bm z)&=\sum_aA_{aa}=\tr\mathbf A,\\
 \E\{(\bm z^{\trans}\mathbf A\bm z)(\bm z^{\trans}\mathbf B\bm z)\}
 &=\tr\mathbf A\tr\mathbf B
   +\sum_{a,b}A_{ab}B_{ab}+\sum_{a,b}A_{ab}B_{ba}
   +\chi_4\sum_aA_{aa}B_{aa}.
 \end{align*}
Subtracting the product of the means and using symmetry proves \eqref{eq:quadraticmoment}. In particular,
\[
 \Var(\bm z^{\trans}\mathbf A\bm z)
 \le(2+|\chi_4|)\tr(\mathbf A^2),
 \qquad \sum_aA_{aa}^2\le\sum_{a,b}A_{ab}^2=\tr(\mathbf A^2).\qedhere
\]
\end{proof}

Put $n=T-p-1$, $\widehat s_T=N^{-1}\tr\SSS$ and $\mu_1=\int\tau\,dH(\tau)$. The deterministic and centered random scale errors are
\[
 r_{s,T}=\frac n{TN}\tr\Sig-\mu_1,\qquad
 R_{s,T}=\widehat s_T-\frac n{TN}\tr\Sig.
\]
\begin{lemma}[Trace and alpha bounds]\label{lem:scale}
Under Assumption~\ref{ass:base},
\begin{equation*}
 \widehat s_T-\mu_1=R_{s,T}+r_{s,T},\quad
 \E(R_{s,T}^2\mid\mathbf X)\le\frac C{NT},\quad
 |r_{s,T}|\le C/T+W_1(H_N,H).
\end{equation*}
Under Assumption~\ref{ass:alt}, $\E\|\aalpha_T\|^2\le CT^{-1/2}$ and $\E\|\ahat\|^2\le C$.
\end{lemma}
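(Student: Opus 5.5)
The plan is to use the projection representation of Lemma~\ref{lem:projection}, write $\SSS=T^{-1}\mathbf E\mathbf M_X\mathbf E^{\trans}$ and $\ahat=\aalpha+\mathbf E\bm v_T$, and split the statement into its three claims: the decomposition $\widehat s_T-\mu_1=R_{s,T}+r_{s,T}$, the conditional second-moment bound on $R_{s,T}$, and the deterministic bound on $r_{s,T}$. The decomposition is immediate from the definitions, since the term $n(TN)^{-1}\tr\Sig$ is added and subtracted. All calculations are conditional on $\mathbf X$, and the innovation law is unaffected by this conditioning.

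\textbf{The random error $R_{s,T}$.} Write $\tr\SSS=T^{-1}\sum_{s,t}(\mathbf M_X)_{st}\bm\varepsilon_t^{\trans}\bm\varepsilon_s$, which equals $T^{-1}\bm z^{\trans}\mathbf A\bm z$. Here $\bm z=\mathrm{vec}(\mathbf U)$ stacks the $NT$ i.i.d.\ standardized innovations, and $\mathbf A=\mathbf M_X\otimes\Sig$. The matrix $\mathbf A$ is symmetric because both factors are.
\begin{itemize}[label={}]
\item The mean is $T^{-1}\tr(\mathbf M_X)\tr\Sig=nT^{-1}\tr\Sig$, because $\tr\mathbf M_X=T-p-1=n$. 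Hence $\E(R_{s,T}\mid\mathbf X)=0$.
\item For the variance, Lemma~\ref{lem:quadratic} gives $\Var(\bm z^{\trans}\mathbf A\bm z)\le(2+|\chi_4|)\tr(\mathbf A^2)$.
\item Since $\mathbf M_X$ is idempotent, $\tr(\mathbf A^2)=\tr(\mathbf M_X)\tr(\Sig^2)\le nNC_\Sigma^2$.
\item Dividing by $T^2N^2$ then yields $\E(R_{s,T}^2\mid\mathbf X)\le C\,n/(TN)\cdot T^{-1}\le C/(NT)$. The constant uses only $\E u_{it}^4<\infty$ and $\|\Sig\|_{\mathrm{op}}\le C_\Sigma$.
\end{itemize}

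\textbf{The deterministic error $r_{s,T}$.} Write $r_{s,T}=(n/T-1)N^{-1}\tr\Sig+\{\int\tau\,dH_N-\int\tau\,dH\}$.
\begin{itemize}[label={}]
\item The first term is bounded by $(p+1)C_\Sigma/T$.
\item The second term is bounded by $W_1(H_N,H)$, because $f(x)=x$ is 1-Lipschitz. Both $H_N$ and $H$ have bounded support, so the integrals are finite and the duality definition applies directly.
\end{itemize}

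\textbf{The alpha bounds.} For $\E\|\aalpha_T\|^2$, use the independence of $\bm\gamma_N$ and $\E\gamma_i^2=1$ to get $\E\|\aalpha_T\|^2=T^{-1/2}N^{-1}\tr\mathbf\Omega_N$. The only delicate step is bounding $N^{-1}\tr\mathbf\Omega_N$, because Assumption~\ref{ass:alt} only controls $N^{-1}\tr\{\mathbf\Omega_N(\Sig+s\I)^{-1}\}$. Since $\Sig+s\I\preceq(C_\Sigma+s)\I$ and $\mathbf\Omega_N\succeq0$, we have
\[
 \tr\mathbf\Omega_N\le(C_\Sigma+s)\tr\{\mathbf\Omega_N(\Sig+s\I_N)^{-1}\}.
\]
Fixing, say, $s=1$, convergence to the finite limit $m_\Omega(1)$ gives $N^{-1}\tr\mathbf\Omega_N\le C$ for large $N$. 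For $\E\|\ahat\|^2$, write $\ahat=\aalpha_T+\mathbf E\bm v_T$ and use the independence of $\bm\gamma_N$ from the innovations. This gives $\E\|\ahat\|^2=\E\|\aalpha_T\|^2+(d_T/T)\tr\Sig$. The second term is at most $d_TcC_\Sigma$ up to $o(1)$, and $d_T\to d<\infty$ by Lemma~\ref{lem:projection}, so it is bounded. Under the null the same bound holds with $\aalpha_T=\bm0$.
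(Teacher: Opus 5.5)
Your proposal is correct and follows the paper's proof essentially step by step. You write $NT\widehat s_T=\bm z^{\trans}(\mathbf M_X\otimes\Sig)\bm z$ and use Lemma~\ref{lem:quadratic} with idempotence of $\mathbf M_X$ to get the mean and variance, and you bound $r_{s,T}$ with the $1$-Lipschitz test function $f(x)=x$. For the alpha bounds you use $\tr\mathbf\Omega_N\le(C_\Sigma+s)\tr\{\mathbf\Omega_N(\Sig+s\I_N)^{-1}\}$ together with $\Cov(\ahat-\aalpha\mid\mathbf X)=(d_T/T)\Sig$ from Lemma~\ref{lem:projection}.
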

\begin{proof}
Let $\bm z=(\bm u_1^{\trans},\ldots,\bm u_T^{\trans})^{\trans}$. The trace identity and Lemma~\ref{lem:projection} give
\[
 NT\widehat s_T=\bm z^{\trans}(\mathbf M_X\otimes\Sig)\bm z,
 \qquad
 \E(\widehat s_T\mid\mathbf X)=\frac{\tr\mathbf M_X\tr\Sig}{NT}
 =\frac n{NT}\tr\Sig.
\]
Using Lemma~\ref{lem:quadratic}, $\mathbf M_X^2=\mathbf M_X$ and $\|\Sig\|_{\mathrm{op}}\le C_\Sigma$,
\begin{align*}
 \Var(NT\widehat s_T\mid\mathbf X)
 &=2n\tr(\Sig^2)+\chi_4\sum_t(\mathbf M_X)_{tt}^2\sum_i\Sigma_{ii}^2\\
 &\le(2+|\chi_4|)n\tr(\Sig^2)\le CNT.
 \end{align*}
The deterministic part satisfies
\[
 |r_{s,T}|\le\frac{p+1}{T}\frac{\tr\Sig}{N}
       +\left|\int\tau\,dH_N(\tau)-\int\tau\,dH(\tau)\right|
 \le C/T+W_1(H_N,H).
\]
In particular, for every $x>0$,
\begin{equation*}
 \PP\{|R_{s,T}|>x\mid\mathbf X\}\le\frac C{NTx^2},\qquad
 \widehat s_T-\mu_1=O_p(T^{-1})+O\{W_1(H_N,H)\}.
\end{equation*}
Fix $s_0>0$. The trace limit in Assumption~\ref{ass:alt} implies
\[
 \frac{\tr\mathbf\Omega_N}{N}
 \le\frac{C_\Sigma+s_0}{N}\tr\{\mathbf\Omega_N(\Sig+s_0\I_N)^{-1}\}\le C.
\]
Independence of the alpha draw and innovations then yields
\begin{align*}
 \E\|\aalpha_T\|^2&=T^{-1/2}N^{-1}\tr\mathbf\Omega_N\le CT^{-1/2},\\
 \E\|\ahat\|^2&=\E\|\aalpha_T\|^2+\frac{d_T}{T}\tr\Sig\le C,\\
 \PP\{\|\aalpha_T\|>xT^{-1/4}\}&\le C/x^2,
 \qquad
 \PP\{N^{-1}\|\ahat\|^2>x/T\}\le C/x.
 \end{align*}
Thus $\|\aalpha_T\|=O_p(T^{-1/4})$, $\|\ahat\|=O_p(1)$ and $N^{-1}\|\ahat\|^2=O_p(T^{-1})$.
\end{proof}

\subsection{Fixed-point and cross-ridge identities}\label{app:kernel}
For $\lambda>0$, define
\[
 D_\lambda(\tau)=\frac1{\lambda+b_\lambda\tau},\qquad
 B_{1,\lambda}=\int\tau D_\lambda(\tau)^2\,dH(\tau),\qquad
 A_{\lambda\mu}=\int\tau^2D_\lambda(\tau)D_\mu(\tau)\,dH(\tau).
\]
\begin{lemma}[Fixed-point identities]\label{lem:fixedpoint}
The solution $b_\lambda$ is unique and smooth for $\lambda>0$, and
\begin{equation}\label{eq:bprime}
 b_\lambda'=\frac{cb_\lambda B_{1,\lambda}}{1+c\lambda B_{1,\lambda}},\qquad
 \Delta_\lambda=\frac{2cA_{\lambda\lambda}}{1-cb_\lambda^2A_{\lambda\lambda}}>0.
\end{equation}
\end{lemma}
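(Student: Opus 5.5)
We treat the fixed-point equation as an implicit equation for $b_\lambda$ and handle the three claims (uniqueness, the derivative, the variance formula) in turn. Define
\[
 \Phi(b,\lambda)=b+c\int\frac{b\tau}{\lambda+b\tau}\,dH(\tau)-1 .
\]
For fixed $\lambda>0$, $b\mapsto\Phi(b,\lambda)$ is continuous on $[0,1]$. Each integrand $b\tau/(\lambda+b\tau)$ is increasing in $b$, so $\Phi$ is strictly increasing. Moreover $\Phi(0,\lambda)=-1<0$ and $\Phi(1,\lambda)=c\int\tau/(\lambda+\tau)\,dH>0$, because $H$ is supported in $(0,\infty)$. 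This gives a unique root in $(0,1)$.

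For smoothness, we compute
\[
 \partial_b\Phi=1+c\lambda B_{1,\lambda},\qquad \partial_\lambda\Phi=-cb\,B_{1,\lambda},
\]
using $\partial_b\{b\tau/(\lambda+b\tau)\}=\lambda\tau D^2$ and $\partial_\lambda\{\cdot\}=-b\tau D^2$. Compact support lets us differentiate under the integral. Since $\partial_b\Phi>0$, the implicit function theorem gives smoothness of $b_\lambda$, and $b'_\lambda=-\partial_\lambda\Phi/\partial_b\Phi$ is exactly the first identity in \eqref{eq:bprime}.

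For $\Delta_\lambda$, we write $\xi_\lambda=b_\lambda^{-1}-1$, so that $1+\xi_\lambda=b_\lambda^{-1}$ and $\xi_\lambda'=-b'_\lambda/b_\lambda^2$. Then
\[
 \Delta_\lambda=2\Bigl(\frac{1-b_\lambda}{b_\lambda}-\frac{\lambda b'_\lambda}{b_\lambda^2}\Bigr)b_\lambda^{-2}
 =\frac{2}{b_\lambda^4}\{b_\lambda(1-b_\lambda)-\lambda b_\lambda'\}.
\]
The fixed-point equation gives $1-b_\lambda=c b_\lambda\int\tau D_\lambda\,dH$. We also use the splitting $\tau D_\lambda=\lambda\tau D_\lambda^2+b_\lambda\tau^2D_\lambda^2$, which follows from $1=(\lambda+b\tau)D$. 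This yields
\[
 \int\tau D_\lambda\,dH=\lambda B_{1,\lambda}+b_\lambda A_{\lambda\lambda}.
\]
Substituting $b'_\lambda$, we obtain
\[
 b_\lambda(1-b_\lambda)-\lambda b'_\lambda=cb_\lambda^2(\lambda B_{1,\lambda}+b_\lambda A_{\lambda\lambda})-\frac{c\lambda b_\lambda B_{1,\lambda}}{1+c\lambda B_{1,\lambda}} .
\]
Here $A_{\lambda\lambda}=\int\tau^2D_\lambda^2\,dH$, and $B_{1,\lambda}$ appears in both terms.

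The main obstacle is the algebraic reduction of this expression to $2cA_{\lambda\lambda}/(1-cb_\lambda^2A_{\lambda\lambda})$. The mixed powers of $b_\lambda$ (the $b_\lambda^2$ in the first term against the single $b_\lambda$ in the second) suggest that a naive substitution will not close. We plan to use the fixed-point relation once more to eliminate $\lambda B_{1,\lambda}$. Multiplying the fixed-point equation by $b_\lambda$ and splitting as above gives
\[
 1-b_\lambda=cb_\lambda\lambda B_{1,\lambda}+cb_\lambda^2A_{\lambda\lambda},
\]
which can be solved for $c\lambda B_{1,\lambda}$ in terms of $b_\lambda$ and $cb_\lambda^2A_{\lambda\lambda}$. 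We then write $1+c\lambda B_{1,\lambda}$ through the same relation, put everything over a common denominator, and factor.

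We expect the result to come out as $b_\lambda^4\cdot cA_{\lambda\lambda}/(1-cb_\lambda^2A_{\lambda\lambda})$ times a constant. Checking the special case $H=\delta_1$ will calibrate the constants and the normalization. If the identity fails as stated, we would carry out the same reduction with the alternative normalization of $A$ (weights $b_\lambda\tau$ in place of $\tau$), since that changes exactly the $b_\lambda$ powers in question.

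Positivity then follows from the closed form. We have $A_{\lambda\lambda}>0$. For the denominator, $cb_\lambda^2A_{\lambda\lambda}\le cb_\lambda\int b_\lambda\tau^2D_\lambda^2\,dH<cb_\lambda\int\tau D_\lambda\,dH=1-b_\lambda<1$, using $b_\lambda\tau D_\lambda<1$, so $1-cb_\lambda^2A_{\lambda\lambda}>0$.
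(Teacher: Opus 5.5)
Your route is the paper's: the monotone-root argument for uniqueness, implicit differentiation giving $b_\lambda'(1+c\lambda B_{1,\lambda})=cb_\lambda B_{1,\lambda}$, and the splitting $\int\tau D_\lambda\,dH=\lambda B_{1,\lambda}+b_\lambda A_{\lambda\lambda}$. That splitting turns the fixed-point equation into $b_\lambda(1+c\lambda B_{1,\lambda})=1-cb_\lambda^2A_{\lambda\lambda}$. The paper reduces $\xi_\lambda+\lambda\xi_\lambda'$ directly, while you reduce $b_\lambda(1-b_\lambda)-\lambda b_\lambda'=b_\lambda^2(\xi_\lambda+\lambda\xi_\lambda')$; this is the same computation. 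Every step you actually carry out is correct.

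The shortcoming is that you never perform the reduction that is the lemma's main content. Instead you suggest the identity might fail and need a different normalization of $A$. It does not fail, and the relation you already wrote down closes it. Abbreviate $b=b_\lambda$, $A=A_{\lambda\lambda}$, $B=B_{1,\lambda}$. The relation gives $1+c\lambda B=(1-cb^2A)/b$, so
\[
 b(1-b)-\lambda b'
 =cb^2\lambda B+cb^3A-\frac{cb^2\lambda B}{1-cb^2A}
 =cb^3A\Bigl(1-\frac{cb\lambda B}{1-cb^2A}\Bigr)
 =\frac{cb^3A\,(1-cb^2A-cb\lambda B)}{1-cb^2A}
 =\frac{cb^4A}{1-cb^2A}.
\]
The last step uses $1-cb^2A-cb\lambda B=b$ once more. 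Multiplying by $2/b^4$ gives exactly $\Delta_\lambda=2cA/(1-cb^2A)$.

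The mismatch of $b$-powers you worried about is absorbed by the hidden factor $b^{-1}$ in $1+c\lambda B$. You need neither the $H=\delta_1$ calibration nor an alternative normalization, so drop that contingency.

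Your positivity argument is correct: $b\tau D_\lambda<1$ gives $cb^2A<cb\int\tau D_\lambda\,dH=1-b$. It is equivalent to the paper's, which reads $1-cb^2A=b(1+c\lambda B)>0$ off the same relation.
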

\begin{proof}
For fixed $\lambda>0$, set $F_\lambda(b)=b+c\int b\tau/(\lambda+b\tau)\,dH(\tau)$. Then
\[
 F_\lambda(0)=0,\quad F_\lambda(1)>1,\quad
 F_\lambda'(b)=1+c\int\frac{\lambda\tau}{(\lambda+b\tau)^2}\,dH(\tau)>0.
\]
There is exactly one $b_\lambda\in(0,1)$. On a compact interval $\lambda\in[\ell,u]\subset(0,\infty)$,
\[
 1=F_\lambda(b_\lambda)
 \le b_\lambda(1+cC_\Sigma/\ell),\qquad
 b_\lambda\ge(1+cC_\Sigma/\ell)^{-1}>0.
\]
Differentiating $F_\lambda(b_\lambda)=1$ gives
\[
 b_\lambda'+c\int
 \frac{\tau(\lambda b_\lambda'-b_\lambda)}{(\lambda+b_\lambda\tau)^2}\,dH(\tau)=0,
 \qquad
 b_\lambda'(1+c\lambda B_{1,\lambda})=cb_\lambda B_{1,\lambda}.
\]
The denominator is positive. Repeated differentiation is legitimate on $[\ell,u]$, because every denominator is at least $\ell$ and the support of $H$ is bounded.
\[
 \int\tau D_\lambda(\tau)\,dH(\tau)
 =\lambda B_{1,\lambda}+b_\lambda A_{\lambda\lambda},
 \qquad
 b_\lambda(1+c\lambda B_{1,\lambda})
 =1-cb_\lambda^2A_{\lambda\lambda}>0.
\]
Since $\xi_\lambda=c\int\tau D_\lambda\,dH$ and $\xi_\lambda'=-b_\lambda'/b_\lambda^2$,
\begin{align*}
 \xi_\lambda+\lambda\xi_\lambda'
 &=c(\lambda B_{1,\lambda}+b_\lambda A_{\lambda\lambda})
   -\frac{c\lambda B_{1,\lambda}}{b_\lambda(1+c\lambda B_{1,\lambda})}\\
 &=\frac{cb_\lambda A_{\lambda\lambda}}{1+c\lambda B_{1,\lambda}}
 =\frac{cb_\lambda^2A_{\lambda\lambda}}{1-cb_\lambda^2A_{\lambda\lambda}}.
 \end{align*}
Multiplication by $2/b_\lambda^2$ proves \eqref{eq:bprime}. The numerator is positive because $H((0,\infty))=1$.
\end{proof}

\begin{lemma}[Cross-ridge kernel]\label{lem:kernel}
For positive ridges,
\begin{equation}\label{eq:Kintegral}
 K(\lambda,\mu)=\frac{2cA_{\lambda\mu}}{1-cb_\lambda b_\mu A_{\lambda\mu}}.
\end{equation}
The kernel matrix is positive definite on distinct ridges; consequently $0<\Gamma_{ij}<1$ for $i\ne j$.
\end{lemma}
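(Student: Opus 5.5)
The plan has two parts: first reduce $K(\lambda,\mu)$ to the integral form \eqref{eq:Kintegral} using the fixed-point equation, then obtain positive definiteness from a power-series (Schur-product) representation of that integral form.

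\emph{Integral form.} I will work with the effective ridge $x_\lambda=\lambda/b_\lambda$. Then $b_\lambda D_\lambda(\tau)=1/(x_\lambda+\tau)$, and the defining equation reads
\[
 b_\lambda^{-1}=1+\xi_\lambda,\qquad \xi_\lambda=c\int\frac{\tau}{x_\lambda+\tau}\,dH(\tau).
\]
Three consequences follow.
\begin{itemize}
\item The prefactor simplifies: $(1+\xi_\lambda)(1+\xi_\mu)=(b_\lambda b_\mu)^{-1}$.
\item Since $\lambda\xi_\lambda=c\int\lambda\tau D_\lambda\,dH$ and $\lambda D_\lambda=1-b_\lambda\tau D_\lambda$, a partial-fraction computation gives
\[
 \mu\xi_\mu-\lambda\xi_\lambda=c(\mu b_\lambda-\lambda b_\mu)A_{\lambda\mu}.
\]
\item Writing $\mu b_\lambda-\lambda b_\mu=b_\lambda b_\mu(x_\mu-x_\lambda)$ and $\mu-\lambda=b_\mu x_\mu-b_\lambda x_\lambda$, the remaining task is the identity
\[
 b_\lambda b_\mu(x_\mu-x_\lambda)\bigl(1-cb_\lambda b_\mu A_{\lambda\mu}\bigr)=b_\lambda b_\mu\,(\mu-\lambda)\cdot b_\lambda b_\mu\,(b_\lambda b_\mu)^{-1}.
\]
\end{itemize}
To prove this identity I will use
\[
 b_\lambda^{-1}-b_\mu^{-1}=c(x_\mu-x_\lambda)\int\frac{\tau\,dH}{(x_\lambda+\tau)(x_\mu+\tau)}
\]
together with $\tau/[(x_\lambda+\tau)(x_\mu+\tau)]=\{1-x_\lambda/(x_\lambda+\tau)\}/(x_\mu+\tau)$. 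This is the two-ridge analogue of the step in Lemma~\ref{lem:fixedpoint}, where $\int\tau D_\lambda\,dH=\lambda B_{1,\lambda}+b_\lambda A_{\lambda\lambda}$ gave $b_\lambda(1+c\lambda B_{1,\lambda})=1-cb_\lambda^2A_{\lambda\lambda}$. For $\lambda=\mu$ the formula reduces to \eqref{eq:bprime}, consistent with $K(\lambda,\lambda)=\Delta_\lambda$. This algebra is routine but fiddly, and I will check it against the diagonal case.

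\emph{Positive definiteness.} Cauchy--Schwarz in $L^2(H)$ gives $A_{\lambda\mu}\le(A_{\lambda\lambda}A_{\mu\mu})^{1/2}$. Lemma~\ref{lem:fixedpoint} gives $cb_\lambda^2A_{\lambda\lambda}<1$. Hence $0<cb_\lambda b_\mu A_{\lambda\mu}<1$, which shows $K_{ij}>0$ and lets me expand
\[
 K(\lambda,\mu)=2\sum_{k\ge0}c^{k+1}(b_\lambda b_\mu)^kA_{\lambda\mu}^{k+1}.
\]
The $k$th term is $\langle G_{\lambda,k},G_{\mu,k}\rangle$ in $L^2(H^{\otimes(k+1)})$, where
\[
 G_{\lambda,k}(\tau_1,\ldots,\tau_{k+1})=c^{(k+1)/2}b_\lambda^k\prod_m\tau_mD_\lambda(\tau_m).
\]
So $\mathbf K$ is the Gram matrix of the vectors $(G_{\lambda_i,k})_{k\ge0}$ in the direct sum of these spaces, and it is nonnegative definite.

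For strictness, suppose $\sum_iv_iG_{\lambda_i,k}=0$ for every $k$. Evaluating at a diagonal point $(\tau,\ldots,\tau)$ with $\tau\in\supp H$ (using continuity of these functions, so the relation holds pointwise on $\supp H$) gives
\[
 \sum_iv_i\,a_i\,u_i^k=0\quad\text{for all }k,\qquad a_i=\tau D_{\lambda_i}(\tau)>0,\quad u_i=\frac{\tau}{x_{\lambda_i}+\tau}.
\]
The map $\lambda\mapsto x_\lambda$ is strictly increasing, because $(\lambda/b_\lambda)'=b_\lambda^{-1}(1+c\lambda B_{1,\lambda})^{-1}>0$ by \eqref{eq:bprime}. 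Distinct ridges therefore give distinct $u_i$, and a Vandermonde argument forces $v=\bm0$. Thus $\mathbf K\succ0$, and so $\Gam\succ0$ with unit diagonal. Every $2\times2$ principal minor is then positive, giving $|\Gamma_{ij}|<1$; combined with $K_{ij}>0$ this yields $0<\Gamma_{ij}<1$.

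\emph{Main obstacle.} I expect the hardest step to be strictness, since $H$ may have only one atom, so the $k=0$ Gram matrix $2c\mathbf A$ alone can be singular. The geometric-series and Vandermonde argument above is the route I will try first to get around this.
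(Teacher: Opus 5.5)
You have the same proof as the paper. You reduce $K$ to the integral form via the fixed-point equation, then expand it as a geometric series, read off a sum of squares on $(\supp H)^{m}$, evaluate at a diagonal point, and use monotonicity of $\lambda/b_\lambda$ from \eqref{eq:bprime} with a Vandermonde determinant. The positive-definiteness half is correct as written, including your remark that a one-atom $H$ makes the $k=0$ term singular. Your reduction of \eqref{eq:Kintegral} to $(x_\mu-x_\lambda)(1-cb_\lambda b_\mu A_{\lambda\mu})=\mu-\lambda$ is also correct once your displayed right-hand side is simplified to $b_\lambda b_\mu(\mu-\lambda)$. It is equivalent to the paper's step $(\mu b-\lambda e)/(\mu-\lambda)=be/(1-cbeA)$.

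The flaw is in the effective-ridge form of the fixed-point equation, which is the one place your algebra would fail as written. Since $b_\lambda\tau/(\lambda+b_\lambda\tau)=\tau/(x_\lambda+\tau)$, the defining equation says $1-b_\lambda=c\int\tau/(x_\lambda+\tau)\,dH$. Your formula $\xi_\lambda=c\int\tau/(x_\lambda+\tau)\,dH$ is therefore wrong. The correct one is $\xi_\lambda=c\int\tau D_\lambda\,dH=b_\lambda^{-1}c\int\tau/(x_\lambda+\tau)\,dH$, which is in fact what your second bullet uses. As a result, the difference identity you plan to use, $b_\lambda^{-1}-b_\mu^{-1}=c(x_\mu-x_\lambda)\int\ldots$, is off by a factor $b_\lambda b_\mu$, and with it the terms do not combine. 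The true identity is
\[
 b_\mu-b_\lambda=c(x_\mu-x_\lambda)\int\frac{\tau\,dH(\tau)}{(x_\lambda+\tau)(x_\mu+\tau)}.
\]
With this correction your plan closes in one line. Write $\mu-\lambda=b_\mu(x_\mu-x_\lambda)+x_\lambda(b_\mu-b_\lambda)$, so that $\mu-\lambda=(x_\mu-x_\lambda)\{b_\mu+cx_\lambda\int\tau\,dH/[(x_\lambda+\tau)(x_\mu+\tau)]\}$. Then use $x_\lambda/(x_\lambda+\tau)=1-\tau/(x_\lambda+\tau)$, together with $1-b_\mu=c\int\tau/(x_\mu+\tau)\,dH$ and $b_\lambda b_\mu A_{\lambda\mu}=\int\tau^2\,dH/[(x_\lambda+\tau)(x_\mu+\tau)]$. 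The braces become exactly $1-cb_\lambda b_\mu A_{\lambda\mu}$.
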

\begin{proof}
Write $b=b_\lambda$, $e=b_\mu$, $A=A_{\lambda\mu}$ and $L=\int\tau D_\lambda(\tau)D_\mu(\tau)\,dH(\tau)$. First,
\[
 cbeA\le\{cb^2A_{\lambda\lambda}\}^{1/2}
             \{ce^2A_{\mu\mu}\}^{1/2}<1
\]
by Cauchy--Schwarz and Lemma~\ref{lem:fixedpoint}. Subtracting $\xi_\lambda=c\int\tau D_\lambda\,dH$ and $\xi_\mu=c\int\tau D_\mu\,dH$ gives
\[
 \frac{b-e}{be}=c\{(\lambda-\mu)L+(b-e)A\},\qquad
 (b-e)(1-cbeA)=cbe(\lambda-\mu)L.
\]
For $\lambda\ne\mu$,
\begin{align*}
 \frac{\mu\xi_\mu-\lambda\xi_\lambda}{\mu-\lambda}
 &=cA\frac{\mu b-\lambda e}{\mu-\lambda},\\
 \frac{\mu b-\lambda e}{\mu-\lambda}
 &=b-\frac{\lambda cbeL}{1-cbeA}
 =\frac{be}{1-cbeA}.
 \end{align*}
The last equality uses $1-e=ce(\lambda L+bA)$. Substitution in \eqref{eq:K} proves \eqref{eq:Kintegral}; the diagonal case is \eqref{eq:bprime}.

Put $f_j(\tau)=\tau D_{\lambda_j}(\tau)$ and $b_j=b_{\lambda_j}$. The geometric series in the positive denominator gives
\begin{equation*}
 K(\lambda_i,\lambda_j)
 =2\sum_{m=1}^{\infty}c^m(b_i b_j)^{m-1}
       \left\{\int f_i(\tau)f_j(\tau)\,dH(\tau)\right\}^{m}.
\end{equation*}
For arbitrary real $t_1,\ldots,t_J$, its quadratic form equals
\[
 2\sum_{m=1}^{\infty}c^m
 \int\cdots\int\left\{\sum_{j=1}^Jt_jb_j^{m-1}
                   \prod_{a=1}^m f_j(\tau_a)\right\}^{\!2}
 dH(\tau_1)\cdots dH(\tau_m)\ge0.
\]
If this expression is zero, every squared integrand is zero on $(\supp H)^m$. Continuity proves this: a nonzero value would remain nonzero on a neighborhood of positive product measure.

Choose $\tau_0>0$ in $\supp H$ and evaluate at $(\tau_0,\ldots,\tau_0)$. For every $m\ge1$,
\[
 \sum_{j=1}^J\frac{t_j}{b_j}\zeta_j^m=0,
 \qquad \zeta_j=\frac{b_j\tau_0}{\lambda_j+b_j\tau_0}.
\]
The numbers $\zeta_j$ are nonzero and distinct, since
\[
 \frac{d}{d\lambda}\frac{\lambda}{b_\lambda}
 =\frac{b_\lambda-\lambda b_\lambda'}{b_\lambda^2}
 =\frac1{b_\lambda(1+c\lambda B_{1,\lambda})}>0.
\]
The equations for $m=1,\ldots,J$ have coefficient determinant
\[
 \det(\zeta_j^m)_{m,j=1}^J
 =\left(\prod_{j=1}^J\zeta_j\right)\prod_{i<j}(\zeta_j-\zeta_i)\ne0.
\]
Thus all $t_j=0$. Positive definiteness and \eqref{eq:Kintegral} imply
\[
 K_{ii}K_{jj}-K_{ij}^2>0,
 \qquad K_{ij}>0,
 \qquad 0<\frac{K_{ij}}{\sqrt{K_{ii}K_{jj}}}<1.
\]
\end{proof}

\subsection{The local signal measure}\label{app:signal}
Using the eigenvectors $\bm\psi_{r,N}$ from \eqref{eq:alphadependence}, define measures on $[0,C_\Sigma]$ by
\[
 \nu_N=\frac1N\sum_{r=1}^N
    (\bm\psi_{r,N}^{\trans}\mathbf\Omega_N\bm\psi_{r,N})\delta_{\tau_{r,N}},
 \qquad
 \widetilde\nu_T=\sqrt T\sum_{r=1}^N
    (\bm\psi_{r,N}^{\trans}\aalpha_T)^2\delta_{\tau_{r,N}}.
\]
\begin{lemma}[Signal representation]\label{lem:signal}
Assumption~\ref{ass:alt} implies $\nu_N\Rightarrow\nu$ and $\widetilde\nu_T\Rightarrow\nu$ in probability for a unique finite nonnegative measure $\nu$, with
\begin{equation}\label{eq:signalrepresentation}
 m_\Omega(s)=\int\frac{d\nu(\tau)}{s+\tau},\qquad
 \calH_\Omega(\lambda)=\frac1d\int D_\lambda(\tau)\,d\nu(\tau).
\end{equation}
\end{lemma}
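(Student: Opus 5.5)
The plan is to prove convergence of the Stieltjes-type transforms and then recover the measures through uniqueness of the Stieltjes transform on $(0,\infty)$. First, $\nu_N$ is a nonnegative measure on the compact interval $[0,C_\Sigma]$ with total mass $N^{-1}\tr\mathbf\Omega_N$. As in the proof of Lemma~\ref{lem:scale}, this mass is bounded by $(C_\Sigma+s_0)m_\Omega(s_0)+o(1)$. Expanding $\tr\{\mathbf\Omega_N(\Sig+s\I_N)^{-1}\}$ in the eigenbasis $\bm\psi_{r,N}$ gives the identity
\[
 \frac1N\tr\{\mathbf\Omega_N(\Sig+s\I_N)^{-1}\}=\int\frac{d\nu_N(\tau)}{s+\tau}.
\]
Hence, by the second limit in \eqref{eq:originalsignal}, $\int(s+\tau)^{-1}\,d\nu_N(\tau)\to m_\Omega(s)$ for every $s>0$. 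Similarly,
\[
 \sqrt T\,\aalpha_T^{\trans}(\Sig+s\I_N)^{-1}\aalpha_T=\int\frac{d\widetilde\nu_T(\tau)}{s+\tau},
\]
so the first limit gives convergence in probability of these transforms of $\widetilde\nu_T$.

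Next comes tightness and identification. Masses are bounded and the supports lie in a common compact set, so every subsequence of $\nu_N$ has a weakly convergent further subsequence with limit $\nu'$. Since $\tau\mapsto(s+\tau)^{-1}$ is bounded and continuous on $[0,C_\Sigma]$, the limit satisfies $\int(s+\tau)^{-1}\,d\nu'=m_\Omega(s)$ for all $s>0$. Two finite measures on $[0,C_\Sigma]$ whose functions $s\mapsto\int(s+\tau)^{-1}\,d\nu$ agree on $(0,\infty)$ coincide. To see this, note that these functions are real-analytic in $s$ and extend to the Stieltjes transform off $[-C_\Sigma,0]$. Alternatively, the linear span of $\{(s+\tau)^{-1}:s>0\}$ is dense in $C[0,C_\Sigma]$ by Stone--Weierstrass: it contains products by partial fractions, and limits in $s$ generate polynomials in $(s+\tau)^{-1}$. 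Hence all subsequential limits equal a single $\nu$, and $\nu_N\Rightarrow\nu$.

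For $\widetilde\nu_T$, first control its mass: $\widetilde\nu_T([0,C_\Sigma])=\sqrt T\|\aalpha_T\|^2$, which is $O_p(1)$ by Lemma~\ref{lem:scale}. Convergence in probability then follows from a subsequence argument. Take a countable dense set $\{s_k\}\subset(0,\infty)$ together with any subsequence. A diagonal extraction gives a further subsequence along which the transforms converge almost surely at every $s_k$ and the masses stay bounded. Equicontinuity in $s$, with derivative bounded by mass$/s^2$, extends the convergence to all $s>0$. The deterministic argument above then yields almost sure weak convergence to the same $\nu$ along that subsequence, which is equivalent to $\widetilde\nu_T\Rightarrow\nu$ in probability, for instance in the bounded-Lipschitz metric.

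Finally, \eqref{eq:signalrepresentation} follows. The first identity is the limit already identified. For the second, substitute $s=\lambda/b_\lambda$ in $\calH_\Omega(\lambda)=(db_\lambda)^{-1}m_\Omega(\lambda/b_\lambda)$ and use
\[
 \frac{1}{b_\lambda(\lambda/b_\lambda+\tau)}=\frac1{\lambda+b_\lambda\tau}=D_\lambda(\tau).
\]
The main obstacle is the uniqueness and density step together with the mass bound near $s\to0$. If the Stone--Weierstrass route proves awkward, the fallback is to invert the Stieltjes transform via its analytic continuation in $s$, using the Stieltjes inversion formula.
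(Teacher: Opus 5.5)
Your proof is correct, and its skeleton is the paper's almost step for step:
\begin{enumerate}[label=(\roman*)]
\item rewrite both quantities in \eqref{eq:originalsignal} as integrals of $(s+\tau)^{-1}$ against $\nu_N$ and $\widetilde\nu_T$;
\item bound the masses and extract weakly convergent subsequences on $[0,C_\Sigma]$;
\item identify every cluster limit through its transform;
\item handle the random measures via a countable set of $s$ values, almost-sure subsequences and the Lipschitz bound (mass divided by $s^2$);
\item finish with the substitution $s=\lambda/b_\lambda$.
\end{enumerate}

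The genuine difference is the uniqueness step. The paper uses only large $s$. For $s>C_\Sigma$ it expands $(s+\tau)^{-1}$ geometrically, matches first the total masses and then all moments by induction, and passes from polynomials to continuous functions with explicit Bernstein polynomials. That argument is elementary and self-contained.

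You instead invoke Stone--Weierstrass for the uniform closure of the span of $\{(s+\tau)^{-1}:s>0\}$. Partial fractions give products for $s\ne t$, difference quotients in $s$ give squares, and the functions are positive and separate points. Your alternative is analytic continuation plus Stieltjes inversion. Either route is shorter but rests on heavier theorems, and it uses the transform on all of $(0,\infty)$, or, via analyticity, on any set with an accumulation point.

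One small point to tighten concerns the masses along your extracted subsequence. The $O_p(1)$ bound on $\widetilde\nu_T([0,C_\Sigma])$ from Lemma~\ref{lem:scale} does not by itself give almost-sure boundedness along that subsequence. Almost-sure boundedness follows instead from $\widetilde\nu_T([0,C_\Sigma])\le(C_\Sigma+s_k)\int(s_k+\tau)^{-1}\,d\widetilde\nu_T(\tau)$ at any fixed $s_k$ in your countable set. This is exactly how the paper argues, by including $s_0=1$ in its enumeration.
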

\begin{proof}
Take $s_0=1$, and let $M_N=\nu_N([0,C_\Sigma])$ and $\widetilde M_T=\widetilde\nu_T([0,C_\Sigma])$. The two parts of \eqref{eq:originalsignal} imply
\begin{align*}
 M_N&\le(C_\Sigma+s_0)\int\frac{d\nu_N(\tau)}{s_0+\tau}\le C,\\
 \widetilde M_T&\le(C_\Sigma+s_0)\int\frac{d\widetilde\nu_T(\tau)}{s_0+\tau}=O_p(1).
 \end{align*}
Bounded masses on a compact interval give weakly convergent subsequences of finite measures; this is the compactness criterion in \citet[Sections 1 and 5]{Billingsley1999}.

Every weak cluster limit $\nu^{(a)}$ satisfies, by bounded continuity,
\[
 \int\frac{d\nu^{(a)}(\tau)}{s+\tau}
 =\lim_{\ell\to\infty}\int\frac{d\nu_{N_\ell}(\tau)}{s+\tau}=m_\Omega(s).
\]
For two cluster limits and $s>C_\Sigma$, subtraction and a geometric expansion give
\[
 0=\sum_{r=0}^{\infty}\frac{(-1)^r}{s^{r+1}}
       \left\{\int\tau^r\,d\nu^{(1)}(\tau)-\int\tau^r\,d\nu^{(2)}(\tau)\right\}.
\]
The series converges absolutely and uniformly in $\tau$. Multiplication by $s$, followed by $s\to\infty$, identifies the total masses.

Suppose the moments below order $m$ agree. Multiplication by $s^{m+1}$ gives the difference of the $m$th moments plus a remainder bounded by
\[
 2M\sum_{r=m+1}^{\infty}\frac{C_\Sigma^r}{s^{r-m}}
 =\frac{2M C_\Sigma^{m+1}}{s(1-C_\Sigma/s)}\longrightarrow0,
\]
where $M$ bounds both masses. Induction identifies every moment.

To pass from moments to continuous functions, let $f$ be continuous and $\omega_f(a)=\sup_{|x-y|\le a}|f(x)-f(y)|$. Its Bernstein polynomial satisfies
\[
 B_mf(\tau)=\sum_{r=0}^mf(C_\Sigma r/m)\binom mr
    (\tau/C_\Sigma)^r(1-\tau/C_\Sigma)^{m-r}.
\]
For $V_{m,\tau}\sim\operatorname{Bin}(m,\tau/C_\Sigma)$ and $a>0$,
\begin{align*}
 |B_mf(\tau)-f(\tau)|
 &\le\omega_f(C_\Sigma a)+2\|f\|_\infty\PP(|V_{m,\tau}/m-\tau/C_\Sigma|>a)\\
 &\le\omega_f(C_\Sigma a)+\frac{\|f\|_\infty}{2ma^2}.
 \end{align*}
Taking $a=m^{-1/4}$ proves uniform approximation. Since all polynomial integrals agree, all continuous-function integrals agree, identifying a single measure $\nu$.

Every deterministic subsequence therefore has the same cluster limit. For the random measures, enumerate the positive rationals as $s_1,s_2,\ldots$.

From an arbitrary subsequence, choose $T_m$ so that
\[
 \PP\left\{\max_{\ell\le m}\left|
 \int\frac{d\widetilde\nu_{T_m}(\tau)}{s_\ell+\tau}-m_\Omega(s_\ell)
 \right|>m^{-1}\right\}\le2^{-m}.
\]
If $E_m$ denotes the event in this display, then
\[
 \PP\Big(\bigcup_{m\ge M}E_m\Big)\le\sum_{m\ge M}2^{-m}=2^{1-M}\longrightarrow0.
\]
All rational-argument transforms therefore converge simultaneously almost surely. Since the enumeration includes $s_0=1$, the mass bound gives almost-sure bounded masses on this subsequence.

For $s,t\ge s_*>0$,
\[
 \left|\int\frac{d\widetilde\nu_{T_m}(\tau)}{s+\tau}
            -\int\frac{d\widetilde\nu_{T_m}(\tau)}{t+\tau}\right|
 \le s_*^{-2}|s-t|\widetilde M_{T_m}.
\]
Thus convergence extends from the positive rationals to every positive argument. Compactness and the preceding uniqueness proof give almost-sure weak convergence along this further subsequence.

If convergence in probability failed for a continuous $f$, there would be $\epsilon,c_0>0$ and a subsequence with
\[
 \PP\left(\left|\int f(\tau)\,d\widetilde\nu_T(\tau)
                -\int f(\tau)\,d\nu(\tau)\right|>\epsilon\right)>c_0.
\]
Its almost-surely convergent further subsequence contradicts this inequality. This proves the random-measure assertion.

The following explicit finite-net bound will be used for complex arguments. If $h_z(\tau)$ is continuous and $|h_z(\tau)-h_{z'}(\tau)|\le L|z-z'|$, then
\begin{align}
 &\sup_{z\in\mathcal S}\left|\int h_z(\tau)\,d\widetilde\nu_T(\tau)
                         -\int h_z(\tau)\,d\nu(\tau)\right|\notag\\
 &\quad\le\max_{\ell\le \mathfrak n_\epsilon}\left|\int h_{z_\ell}(\tau)\,d\widetilde\nu_T(\tau)
                         -\int h_{z_\ell}(\tau)\,d\nu(\tau)\right|
       +L\epsilon(\widetilde M_T+\nu([0,C_\Sigma])),\label{eq:signalnet}
\end{align}
where $\{z_\ell\}_{\ell=1}^{\mathfrak n_\epsilon}$ is an $\epsilon$-net of the compact parameter set $\mathcal S$. First increasing $T$ and then decreasing $\epsilon$ proves uniform convergence.

Finally, diagonalization and substitution of $s=\lambda/b_\lambda$ give
\[
 m_\Omega(s)=\int\frac{d\nu(\tau)}{s+\tau},\qquad
 \frac{m_\Omega(\lambda/b_\lambda)}{db_\lambda}
 =\frac1d\int\frac{d\nu(\tau)}{\lambda+b_\lambda\tau},
\]
which proves \eqref{eq:signalrepresentation}.
\end{proof}
\section{Proofs of the joint ridge limits}\label{app:ridgeproofs}
\subsection{General spectral input and regression representation}\label{app:input}
Write $n=T-p-1$, $r_T=T/n$, $\bm q_T=\bm v_T/\|\bm v_T\|$ and $\overline{\mathbf S}_T=r_T\SSS$. For a real analytic spectral function $f$, define
\[
 M_{n,T}(f)=\frac1n(\mathbf Y\bm q_T)^{\trans}
                     f(\overline{\mathbf S}_T)(\mathbf Y\bm q_T).
\]
Lemma~\ref{lem:projection} gives the exact conversion to the original RHT:
\begin{equation}\label{eq:exactmapping}
 \mathbf Y\bm q_T=\sqrt{T/d_T}\,\ahat,\qquad
 M_{n,T}\{(x+r_T\lambda)^{-1}\}
 =\frac{T}{nd_T}\ahat^{\trans}\{r_T(\SSS+\lambda\I_N)\}^{-1}\ahat
 =W_T(\lambda).
\end{equation}
The general linear contrast is the intercept, its rank is one, and the design rank is $p+1$. The dimension and design conversions are
\[
 \frac{N}{n}-c=\left(\frac{N}{T}-c\right)+\frac{N(p+1)}{nT},\qquad
 \sqrt n\left|\frac{N}{n}-c\right|
 \le\sqrt T|c_T-c|+CT^{-1/2}\to0,
\]
\[
 n^{-1}\mathbf X^{\trans}\mathbf X=r_T\mathbf D_T\to\mathbf D\succ0,
 \qquad \bm e_1^{\trans}(n^{-1}\mathbf X^{\trans}\mathbf X)^{-1}\bm e_1\to d>0.
\]
Together with $\sqrt nW_1(H_N,H)\to0$, these formulas verify C1--C6 of \citet{LAP2020} from Assumption~\ref{ass:base} and \eqref{eq:designscope}.

Let $F_\infty$ be the limiting sample spectral distribution and define, off its nonnegative support,
\[
 m_\infty(z)=\int\frac{dF_\infty(x)}{x-z},\qquad
 b(z)=1-c-czm_\infty(z),\qquad \Theta(z)=b(z)^{-1}.
\]
The sample spectral law and its transform equation are given in Section 2.1 of \citet{LAP2020}. In particular, $b(-\lambda)=b_\lambda$ and $\Theta(-\lambda)=1+\xi_\lambda$.

Set
\begin{align}
 \mathcal D(z_1,z_2)&=\Theta(z_1)\Theta(z_2)
   \left\{\frac{z_1\Theta(z_1)-z_2\Theta(z_2)}{z_1-z_2}-1\right\},\notag\\
 \mathcal D(z,z)&=\Theta(z)^2\{\Theta(z)+z\Theta'(z)-1\},\qquad
 \mathscr S_\nu(z)=\int\frac{d\nu(\tau)}{b(z)\tau-z}.\label{eq:contourkernel}
\end{align}
The diagonal definition removes the divided-difference singularity. For a positively oriented contour $\mathcal C$ around the limiting spectrum, define
\begin{align}
 \mathfrak A(f)&=-\frac1{2\pi\ii}\oint_{\mathcal C}f(z)\{\Theta(z)-1\}\,dz,\notag\\
 \mathfrak V(f_1,f_2)&=\frac2{(2\pi\ii)^2}
       \oint_{\mathcal C}\oint_{\mathcal C}
       f_1(z_1)f_2(z_2)\mathcal D(z_1,z_2)\,dz_1dz_2,\notag\\
 \mathfrak H_\nu(f)&=-\frac1{2\pi\ii d}\oint_{\mathcal C}f(z)\mathscr S_\nu(z)\,dz.\label{eq:spectralfunctionals}
\end{align}
The contour encloses $[0,C_\Sigma(1+\sqrt c)^2]$ and excludes the negative poles of the chosen ridge functions.

The companion transform $\underline m(z)=-(1-c)/z+cm_\infty(z)$ is the Stieltjes transform of a probability distribution on a bounded nonnegative interval. Its imaginary part has the same sign as $\operatorname{Im}z$, and
\[
 b(z)=-z\underline m(z),\qquad
 b(z)\tau-z=-z\{1+\tau\underline m(z)\}.
\]

Thus $b(z)\tau-z\ne0$ for nonreal $z$ and $\tau\ge0$. For real $z<0$, $b(z)>0$, so the denominator is positive.

For large $|z|$, the bounded support gives $b(z)=1+O(|z|^{-1})$. Choose the right edge of a rectangular contour sufficiently far to the right. Then
\[
 \min_{z\in\mathcal C,\,0\le\tau\le C_\Sigma}|b(z)\tau-z|=:a_\mathcal C>0,
 \qquad
 \sup_{z\in\mathcal C,\tau}
 \left|\frac{\partial}{\partial z}\frac1{b(z)\tau-z}\right|<\infty.
\]
For any deterministic alpha sequence with $\widetilde\nu_T\Rightarrow\nu$, the signal condition (2.3) of \citet{LAP2020} is therefore verified by
\begin{align}
 \sqrt n\,\aalpha_T^{\trans}\{b(z)\Sig-z\I_N\}^{-1}\aalpha_T
 &=\sqrt{n/T}\int\frac{d\widetilde\nu_T(\tau)}{b(z)\tau-z},\notag\\
 \sup_{z\in\mathcal C}|\sqrt n\,\aalpha_T^{\trans}\{b(z)\Sig-z\I_N\}^{-1}\aalpha_T
             -\mathscr S_\nu(z)|
 &\le C\widetilde M_T/T+r_{\nu,T}(\mathcal C),\label{eq:deterministicaltinput}
\end{align}
where $r_{\nu,T}(\mathcal C)$ is the supremum of the difference between the two integrals with measures $\widetilde\nu_T$ and $\nu$. It tends to zero by \eqref{eq:signalnet}.

\begin{lemma}[General linear spectral limit; \citealp{LAP2020}]\label{lem:external}
Under the conditions just verified, for real analytic $f$ with $\mathfrak V(f,f)>0$,
\begin{equation}\label{eq:externalCLT}
 \sqrt n\{M_{n,T}(f)-\mathfrak A(f)\}
 \dto \mathcal{N}\{\mathfrak H_\nu(f),\mathfrak V(f,f)\}.
\end{equation}
Under $H_0$, take $\nu=0$.
\end{lemma}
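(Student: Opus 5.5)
The plan is to obtain \eqref{eq:externalCLT} by specializing the general linear hypothesis CLT of \citet{LAP2020} rather than by deriving it from scratch. The work lies in matching that theorem's objects to ours and in reducing our random signal and design to the deterministic setting it covers.

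\emph{Notation matching.} In their framework the tested contrast is $\bm e_1^{\trans}$, of rank one. The nuisance design is $\mathbf X$, of rank $p+1$, and the residual degrees of freedom are $n=T-p-1$. The statistic is built from the normalized projected response $\mathbf Y\bm q_T$ and the rescaled covariance $\overline{\mathbf S}_T=r_T\SSS=n^{-1}\mathbf E\mathbf M_X\mathbf E^{\trans}$.

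By Lemma~\ref{lem:projection}:
\begin{itemize}[label=--]
\item $\mathbf Y\bm q_T=\sqrt{T/d_T}\,\aalpha+\mathbf E\bm q_T$, with $\bm q_T$ a unit vector;
\item $\mathbf M_X\bm q_T=\bm0$, so the noise part $\mathbf E\bm q_T$ is built from the direction $\bm q_T$ orthogonal to the range of $\mathbf M_X$, which defines $\overline{\mathbf S}_T$;
\item $\max_t|q_{t,T}|\le C/\sqrt T$, the delocalization needed for non-Gaussian $u_{it}$.
\end{itemize}
This is exactly the structure of their statistic $n^{-1}\bm y_H^{\trans}f(\widehat{\mathbf\Sigma})\bm y_H$ with a rank-one hypothesis. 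Their signal is the intercept scaled as $\sqrt{T/d_T}\aalpha$, which is where the factor $1/d$ in $\mathfrak H_\nu$ comes from.

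\emph{Verifying C1--C6.} The dimension condition $\sqrt n|N/n-c|\to0$, the spectral condition $\sqrt n W_1(H_N,H)\to0$ and the moment conditions have already been checked above. The design conditions C5--C6 are the content of \eqref{eq:designscope}: bounded rows, $n^{-1}\mathbf X^{\trans}\mathbf X\to\mathbf D\succ0$, and $\bm e_1^{\trans}\mathbf D^{-1}\bm e_1=d>0$. Under the null, $\aalpha=\bm0$ and $\nu=0$, so their null CLT gives \eqref{eq:externalCLT} with zero mean directly. The centering $\mathfrak A(f)$ and the variance $\mathfrak V(f,f)$ are their contour formulas rewritten through $\Theta=b^{-1}$ and the kernel $\mathcal D$ in \eqref{eq:contourkernel}. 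I will check this rewriting by comparing residues at $f(x)=(x+\lambda)^{-1}$ against Proposition~\ref{prop:marginal}, which must reproduce $\xi_\lambda$ and $\Delta_\lambda$.

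\emph{Random design.} The design is independent of the innovations and the innovation law is unchanged by conditioning. I will therefore apply the theorem conditionally on $\mathbf X$, and then integrate the conditional characteristic functions by dominated convergence.

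\emph{Local alternatives.} This is the main obstacle. Their signal condition (2.3) is stated for a deterministic alpha, uniformly over the contour $\mathcal C$, whereas our $\aalpha_T$ is random and $\widetilde\nu_T\Rightarrow\nu$ only in probability. The plan is:
\begin{enumerate}[label=\arabic*.]
\item Condition on $\bm\gamma_N$, which is independent of the innovations and design.
\item Use the subsequence argument of Lemma~\ref{lem:signal}: from any subsequence, extract a further one along which, almost surely in $\bm\gamma_N$, $\widetilde\nu_T\Rightarrow\nu$ and the masses $\widetilde M_T$ stay bounded.
\item Along that subsequence, apply the net bound \eqref{eq:signalnet} together with \eqref{eq:deterministicaltinput} for almost every realization. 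This gives the uniform-in-$z$ convergence of $\sqrt n\,\aalpha_T^{\trans}\{b(z)\Sig-z\I_N\}^{-1}\aalpha_T$ to $\mathscr S_\nu(z)$ that (2.3) requires. The key ingredients are the contour lower bound $a_{\mathcal C}>0$ and the Lipschitz control of $\{b(z)\tau-z\}^{-1}$.
\item For almost every $\bm\gamma_N$, the deterministic-alpha theorem then yields the conditional Gaussian limit with mean $\mathfrak H_\nu(f)$.
\item Bounded convergence of conditional characteristic functions gives the unconditional limit along the subsequence. Since the limit does not depend on the subsequence, it holds for the full sequence.
\end{enumerate}

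I also expect to check that the theorem's mean-shift term, written with the population resolvent at the deterministic equivalent $b(z)$, matches $\mathscr S_\nu$. This holds by the scaling identity $\sqrt{n/T}\to1$.
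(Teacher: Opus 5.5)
Your plan is essentially the paper's. The paper proves this lemma only by citing Theorems 2.1 and 2.3 of \citet{LAP2020}, after verifying C1--C6 and, for deterministic alpha, the signal condition (2.3) via \eqref{eq:deterministicaltinput}; it then handles the random alpha draw in the separate Lemma~\ref{lem:randomCLT}, using exactly your conditioning, almost-sure-subsequence and bounded-convergence argument. The differences are minor: you fold that random-alpha step (plus an explicit integration over the design) into this lemma, and your resolvent check against Proposition~\ref{prop:marginal} confirms only the diagonal of $\mathcal D$, so the off-diagonal kernel must still be read directly from \citet{LAP2020}, as the paper does.
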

Lemma~\ref{lem:external} is the rank-one specialization of Theorems 2.1 and 2.3 of \citet{LAP2020} under the regression representation above.

For the marginal corrections, put $c_n=N/n$ and
\[
 \overline m_T(z)=N^{-1}\tr(\overline{\mathbf S}_T-z\I_N)^{-1},\qquad
 \widehat\Theta_n(z)=\{1-c_n-c_nz\overline m_T(z)\}^{-1}.
\]
Define, for $\lambda>0$,
\[
 \overline\xi_T(\lambda)=\widehat\Theta_n(-\lambda)-1,\qquad
 \overline\Delta_T(\lambda)
 =2\{\overline\xi_T(\lambda)+\lambda\overline\xi_T'(\lambda)\}
       \{1+\overline\xi_T(\lambda)\}^2.
\]
Lemma 2.2 of \citet{LAP2020}, applied to the marginal centering and variance, gives
\begin{equation}\label{eq:externalestimation}
 \sqrt n\,|\overline\xi_T(\lambda)-\xi_\lambda|\pto0,
 \qquad
 \sqrt n\,|\overline\Delta_T(\lambda)-\Delta_\lambda|\pto0.
\end{equation}

\subsection{The cross-ridge covariance functional}
For $f_\lambda(x)=(x+\lambda)^{-1}$, the single-ridge identities in \citet{CKMS2025} and \citet[Lemma 2.1 and Theorem 2.3]{LAP2020} give
\begin{equation}\label{eq:marginalfunctionals}
 \mathfrak A(f_\lambda)=\xi_\lambda,\qquad
 \mathfrak H_\nu(f_\lambda)=\calH_\Omega(\lambda).
\end{equation}
The covariance between two ridge functionals is obtained as follows.
\begin{lemma}[Cross-ridge covariance]\label{lem:residues}
For $\lambda,\mu>0$,
\begin{equation}\label{eq:residues}
 \mathfrak V(f_\lambda,f_\mu)=K(\lambda,\mu).
\end{equation}
\end{lemma}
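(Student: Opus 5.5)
We evaluate the double contour integral in $\mathfrak V(f_\lambda,f_\mu)$ by residues, one variable at a time. The only singularities of $f_\lambda(z_1)=(z_1+\lambda)^{-1}$ and $f_\mu(z_2)=(z_2+\mu)^{-1}$ are the simple poles at $-\lambda$ and $-\mu$, which lie outside $\mathcal C$. The kernel $\mathcal D(z_1,z_2)$ is analytic off the limiting spectrum, because $\Theta=b^{-1}$ and $b(z)\neq0$ there. The divided difference $\{z_1\Theta(z_1)-z_2\Theta(z_2)\}/(z_1-z_2)$ has a removable singularity on the diagonal. Hence, for fixed $z_2\in\mathcal C$, the integrand in $z_1$ is analytic in the exterior of $\mathcal C$ except at $z_1=-\lambda$.

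We then deform $\mathcal C$ outward to a large circle. This picks up minus the residue at $-\lambda$, with the sign fixed by the positive orientation. It also leaves a contribution from infinity, which we control using $\Theta(z)=1+O(|z|^{-1})$ from the expansion $b(z)=1+O(|z|^{-1})$ noted after \eqref{eq:spectralfunctionals}.

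\textbf{Behaviour at infinity.} As $z_1\to\infty$ with $z_2$ fixed:
\begin{itemize}
\item $z_1\Theta(z_1)=z_1+\theta_1+O(|z_1|^{-1})$, where $\theta_1$ is a constant;
\item the divided difference minus one is therefore $O(|z_1|^{-1})$;
\item $f_\lambda(z_1)=O(|z_1|^{-1})$.
\end{itemize}
The integrand is thus $O(|z_1|^{-2})$, and the large-circle integral vanishes. If the ordering of the two integrations is delicate, we instead evaluate by residues on the inner contour first, treating both contours as nested copies of $\mathcal C$. For analytic $f$ this is legitimate, since $\mathcal D$ has no singularity on $z_1=z_2$.

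\textbf{Evaluating the first integral.} The $z_1$ integral equals $-(2\pi\ii)\,\mathcal D(-\lambda,z_2)$. The same argument in $z_2$ gives a further factor $-(2\pi\ii)$ and evaluation at $z_2=-\mu$. This yields
\[
\mathfrak V(f_\lambda,f_\mu)=2\,\mathcal D(-\lambda,-\mu).
\]
For the second integration we must check that $z_2\mapsto\mathcal D(-\lambda,z_2)f_\mu(z_2)$ again decays like $|z_2|^{-2}$. This follows from the same expansion, because the divided difference evaluated at a fixed $z_1=-\lambda$ tends to one as $z_2\to\infty$.

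\textbf{Translating to ridge quantities.} We use $\Theta(-\lambda)=1+\xi_\lambda$ from Appendix~\ref{app:input}. For $\lambda\ne\mu$,
\[
\frac{-\lambda(1+\xi_\lambda)+\mu(1+\xi_\mu)}{-\lambda+\mu}-1=\frac{\mu\xi_\mu-\lambda\xi_\lambda}{\mu-\lambda},
\]
so $2\mathcal D(-\lambda,-\mu)$ equals the off-diagonal formula in \eqref{eq:K}.

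On the diagonal, $\frac{d}{dz}\Theta(z)\big|_{z=-\lambda}=-\xi_\lambda'$. Therefore $\Theta+z\Theta'-1$ at $z=-\lambda$ equals $\xi_\lambda+\lambda\xi'_\lambda$, and $2\mathcal D(-\lambda,-\lambda)=\Delta_\lambda$. Alternatively, the diagonal case follows from the off-diagonal case by continuity as $\mu\to\lambda$. This is consistent with Lemma~\ref{lem:fixedpoint} and with the marginal variance in Proposition~\ref{prop:marginal}.

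\textbf{Main obstacle.} The delicate step is justifying the outward deformation and the vanishing at infinity rigorously. This requires the uniform bound $\Theta(z)-1=O(|z|^{-1})$ and the analyticity of $\mathcal D$ in both variables outside the support, including near the diagonal. We plan to obtain these from the representation $b(z)=-z\underline m(z)$, where $\underline m$ is the Stieltjes transform of a compactly supported law, so that $-z\underline m(z)=1+O(|z|^{-1})$ and $b$ does not vanish off the support.
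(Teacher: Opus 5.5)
Your proposal is correct and follows essentially the same route as the paper. The paper isolates the one-variable residue identity $-\frac{1}{2\pi\ii}\oint_{\mathcal C}L(z)(z+\lambda)^{-1}\,dz=L(-\lambda)$ for $L$ analytic outside $\mathcal C$ with $L(z)=O(|z|^{-1})$. It applies this identity in each argument, using $\mathcal D(z_1,z_2)=O(|z_1|^{-1})$ and the removable diagonal singularity, to obtain $\mathfrak V(f_\lambda,f_\mu)=2\mathcal D(-\lambda,-\mu)$, and then substitutes $\Theta(-\lambda)=1+\xi_\lambda$. The paper gets the diagonal case only as the limit $\mu\to\lambda$, whereas you also give the direct computation via $\Theta'(-\lambda)=-\xi_\lambda'$, which is equally valid.
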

\begin{proof}
Let $L(z)$ be analytic outside $\mathcal C$ with $L(z)=O(|z|^{-1})$. Enlarge $\mathcal C$ to a circle $\mathcal C_R$ that also encloses $-\lambda$. The residue at $-\lambda$ is $L(-\lambda)$, and
\[
 \left|\oint_{\mathcal C_R}\frac{L(z)}{z+\lambda}\,dz\right|
 \le2\pi R\sup_{|z|=R}\frac{|L(z)|}{|z+\lambda|}=O(R^{-1}).
\]
Subtracting the two contour integrals and letting $R\to\infty$ yields the residue identity used in \citet[Lemma 2.1]{LAP2020}:
\begin{equation}\label{eq:residueidentity}
 -\frac1{2\pi\ii}\oint_{\mathcal C}\frac{L(z)}{z+\lambda}\,dz=L(-\lambda).
\end{equation}
For fixed $z_2$, $\mathcal D(z_1,z_2)=O(|z_1|^{-1})$ as $|z_1|\to\infty$; the same holds with the arguments interchanged. The divided difference has a removable singularity at $z_1=z_2$. Applying \eqref{eq:residueidentity} in each argument gives
\[
 \mathfrak V(f_\lambda,f_\mu)=2\mathcal D(-\lambda,-\mu).
\]
For $\lambda\ne\mu$, substitution yields
\begin{align*}
 2\mathcal D(-\lambda,-\mu)
 &=2(1+\xi_\lambda)(1+\xi_\mu)
   \left\{\frac{-\lambda(1+\xi_\lambda)+\mu(1+\xi_\mu)}{\mu-\lambda}-1\right\}\\
 &=2(1+\xi_\lambda)(1+\xi_\mu)
       \frac{\mu\xi_\mu-\lambda\xi_\lambda}{\mu-\lambda}
 =K(\lambda,\mu).
 \end{align*}
As $\mu\to\lambda$, the divided difference tends to $\xi_\lambda+\lambda\xi_\lambda'$, so that $\mathfrak V(f_\lambda,f_\lambda)=\Delta_\lambda$.
\end{proof}

\begin{lemma}[Removing the random-alpha conditioning]\label{lem:randomCLT}
The scalar limit \eqref{eq:externalCLT} remains valid for the independent random alpha draw in Assumption~\ref{ass:alt}, with the deterministic measure $\nu$ from Lemma~\ref{lem:signal}.
\end{lemma}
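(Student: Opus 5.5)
The plan is to condition on the alpha draw $\bm\gamma_N$ and reduce to the deterministic-alpha version of Lemma~\ref{lem:external}, using a subsequence argument to handle the fact that the signal measure $\widetilde\nu_T$ converges only in probability. Since $\bm\gamma_N$ is independent of the innovations and the design, the conditional law of $\mathbf E$ given $(\bm\gamma_N,\mathbf X)$ equals its unconditional law. Conditional on $\bm\gamma_N$, therefore, $\aalpha_T$ is a deterministic sequence. The only hypothesis of Lemma~\ref{lem:external} that involves alpha is the signal condition, which \eqref{eq:deterministicaltinput} verifies whenever $\widetilde\nu_T\Rightarrow\nu$ and $\widetilde M_T$ is bounded. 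All remaining conditions C1--C6 concern the innovations, the spectrum and the design, and they were checked in Appendix~\ref{app:input} without reference to alpha.

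The argument runs along subsequences.

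\textbf{Step 1.} Fix a bounded continuous test function $h$ and an arbitrary subsequence. By Lemma~\ref{lem:signal}, $\widetilde\nu_T\Rightarrow\nu$ in probability. Its proof supplies a further subsequence on which, almost surely, $\widetilde\nu_T\Rightarrow\nu$ and $\widetilde M_T$ stays bounded. By \eqref{eq:signalnet}, the supremum over $z\in\mathcal C$ of the signal error in \eqref{eq:deterministicaltinput} then tends to zero almost surely along it.

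\textbf{Step 2.} Fix an outcome of $\bm\gamma_N$ in that almost-sure set. The deterministic-alpha version of Lemma~\ref{lem:external} applies and gives
\[
 \E\bigl[h\bigl(\sqrt n\{M_{n,T}(f)-\mathfrak A(f)\}\bigr)\bigm|\bm\gamma_N\bigr]\to \E h(G),
 \qquad G\sim\mathcal N\{\mathfrak H_\nu(f),\mathfrak V(f,f)\}.
\]
The limit does not depend on the outcome, because $\nu$ is deterministic.

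\textbf{Step 3.} The conditional expectations are bounded by $\|h\|_\infty$, so dominated convergence yields convergence of the unconditional expectations along the further subsequence. Every subsequence thus has a further subsequence with the same limit $\E h(G)$, and the full sequence converges. This proves \eqref{eq:externalCLT} for the random draw.

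\textbf{Main obstacle.} The hard step is justifying that Lemma~\ref{lem:external} may be applied pathwise. This requires two things. First, the LAP theorems must need only the uniform contour convergence in \eqref{eq:deterministicaltinput}, not a rate or a uniform bound on $\|\aalpha_T\|$ beyond what holds on the good event. Second, conditioning on $\bm\gamma_N$ must leave the innovation and design assumptions intact, which follows from independence. For the first point, Lemma~\ref{lem:scale} gives $\|\aalpha_T\|=O_p(T^{-1/4})$, so on the same subsequence one can additionally intersect with the event $\{\|\aalpha_T\|\le KT^{-1/4}\}$, whose probability exceeds $1-\epsilon$ for large $K$. The resulting $\epsilon$ error is absorbed into the bound on $h$, and one then lets $\epsilon\to0$.
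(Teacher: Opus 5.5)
Your proposal is correct and follows the paper's proof essentially step for step: extract a further subsequence on which $\widetilde\nu_T\Rightarrow\nu$ almost surely, apply Lemma~\ref{lem:external} pathwise conditional on the independent alpha draw, and pass to the unconditional limit by bounded convergence; the only cosmetic difference is that the paper uses conditional characteristic functions where you use a general bounded continuous $h$. The extra restriction to $\{\|\aalpha_T\|\le KT^{-1/4}\}$ in your last paragraph is harmless but unnecessary, because $\widetilde M_T=\sqrt T\|\aalpha_T\|^2$ is already almost surely bounded along the chosen subsequence.
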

\begin{proof}
For real $t$, write $A_T=\sqrt n\{M_{n,T}(f)-\mathfrak A(f)\}$ and
\[
 \chi_T(t\mid\aalpha_T)=\E(e^{\ii tA_T}\mid\aalpha_T),\qquad
 \chi(t)=\exp\{\ii t\mathfrak H_\nu(f)-t^2\mathfrak V(f,f)/2\}.
\]
From every subsequence, Lemma~\ref{lem:signal} provides a further subsequence on which $\widetilde\nu_T\Rightarrow\nu$ almost surely.

For almost every realized alpha sequence, \eqref{eq:deterministicaltinput} holds uniformly on $\mathcal C$. Independence preserves the innovation law under conditioning, so Lemma~\ref{lem:external} gives
\[
 \chi_T(t\mid\aalpha_T)\longrightarrow\chi(t)\quad\text{almost surely},
 \qquad |\chi_T(t\mid\aalpha_T)-\chi(t)|\le2.
\]
Bounded convergence implies
\[
 |\E e^{\ii tA_T}-\chi(t)|
 \le\E|\chi_T(t\mid\aalpha_T)-\chi(t)|\longrightarrow0
\]
along that further subsequence. Every subsequence has the same limit, proving unconditional characteristic-function convergence and hence \eqref{eq:externalCLT}.
\end{proof}

\subsection{Marginal normalization bounds}
For the fixed ridges, write
\[
 \begin{aligned}
 U_{j,T}&=\sqrt T\{W_T(\lambda_j)-\xi_{\lambda_j}\},&
 \overline U_{j,T}&=\sqrt n\{M_{n,T}(f_{\lambda_j})-\xi_{\lambda_j}\},\\
 \widehat\xi_j&=\widehat\xi_T(\lambda_j),&
 \overline\xi_{j,T}&=\overline\xi_T(\lambda_j),\\
 \widehat\Delta_j&=\widehat\Delta_T(\lambda_j),&
 \overline\Delta_{j,T}&=\overline\Delta_T(\lambda_j).
 \end{aligned}
\]
Define the scalar correction errors by
\[
 \epsilon_{\xi,T}=\sqrt n\max_j|\overline\xi_{j,T}-\xi_{\lambda_j}|,
 \qquad
 \epsilon_{\Delta,T}=\sqrt n\max_j|\overline\Delta_{j,T}-\Delta_{\lambda_j}|.
\]
Both errors converge to zero in probability by \eqref{eq:externalestimation}. Let $b_* =\min_jb_{\lambda_j}>0$ and $\Delta_* =\min_j\Delta_{\lambda_j}>0$.

\begin{lemma}[Marginal normalization]\label{lem:normalization}
Under either hypothesis,
\begin{equation}\label{eq:normalizationerror}
 \max_j|U_{j,T}-\overline U_{j,T}|=O_p(T^{-1/2}),\qquad
 \max_j\{|\widehat\xi_j-\overline\xi_{j,T}|
          +|\widehat\Delta_j-\overline\Delta_{j,T}|\}\le C/T
\end{equation}
on events whose probabilities tend to one.
\end{lemma}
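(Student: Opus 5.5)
The proof compares each RHT quantity with its counterpart in the $(n,\overline{\mathbf S}_T)$ normalization of Lemma~\ref{lem:external}. Two exact identities drive it: the mapping \eqref{eq:exactmapping} and the relation $c_n/r_T=N/T=c_T$. Every discrepancy then reduces to a deterministic $O(T^{-1})$ perturbation of the ridge, $r_T\lambda$ versus $\lambda$, or of the aspect ratio, $c_n$ versus $c_T$. Such a perturbation is harmless once the relevant resolvents and denominators are uniformly controlled. Since $J$ is fixed, it suffices to treat one ridge $\lambda=\lambda_j\in[\lambda_-,\lambda_+]$ and then take the maximum.

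\emph{Statistic.} By \eqref{eq:exactmapping}, $W_T(\lambda)=M_{n,T}(f_{r_T\lambda})$. We split
\[
 U_{j,T}-\overline U_{j,T}
 =(\sqrt T-\sqrt n)\{W_T(\lambda)-\xi_\lambda\}
  +\sqrt n\{M_{n,T}(f_{r_T\lambda})-M_{n,T}(f_\lambda)\}.
\]
For the first term, $\sqrt T-\sqrt n=O(T^{-1/2})$ because $p$ is fixed. Moreover $0\le W_T(\lambda)\le\|\ahat\|^2/(\lambda d_T)=O_p(1)$ by Lemma~\ref{lem:scale} and $d_T\ge1$. For the second term, the spectral bound $|f_{r_T\lambda}(x)-f_\lambda(x)|\le(r_T-1)\lambda/(x+\lambda)^2\le C/T$ holds for $x\ge0$. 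Hence
\[
 \bigl|M_{n,T}(f_{r_T\lambda})-M_{n,T}(f_\lambda)\bigr|
 \le \frac CT\,n^{-1}\|\mathbf Y\bm q_T\|^2
 =\frac{C}{nd_T}\|\ahat\|^2=O_p(T^{-1}).
\]
Multiplying by $\sqrt n$ gives $O_p(T^{-1/2})$, and the argument is identical under $H_0$ and under Assumption~\ref{ass:alt}.

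\emph{Centering.} Write $\widehat m_T$ as in \eqref{eq:feasible}. Since $\overline{\mathbf S}_T+\lambda\I_N=r_T(\SSS+(\lambda/r_T)\I_N)$, we have $-\lambda\,\overline m_T(-\lambda)=(\lambda/r_T)\widehat m_T(\lambda/r_T)$. Therefore
\[
 \{1+\overline\xi_T(\lambda)\}^{-1}=1-c_n+c_T\,(\lambda/r_T)\,\widehat m_T(\lambda/r_T),
 \qquad
 \{1+\widehat\xi_T(\lambda)\}^{-1}=1-c_T+c_T\,\lambda\,\widehat m_T(\lambda).
\]
The two right-hand sides differ by $|c_n-c_T|\le C/T$ plus $c_T|g(\lambda)-g(\lambda/r_T)|$, where $g(s)=s\widehat m_T(s)$. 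The function $g$ has derivative bounded by $\lambda_-^{-1}$ times an absolute constant for $s\ge\lambda_-/2$, deterministically. To invert, we need both denominators bounded below. Since $1-c_T+c_T\lambda\widehat m_T(\lambda)=1-T^{-1}\tr\{\SSS(\SSS+\lambda\I_N)^{-1}\}$, on the event $\{\|\SSS\|_{\mathrm{op}}\le C_0\}$ it is at least $\lambda/(C_0+\lambda)$, using $\operatorname{rank}\SSS\le n\le T$. This event has probability tending to one by the Bai--Yin bound for $\SSS=T^{-1}\mathbf E\mathbf M_X\mathbf E^{\trans}$ under finite fourth moments and $\|\Sig\|_{\mathrm{op}}\le C_\Sigma$. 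On it, $|\widehat\xi_j-\overline\xi_{j,T}|\le C/T$.

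\emph{Variance, and the main obstacle.} The bound for $\widehat\Delta_j-\overline\Delta_{j,T}$ is where I expect the real work, because it involves the derivatives $\widehat\xi_T'$ and $\overline\xi_T'$. By the chain rule, $\overline\xi_T'(\lambda)$ is a smooth function of $\widehat m_T(\lambda/r_T)$, $\widehat m_T'(\lambda/r_T)$ and $1/r_T$. I would write both $\widehat\xi_T+\lambda\widehat\xi_T'$ and its barred version as one explicit rational function $R(c,s,\widehat m_T(s),\widehat m_T'(s))$, evaluated at $(c_T,\lambda)$ and at $(c_T,\lambda/r_T)$ with the extra $c_n$ shift. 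I would then use three facts on $\{\|\SSS\|_{\mathrm{op}}\le C_0\}$. First, $|\widehat m_T^{(k)}(s)|\le k!\,s^{-k-1}$ for $k\le2$. Second, the denominator lower bound above holds uniformly for $s\in[\lambda_-/2,2\lambda_+]$. Third, $R$ is Lipschitz on the resulting compact range. Together these give $|\widehat\Delta_j-\overline\Delta_{j,T}|\le C/T$ with $C$ depending only on $c$, $C_0$, $\lambda_\pm$ and $p$. Intersecting the finitely many events, one per $j$, completes \eqref{eq:normalizationerror}.
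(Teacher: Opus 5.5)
Your proposal is correct and follows the paper's skeleton. It uses the exact rescaling between $W_T$ and $M_{n,T}$ with $|r_T-1|\le C/T$, the bound $|\sqrt T-\sqrt n|=O(T^{-1/2})$ together with $\|\ahat\|^2=O_p(1)$, and mean-value bounds for rational maps whose denominators stay away from zero. For the statistic, your operator-norm bound on $f_{r_T\lambda}-f_\lambda$ is the same estimate as the paper's bound on $W_T'$.

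The differences are in the variance step and in how the denominator event is certified.
\begin{itemize}
\item \emph{Variance step.} The paper writes $\widehat\xi_T=h_T/(1-h_T)$ and $\widehat\Delta_T=2(v_T-h_T^2)/(1-h_T)^4$ in terms of the moments $h_T(\lambda)=T^{-1}\sum_{\ell\le n}s_\ell/(s_\ell+\lambda)$ and $v_T(\lambda)=T^{-1}\sum_{\ell\le n}\{s_\ell/(s_\ell+\lambda)\}^2$, using the identity $h_T+\lambda h_T'=v_T$. This eliminates the derivative you single out as the main obstacle. The paper then needs only the exact relations $\overline h_T(\lambda)=r_Th_T(\lambda/r_T)$ and $\overline v_T(\lambda)=r_Tv_T(\lambda/r_T)$ and first-derivative bounds. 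Your rational function $R(c,s,\widehat m_T(s),\widehat m_T'(s))$ instead requires control of $\widehat m_T''$. Your route is valid, only heavier.
\item \emph{Denominator event.} The paper works on the events $\{1-h_T>b_*/2,\ 1-\overline h_T>b_*/2\}$, whose probabilities tend to one by the consistency in \eqref{eq:externalestimation}. You instead get the explicit bound $\lambda/(C_0+\lambda)$ on $\{\|\SSS\|_{\mathrm{op}}\le C_0\}$ via Bai--Yin. This is legitimate here, since $\|\SSS\|_{\mathrm{op}}\le C_\Sigma\|T^{-1}\mathbf U\mathbf U^{\trans}\|_{\mathrm{op}}$ with $\mathbf U=(\bm u_1,\ldots,\bm u_T)$ having i.i.d.\ entries with finite fourth moment, and $\mathbf M_X\preceq\I_T$.
\item \emph{Bai--Yin is not needed.} Concavity of $x\mapsto x/(x+\lambda)$ together with $\operatorname{rank}\SSS\le n$ gives $1-h_T(\lambda)\ge\lambda/(N\widehat s_T/n+\lambda)$. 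Likewise $1-\overline h_T(\lambda)\ge\lambda/(r_TN\widehat s_T/n+\lambda)$. Since $\widehat s_T=O_p(1)$ by Lemma~\ref{lem:scale}, both denominators are bounded below with probability tending to one.
\end{itemize}

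There is one algebraic slip in the centering step. Since $\overline m_T(-\lambda)=r_T^{-1}\widehat m_T(\lambda/r_T)$, the correct identity is $\{1+\overline\xi_T(\lambda)\}^{-1}=1-c_n+c_n(\lambda/r_T)\widehat m_T(\lambda/r_T)=1-c_n+c_T\lambda\widehat m_T(\lambda/r_T)$. Your display has $c_T(\lambda/r_T)$ in the last term instead. Also, the preceding identity should read $\lambda\overline m_T(-\lambda)=(\lambda/r_T)\widehat m_T(\lambda/r_T)$, without the minus sign. The discrepancy is $(c_n-c_T)(\lambda/r_T)\widehat m_T(\lambda/r_T)=O(T^{-1})$, so your $C/T$ conclusion stands. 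Consistently, the barred variance is your $R$ evaluated at $(c_n,\lambda/r_T)$.
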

\begin{proof}
The fixed-function version of \eqref{eq:exactmapping} gives
\[
 M_{n,T}(f_\lambda)=W_T(\lambda/r_T),\qquad
 |r_T-1|=\frac{p+1}{n}\le C/T.
\]
On a fixed compact positive ridge interval, differentiation yields
\[
 \QQ'(\lambda)=-\QQ(\lambda)^2,\qquad
 |W_T'(\lambda)|\le\frac{\|\ahat\|^2}{d_T\lambda^2},\qquad
 |W_T(\lambda)-W_T(\lambda/r_T)|\le CT^{-1}\|\ahat\|^2.
\]
Since $|\sqrt T-\sqrt n|=(p+1)/(\sqrt T+\sqrt n)$, Lemma~\ref{lem:scale} implies
\[
 |U_{j,T}-\overline U_{j,T}|
 \le CT^{-1/2}(1+\|\ahat\|^2)=O_p(T^{-1/2}).
\]
Let $s_1,\ldots,s_n$ be the positive eigenvalues of $\SSS$, padded with zeros to length $n$. Define the scalar spectral moments
\[
 h_T(\lambda)=\frac1T\sum_{\ell=1}^n\frac{s_\ell}{s_\ell+\lambda},\qquad
 v_T(\lambda)=\frac1T\sum_{\ell=1}^n
                        \left(\frac{s_\ell}{s_\ell+\lambda}\right)^2.
\]
The identities $h_T=c_T-c_T\lambda\widehat m_T$ and $h_T+\lambda h_T'=v_T$ imply
\begin{equation}\label{eq:marginalmoments}
 \widehat\xi_T(\lambda)=\frac{h_T(\lambda)}{1-h_T(\lambda)},\qquad
 \widehat\Delta_T(\lambda)
 =\frac{2\{v_T(\lambda)-h_T(\lambda)^2\}}{\{1-h_T(\lambda)\}^4}.
\end{equation}
Indeed, differentiating $h_T/(1-h_T)$ gives
\[
 \widehat\xi_T+\lambda\widehat\xi_T'
 =\frac{h_T(1-h_T)+\lambda h_T'}{(1-h_T)^2}
 =\frac{v_T-h_T^2}{(1-h_T)^2}.
\]
The $n$-normalized moments of the eigenvalues $r_Ts_\ell$ satisfy
\[
 \overline h_T(\lambda)=r_T h_T(\lambda/r_T),\qquad
 \overline v_T(\lambda)=r_T v_T(\lambda/r_T).
\]
For $\lambda\ge\ell_0>0$,
\[
 0\le h_T,v_T\le1,\qquad
 |h_T'(\lambda)|\le\frac1{4\ell_0},\qquad
 |v_T'(\lambda)|\le\frac1{2\ell_0},\qquad
 |\overline h_T-h_T|+|\overline v_T-v_T|\le C/T.
\]
On events where $1-h_T$ and $1-\overline h_T$ exceed $b_*/2$, the two rational maps in \eqref{eq:marginalmoments} have bounded derivatives. Their mean-value bounds prove \eqref{eq:normalizationerror}.

In particular, \eqref{eq:externalestimation} gives
\begin{align}
 \sqrt T\max_j|\widehat\xi_j-\xi_{\lambda_j}|
 &\le CT^{-1/2}+\sqrt{T/n}\,\epsilon_{\xi,T},\notag\\
 r_{\Delta,T}:=\max_j|\widehat\Delta_j-\Delta_{\lambda_j}|
 &\le C/T+n^{-1/2}\epsilon_{\Delta,T}=O_p(T^{-1/2}).\label{eq:feasibleerrors}
\end{align}
For $a,b\ge\Delta_*/2$, $|a^{-1/2}-b^{-1/2}|\le C|a-b|$. With $\bm Z_T=(Z_T(\lambda_1),\ldots,Z_T(\lambda_J))^{\trans}$, $\bm U_T=(U_{1,T},\ldots,U_{J,T})^{\trans}$ and $\mathbf D_\Delta=\diag(\sqrt{\Delta_{\lambda_1}},\ldots,\sqrt{\Delta_{\lambda_J}})$,
\[
 Z_T(\lambda_j)-\frac{U_{j,T}}{\sqrt{\Delta_{\lambda_j}}}
 =-\frac{\sqrt T(\widehat\xi_j-\xi_{\lambda_j})}{\sqrt{\widehat\Delta_j}}
   +U_{j,T}\{\widehat\Delta_j^{-1/2}-\Delta_{\lambda_j}^{-1/2}\}.
\]
Hence
\begin{equation}\label{eq:standardizationerror}
 \|\bm Z_T-\mathbf D_\Delta^{-1}\bm U_T\|_\infty
 \le C\{T^{-1/2}+\epsilon_{\xi,T}
 +(T^{-1}+n^{-1/2}\epsilon_{\Delta,T})\|\bm U_T\|_\infty\}.
\end{equation}
Only the marginal corrections enter these bounds. Their distributions are unchanged by the local alpha because $\SSS=T^{-1}\mathbf E\mathbf M_X\mathbf E^{\trans}$.
\end{proof}

\subsection{Joint null and local-alternative limits}
\begin{proof}[Proof of Theorem~\ref{thm:null}]
Fix $\bm t\in\R^J$ and set
\[
 f_{\bm t}(x)=\sum_{j=1}^J\frac{t_j}{\sqrt{\Delta_{\lambda_j}}}\frac1{x+\lambda_j}.
\]
Every pole is strictly negative, so the function is analytic around the nonnegative spectrum. Linearity of $M_{n,T}$ and \eqref{eq:marginalfunctionals}--\eqref{eq:residues} give
\begin{align*}
 \sqrt n\{M_{n,T}(f_{\bm t})-\mathfrak A(f_{\bm t})\}
 &=\sum_j\frac{t_j\overline U_{j,T}}{\sqrt{\Delta_{\lambda_j}}},\\
 \mathfrak V(f_{\bm t},f_{\bm t})
 &=\sum_{i,j}\frac{t_it_jK(\lambda_i,\lambda_j)}
                      {\sqrt{\Delta_{\lambda_i}\Delta_{\lambda_j}}}
 =\bm t^{\trans}\Gam\bm t.
 \end{align*}
For $\bm t\ne0$, this variance is positive by Lemma~\ref{lem:kernel}. Lemma~\ref{lem:external} therefore implies
\[
 \E\exp\left(\ii\sum_j\frac{t_j\overline U_{j,T}}{\sqrt{\Delta_{\lambda_j}}}\right)
 \longrightarrow\exp(-\bm t^{\trans}\Gam\bm t/2).
\]
The scalar limits also imply tightness of the finite vector: for $M>0$,
\[
 \PP(\max_j|\overline U_{j,T}|>M)\le\sum_j\PP(|\overline U_{j,T}|>M),
 \qquad
 \lim_{M\to\infty}\limsup_T\PP(\max_j|\overline U_{j,T}|>M)=0.
\]
Equations~\eqref{eq:normalizationerror} and \eqref{eq:standardizationerror} show that
\[
 \left|\bm t^{\trans}\bm Z_T-
       \sum_j\frac{t_j\overline U_{j,T}}{\sqrt{\Delta_{\lambda_j}}}\right|
 \le C_{\bm t}\epsilon_{\xi,T}+O_p(T^{-1/2})\pto0.
\]
For any $a>0$, the difference between the corresponding characteristic functions is at most $a+2\PP(|\text{difference}|>a)$.

Letting $T\to\infty$ and then $a\downarrow0$ gives the characteristic function of $\mathcal{N}_J(\bm0,\Gam)$. The multivariate continuity theorem proves \eqref{eq:jointnull}; see \citet{Billingsley1999}.

Lemma~\ref{lem:kernel} establishes the stated properties of the population correlation matrix.
\end{proof}

\begin{proof}[Proof of Theorem~\ref{thm:alternative}]
For the same $f_{\bm t}$, \eqref{eq:marginalfunctionals} and \eqref{eq:signalrepresentation} give
\[
 \mathfrak H_\nu(f_{\bm t})
 =\sum_j\frac{t_j}{\sqrt{\Delta_{\lambda_j}}}
         \frac1d\int\frac{d\nu(\tau)}{\lambda_j+b_{\lambda_j}\tau}
 =\bm t^{\trans}\bm\delta_\Omega.
\]
Lemma~\ref{lem:randomCLT} removes the alpha conditioning. Thus
\[
 \E\exp\left(\ii\sum_j\frac{t_j\overline U_{j,T}}{\sqrt{\Delta_{\lambda_j}}}\right)
 \longrightarrow\exp\{\ii\bm t^{\trans}\bm\delta_\Omega
                           -\bm t^{\trans}\Gam\bm t/2\}.
\]
The shifted scalar normal limits make the finite vector tight. The normalization and studentization bounds give
\[
 \left|\bm t^{\trans}\bm Z_T-
       \sum_j\frac{t_j\overline U_{j,T}}{\sqrt{\Delta_{\lambda_j}}}\right|
 \le C_{\bm t}\epsilon_{\xi,T}+O_p(T^{-1/2})\pto0.
\]
The characteristic-function inequality used in the preceding proof yields \eqref{eq:jointalt}.

The covariance remains $\Gam$ because the local signal enters only $\mathfrak H_\nu(f_{\bm t})$.
\end{proof}

\subsection{The estimated ridge scale}
\begin{proof}[Proof of Corollary~\ref{cor:scale}]
Since $\sum_{t=1}^T(\widehat{\bm a}_t-\ahat)=\bm0$, the covariance definition gives
\begin{align*}
 \frac1{NT}\sum_{t=1}^T\|\widehat{\bm a}_t\|^2
 &=\frac1{NT}\sum_{t=1}^T\|\widehat{\bm a}_t-\ahat\|^2
   +\frac{\|\ahat\|^2}{N}
 =\widehat s_T^{\mathrm{raw}},\\
 \widehat s_T^{\mathrm{raw}}-\mu_1
 &=R_{s,T}+r_{s,T}+\frac{\|\ahat\|^2}{N}.
 \end{align*}
Under either hypothesis, Lemma~\ref{lem:scale} yields $\E\|\ahat\|^2\le C$. Hence, for every $x>0$,
\[
 \PP\left\{\frac{\|\ahat\|^2}{N}>\frac{x}{T}\right\}
 \le\frac{T\E\|\ahat\|^2}{Nx}\le\frac Cx,
 \qquad \frac{\|\ahat\|^2}{N}=O_p(T^{-1}).
\]
Choose $[\ell,u]\subset(0,\infty)$ with every $\lambda_j=a_j\mu_1$ in its interior. The bounds on $R_{s,T}$ and $r_{s,T}$ in Lemma~\ref{lem:scale} imply
\begin{align}
 \sqrt T\max_j|\lambda_{j,T}-\lambda_j|
 &\le C\sqrt T|R_{s,T}|+CT^{-1/2}+C\sqrt TW_1(H_N,H)
       +\frac{C\sqrt T}{N}\|\ahat\|^2\notag\\
 &=O_p(T^{-1/2})+O\{\sqrt TW_1(H_N,H)\}.\label{eq:randomridgerate}
\end{align}
All random ridges belong to $[\ell,u]$ with probability tending to one. On this interval the spectral weights satisfy, for integers $r\ge0$,
\[
 \left|\frac{d^r}{d\lambda^r}\frac{s}{s+\lambda}\right|
 =\frac{r!s}{(s+\lambda)^{r+1}}\le r!\ell^{-r},\qquad s\ge0.
\]
Moreover, $\widehat b_T(\lambda)=1-h_T(\lambda)$ is increasing. Pointwise convergence at $\ell$ gives
\[
 \inf_{\lambda\in[\ell,u]}\widehat b_T(\lambda)
 \ge\widehat b_T(\ell)\ge b_\ell/2
\]
with probability tending to one. The marginal rational expressions in \eqref{eq:marginalmoments} consequently have uniformly bounded first derivatives.

A finite mesh with spacing at most $a$ gives
\[
 \sup_{\lambda\in[\ell,u]}|\widehat\Delta_T(\lambda)-\Delta_\lambda|
 \le\max_{\lambda\text{ on mesh}}|\widehat\Delta_T(\lambda)-\Delta_\lambda|+Ca.
\]
Use the pointwise estimates \eqref{eq:externalestimation}, then let $a\downarrow0$. Since $\inf_{[\ell,u]}\Delta_\lambda>0$, all empirical variances are uniformly positive on an event of probability tending to one.

Differentiation of the studentized statistic gives
\begin{align*}
 Z_T'(\lambda)
 &=\sqrt T\left\{\frac{W_T'(\lambda)-\widehat\xi_T'(\lambda)}
                        {\sqrt{\widehat\Delta_T(\lambda)}}
 -\frac{(W_T(\lambda)-\widehat\xi_T(\lambda))\widehat\Delta_T'(\lambda)}
                       {2\widehat\Delta_T(\lambda)^{3/2}}\right\},\\
 \sup_{[\ell,u]}|Z_T'(\lambda)|&\le C\sqrt T(1+\|\ahat\|^2).
 \end{align*}
Here $|W_T|\le\ell^{-1}\|\ahat\|^2$ and $|W_T'|\le\ell^{-2}\|\ahat\|^2$. The mean-value theorem and \eqref{eq:randomridgerate} imply
\begin{align*}
 \max_j\left|Z_T(a_j\widehat s_T^{\mathrm{raw}})-Z_T(a_j\mu_1)\right|
 &\le \max_j|\lambda_{j,T}-\lambda_j|
       \sup_{\lambda\in[\ell,u]}|Z_T'(\lambda)|\\
 &=O_p\{T^{-1/2}+\sqrt T W_1(H_N,H)\}.
 \end{align*}
Both terms tend to zero under Assumption~\ref{ass:base}. The characteristic-function perturbation bound used in the preceding joint-limit proofs therefore gives the same null and local-alternative limits for $\lambda_{j,T}=a_j\widehat s_T^{\mathrm{raw}}$.
\end{proof}
\section{Cauchy transformation and size proofs}\label{app:Cauchy}
\subsection{Elementary Gaussian bounds}
\begin{lemma}[Normal tails]\label{lem:normal}
For $x>0$,
\begin{equation}\label{eq:mills}
 \frac{x\phi(x)}{1+x^2}\le1-\Phi(x)\le\frac{\phi(x)}x.
\end{equation}
If $V\sim \mathcal{N}(0,\sigma^2)$, then $\PP(V>y)\le e^{-y^2/(2\sigma^2)}$ for $y>0$.
\end{lemma}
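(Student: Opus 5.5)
The two lower-order bounds in \eqref{eq:mills} follow from elementary calculus on $x\mapsto 1-\Phi(x)$, and the Gaussian tail bound follows from a Chernoff argument.

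\emph{Upper bound.} Since $\phi'(t)=-t\phi(t)$, I write
\[
 1-\Phi(x)=\int_x^\infty\phi(t)\,dt\le\int_x^\infty\frac tx\phi(t)\,dt=\frac{\phi(x)}x,
\]
using $t/x\ge1$ on the range of integration.

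\emph{Lower bound.} I define
\[
 g(x)=1-\Phi(x)-\frac{x\phi(x)}{1+x^2}.
\]
Differentiating with $\phi'=-x\phi$ gives
\[
 \frac{d}{dx}\frac{x\phi(x)}{1+x^2}
 =\phi(x)\frac{(1-x^2)(1+x^2)-2x^2}{(1+x^2)^2}
 =\phi(x)\frac{1-4x^2-x^4}{(1+x^2)^2}\cdot\frac{1}{1}.
\]
It is cleaner to verify directly that
\[
 g'(x)=-\phi(x)-\phi(x)\frac{1-2x^2-x^4}{(1+x^2)^2}.
\]
Expanding $(1+x^2)^2+1-2x^2-x^4=2$ then yields
\[
 g'(x)=-\frac{2\phi(x)}{(1+x^2)^2}<0.
\]
I would recheck this derivative carefully, since the numerator is the only place where algebra can slip. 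Writing $u=x\phi$ and $w=1+x^2$, we have $u'=\phi(1-x^2)$, so
\[
 (u/w)'=\phi\,\frac{(1-x^2)(1+x^2)-2x^2}{(1+x^2)^2}=\phi\,\frac{1-2x^2-x^4}{(1+x^2)^2},
\]
which confirms the expression above. Hence $g$ is strictly decreasing. Moreover $g(x)\to0$ as $x\to\infty$, because both terms vanish. It follows that $g(x)>0$ for all $x>0$, which is the lower bound.

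\emph{Tail bound for $V\sim\mathcal N(0,\sigma^2)$.} For any $s>0$, Markov's inequality applied to $e^{sV}$ gives
\[
 \PP(V>y)\le e^{-sy}\E e^{sV}=\exp(-sy+s^2\sigma^2/2).
\]
I choose $s=y/\sigma^2$, which gives $e^{-y^2/(2\sigma^2)}$.

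None of these steps poses a real obstacle. The only point needing care is the sign computation of $g'$.
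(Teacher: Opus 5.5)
Your proof is correct. The upper bound and the Chernoff bound match the paper's argument exactly, but your lower bound takes a different route. The paper integrates by parts once more, writing $1-\Phi(x)=\phi(x)/x-\int_x^\infty\phi(u)u^{-2}\,du$. It then bounds the remaining integral by $\{1-\Phi(x)\}/x^2$ and rearranges. This reuses the identity $\phi'(u)=-u\phi(u)$ from the upper bound and needs nothing beyond solving a linear inequality in $1-\Phi(x)$. Your monotonicity argument instead computes $g'(x)=-2\phi(x)/(1+x^2)^2$ and uses $g(x)\to0$. Together these give the exact remainder $g(x)=\int_x^\infty 2\phi(t)(1+t^2)^{-2}\,dt$, so you obtain strict inequality and an explicit expression for the gap. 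The cost is a derivative computation that is more error-prone.

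That risk shows in your write-up: your first displayed derivative should be deleted. It contains the slip $1-4x^2-x^4$ and a stray factor $\frac11$, which contradicts the correct numerator $1-2x^2-x^4$ that you verify immediately afterwards and actually use.
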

\begin{proof}
Since $\phi'(u)=-u\phi(u)$, integration by parts gives
\begin{align*}
 1-\Phi(x)&\le\frac1x\int_x^\infty u\phi(u)\,du=\frac{\phi(x)}x,\\
 1-\Phi(x)&=\frac{\phi(x)}x-\int_x^\infty\frac{\phi(u)}{u^2}\,du
 \ge\frac{\phi(x)}x-\frac{1-\Phi(x)}{x^2}.
 \end{align*}
Rearranging the second inequality proves the lower bound. Completing the square gives $\E e^{tV}=e^{t^2\sigma^2/2}$, so
\[
 \PP(V>y)\le\inf_{t>0}\exp(-ty+t^2\sigma^2/2)
 =\exp\{-y^2/(2\sigma^2)\}.\qedhere
\]
\end{proof}

\subsection{Density of the limiting Cauchy transform}
Let $m_g=\min\{\pi\phi(1),[\pi\phi(1)]^{-1}\}>0$ and $L_J=J/(m_g\sqrt{2\pi})$.
\begin{lemma}[A bounded positive density]\label{lem:continuity}
If $\mathbf R$ is a correlation matrix with nonnegative entries, then $\mathscr C(\bm\delta,\mathbf R)$ has a continuous density $f_{\bm\delta,\mathbf R}$ satisfying
\begin{equation}\label{eq:densitybound}
 0<f_{\bm\delta,\mathbf R}(x)\le L_J\qquad(x\in\R).
\end{equation}
\end{lemma}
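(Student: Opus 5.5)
The plan is to reduce to a one-dimensional change of variables. Conditionally on the part of $\bm G$ orthogonal to a single well-chosen coordinate, $\mathscr C$ is a strictly increasing smooth function of a standard normal variable. Three inputs are needed: a uniform lower bound $g'\ge m_g$, a heavy-weight coordinate, and the nonnegativity of the entries of $\mathbf R$, which makes the conditional map monotone. The density is then the expectation of a bounded, positive, continuous conditional density.

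\emph{Step 1: derivative bound for $g$.} Differentiating $g(z)=\tan\{\pi(\Phi(z)-1/2)\}$ gives $g'(z)=\pi\phi(z)\{1+g(z)^2\}=\pi\phi(z)/\sin^2\{\pi(1-\Phi(|z|))\}$, using the symmetry of $\phi$ and $\Phi$.
For $|z|\le1$, $g'(z)\ge\pi\phi(z)\ge\pi\phi(1)$.
For $|z|>1$, use $\sin y\le y$ and the upper Mills bound of Lemma~\ref{lem:normal}, $1-\Phi(|z|)\le\phi(z)/|z|$. This gives
\[
 g'(z)\ge\frac{\phi(z)}{\pi\{1-\Phi(|z|)\}^2}\ge\frac{z^2}{\pi\phi(z)}\ge\frac1{\pi\phi(1)},
\]
since $z^2/\phi(z)$ is increasing on $[1,\infty)$.
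Hence $g'\ge m_g$ on $\R$. Also $g$ is a smooth increasing bijection of $\R$, with $g(z)\to\pm\infty$ as $z\to\pm\infty$. I expect this step to be the main technical point; the rest is bookkeeping.

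\emph{Step 2: conditioning.} Since $\sum_jw_j=1$, choose $k$ with $w_k\ge1/J$. Write $G_j=R_{jk}G_k+W_j$, where $\bm W=(W_j)_j$ is Gaussian and independent of $G_k\sim\mathcal N(0,1)$. This holds also for singular $\mathbf R$, and $W_k=0$. For fixed $\bm W=\bm w$ set
\[
 \psi_{\bm w}(u)=\sum_jw_jg(R_{jk}u+w_j+\delta_j).
\]
Because $R_{jk}\ge0$, every summand is nondecreasing in $u$, so $\psi_{\bm w}'(u)\ge w_kg'(u+\delta_k)\ge m_g/J$. Moreover $\psi_{\bm w}(u)\to\pm\infty$ as $u\to\pm\infty$: the $k$th term does so, and the remaining terms are monotone in the same direction. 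Thus $\psi_{\bm w}$ is a $C^1$ bijection of $\R$. The conditional density of $\mathscr C$ given $\bm W=\bm w$ is therefore
\[
 f_{\bm w}(x)=\frac{\phi(\psi_{\bm w}^{-1}(x))}{\psi_{\bm w}'(\psi_{\bm w}^{-1}(x))}.
\]
It is continuous in $x$, strictly positive, and bounded by $(1/\sqrt{2\pi})(J/m_g)=L_J$.

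\emph{Step 3: unconditioning.} Define $f_{\bm\delta,\mathbf R}(x)=\E f_{\bm W}(x)$. Fubini gives $\PP(\mathscr C\le x)=\int_{-\infty}^x\E f_{\bm W}(y)\,dy$, so this is a density. It satisfies $0<f_{\bm\delta,\mathbf R}\le L_J$, since the integrand is positive and bounded. It is continuous in $x$ by dominated convergence, with dominating constant $L_J$. This proves \eqref{eq:densitybound}.
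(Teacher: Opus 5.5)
Your proof is correct and takes essentially the paper's route: the bound $g'\ge m_g$ is derived the same way, and the density comes from a one-dimensional Gaussian change of variables given an independent remainder, with nonnegative correlations making the conditional map monotone. The only difference is the conditioning direction: the paper conditions on $V=\one^{\trans}\bm G$ (variance $\sigma_V^2\in[J,J^2]$, slopes $a_j\ge\sigma_V^{-2}$, bound $\sigma_V/(m_g\sqrt{2\pi})\le L_J$), whereas you condition on a coordinate $G_k$ with $w_k\ge1/J$, which has unit variance and gives $L_J$ directly (rename the realization of $\bm W$ so its components do not clash with the weights $w_j$).
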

\begin{proof}
The derivative of $g$ and Lemma~\ref{lem:normal} imply
\[
 g'(x)=\pi\phi(x)\{1+g(x)^2\}>0,\qquad
 g'(x)\ge\pi\phi(1)\quad(|x|\le1).
\]
For $x\ge1$, $\sin u\le u$ and \eqref{eq:mills} give
\[
 g'(x)=\frac{\pi\phi(x)}{\sin^2\{\pi(1-\Phi(x))\}}
 \ge\frac{\phi(x)}{\pi\{1-\Phi(x)\}^2}
 \ge\frac{x^2}{\pi\phi(x)}\ge[\pi\phi(1)]^{-1}.
\]
The derivative is even, so $g'(x)\ge m_g$ on $\R$.

For $\bm G\sim \mathcal{N}_J(\bm0,\mathbf R)$, define
\[
 V=\one^{\trans}\bm G,\quad
 \sigma_V^2=\one^{\trans}\mathbf R\one\in[J,J^2],\quad
 a_j=\frac{(\mathbf R\one)_j}{\sigma_V^2}\ge\sigma_V^{-2},\quad
 R_j^\circ=G_j-a_jV.
\]
Then $\Cov(V,\bm R^\circ)=\bm0$. The Gaussian characteristic function factorizes:
\[
 \E e^{\ii tV+\ii\bm u^{\trans}\bm R^\circ}
 =e^{-t^2\sigma_V^2/2}
   e^{-\bm u^{\trans}\Cov(\bm R^\circ)\bm u/2}.
\]
Thus $V$ and $\bm R^\circ$ are independent, including when $\mathbf R$ is singular.

For fixed $\bm r$, define $h_{\bm r}(v)=\sum_jw_jg(a_jv+r_j+\delta_j)$. Then
\[
 h_{\bm r}'(v)\ge m_g\sum_jw_ja_j\ge m_g/\sigma_V^2>0,
 \qquad
 \lim_{v\to\pm\infty}h_{\bm r}(v)=\pm\infty.
\]
The inverse $v_x=h_{\bm r}^{-1}(x)$ is differentiable. A one-dimensional change of variable gives the conditional density
\[
 f(x\mid\bm r)=\frac{\phi(v_x/\sigma_V)}
                         {\sigma_V h_{\bm r}'(v_x)},\qquad
 0<f(x\mid\bm r)\le\frac{\sigma_V}{m_g\sqrt{2\pi}}\le L_J.
\]
Averaging this density over $\bm R^\circ$ proves \eqref{eq:densitybound}; dominated convergence proves its continuity. In particular,
\begin{equation}\label{eq:intervalprob}
 \PP\{x<\mathscr C(\bm\delta,\mathbf R)\le x+h\}\le L_Jh
 \qquad(x\in\R,
 h>0).
\end{equation}
Its strictly positive density gives full support and a strictly increasing distribution function.
\end{proof}

\subsection{Passing the joint Gaussian limit through the transformation}
For $\bm Y_\Omega=\bm G+\bm\delta_\Omega$, $\bm G\sim \mathcal{N}_J(\bm0,\Gam)$, define the bounded-Lipschitz approximation error
\begin{equation}\label{eq:BLerror}
 \mathfrak d_T(\Omega)=\sup_h
 \big|\E h(\bm Z_T)-\E h(\bm Y_\Omega)\big|,
 \quad |h|\le1,\quad |h(\bm x)-h(\bm y)|\le\|\bm x-\bm y\|.
\end{equation}
Under the null, set $\bm\delta_\Omega=0$. The quantity $\mathfrak d_T(\Omega)$ is the bounded-Lipschitz distance between the specified laws.

For $M>1+\|\bm\delta_\Omega\|_\infty$, let $L_M=\max_{|z|\le M}g'(z)$ and define
\begin{equation}\label{eq:Cerrorbudget}
 \mathfrak e_T^C(\Omega)=\inf_{\substack{M>1+\|\bm\delta_\Omega\|_\infty\\0<h<1}}
 \left[(2+L_M/h)\mathfrak d_T(\Omega)
 +4J e^{-(M-1-\|\bm\delta_\Omega\|_\infty)^2/2}+2L_Jh\right].
\end{equation}
\begin{lemma}[A distributional error bound]\label{lem:transform}
Under the respective joint-limit conditions,
\begin{equation}\label{eq:Ccdfbound}
 \sup_x\left|\PP(C_T\le x)-\PP\{\mathscr C(\bm\delta_\Omega,\Gam)\le x\}\right|
 \le\mathfrak e_T^C(\Omega)\longrightarrow0.
\end{equation}
\end{lemma}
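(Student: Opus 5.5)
The plan is a smoothing argument: turn the CDF of $C_T$ at a threshold into the expectation of a bounded Lipschitz function of $\bm Z_T$, after truncating to a box where $g$ is Lipschitz, and then use the density bound of Lemma~\ref{lem:continuity} to remove the smoothing. Fix $x\in\R$, $M>1+\|\bm\delta_\Omega\|_\infty$ and $h\in(0,1)$. Write $c(\bm z)=\sum_jw_jg(z_j)$, so that $C_T=c(\bm Z_T)$ and $\mathscr C(\bm\delta_\Omega,\Gam)\overset{d}{=}c(\bm Y_\Omega)$. Let $\varphi_h$ be the piecewise-linear function equal to $1$ on $(-\infty,x]$, to $0$ on $[x+h,\infty)$, with slope $-1/h$ in between. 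Let $\rho_M(\bm z)=\prod_j\min\{1,(M-|z_j|)_+\}$, a truncation that equals $1$ on the box $[-(M-1),M-1]^J$, vanishes outside $[-M,M]^J$, and is $1$-Lipschitz in each coordinate.

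The first step is the Lipschitz bound. The function $H_1(\bm z)=\varphi_h\{c(\bm z)\}\rho_M(\bm z)$ is bounded by one. On $[-M,M]^J$, $c$ is Lipschitz with constant $L_M$, because $\sum_jw_j|g'(z_j)|\,|u_j|\le L_M\|\bm u\|$. Hence $H_1$ is Lipschitz with constant of order $L_M/h+1$. Since $\mathfrak d_T$ is defined with unit constants, rescaling gives
\[
|\E H_1(\bm Z_T)-\E H_1(\bm Y_\Omega)|\le(1+L_M/h)\,\mathfrak d_T(\Omega).
\]
The same bound holds for $\rho_M$ alone, with constant at most $\mathfrak d_T$ up to the factor $\sqrt J$, which I would absorb or handle with the sup-norm. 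This controls $\PP(\bm Z_T\notin\text{box})\le 1-\E\rho_M(\bm Z_T)$ through the Gaussian quantity plus $\mathfrak d_T$.

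The second step bounds the Gaussian tail. Each $Y_j=G_j+\delta_j$ with $G_j\sim\mathcal N(0,1)$, since $\Gamma_{jj}=1$. By Lemma~\ref{lem:normal} and a union bound over the $2J$ one-sided events,
\[
\PP\bigl(\|\bm Y_\Omega\|_\infty>M-1\bigr)\le 2Je^{-(M-1-\|\bm\delta_\Omega\|_\infty)^2/2}.
\]
Applying this once for $\bm Y_\Omega$ and once, through $1-\rho_M$, for $\bm Z_T$ yields the term $4Je^{-(\cdot)^2/2}$.

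The third step assembles the upper bound. I would write
\[
\PP(C_T\le x)\le\E\varphi_h(C_T)\le\E H_1(\bm Z_T)+\PP(\bm Z_T\notin\text{box})
\]
and transfer both pieces to $\bm Y_\Omega$. Then
\[
\E\varphi_h\{c(\bm Y_\Omega)\}\le\PP\{\mathscr C\le x+h\}\le\PP\{\mathscr C\le x\}+L_Jh
\]
by \eqref{eq:intervalprob}, where Lemma~\ref{lem:continuity} applies because $\Gamma_{ij}>0$ by Lemma~\ref{lem:kernel}. The lower bound is symmetric: use $\varphi_h$ shifted to $[x-h,x]$, which yields the second $L_Jh$ and gives the factor $2L_Jh$. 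The constants $(2+L_M/h)$ collect the two applications of $\mathfrak d_T$. Taking the infimum over $(M,h)$ gives \eqref{eq:Ccdfbound}, uniformly in $x$ because no bound depends on $x$.

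The fourth step is convergence to zero. Theorems~\ref{thm:null}--\ref{thm:alternative}, or Corollary~\ref{cor:scale}, give $\bm Z_T\dto\bm Y_\Omega$, and bounded-Lipschitz distance metrizes weak convergence on $\R^J$, so $\mathfrak d_T(\Omega)\to0$. Choose $M$ large and then $h$ small to make the last two terms below $\epsilon$; with these fixed, $(2+L_M/h)\mathfrak d_T\to0$.

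I expect the main obstacle to be bookkeeping rather than ideas. The truncation must make $g$ Lipschitz while keeping the error constants exactly as in \eqref{eq:Cerrorbudget}, in particular the norm-dependent Lipschitz constant of $\rho_M$ and of the product with $\varphi_h\circ c$. If the constants do not match exactly, I would redefine the cutoff, for instance using $\|\cdot\|_\infty$-distance to the box, so that it is $1$-Lipschitz in Euclidean norm.
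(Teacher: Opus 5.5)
Your argument is essentially the paper's: a piecewise-linear smoothing of the indicator composed with an $L_M$-Lipschitz truncated version of the Cauchy map, transfer through $\mathfrak d_T(\Omega)$, a Gaussian union bound for the truncation, and the density bound \eqref{eq:intervalprob}. The paper differs in three minor ways. It truncates by clipping coordinates ($A_M$). It controls the clipping with $r_M(\bm z)=1\wedge(\|\bm z\|_\infty-M+1)_+$, which is exactly your sup-norm fix for the $\sqrt J$ Lipschitz constant of the product cutoff. And it proves $\mathfrak d_T\to0$ by a finite-net argument rather than citing the fact that the bounded-Lipschitz distance metrizes weak convergence.

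There is one bookkeeping slip. Your preceding display needs the inner box $[-(M-1),M-1]^J$, where $\rho_M=1$. The inequality $\PP(\bm Z_T\notin\text{box})\le1-\E\rho_M(\bm Z_T)$, however, holds only for the outer box $[-M,M]^J$. You do not need it: since $0\le\varphi_h\le1$, you have the pointwise bound $\varphi_h\{c(\bm z)\}\le H_1(\bm z)+1-\rho_M(\bm z)$, and $1-\rho_M$ is itself Lipschitz.
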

\begin{proof}
To establish $\mathfrak d_T(\Omega)\to0$, consider the functions in \eqref{eq:BLerror} on $[-M,M]^J$. They admit a finite $3a$-net in the uniform norm.

To construct it, use a finite spatial grid with mesh $a/\sqrt J$ and round function values on the grid to multiples of $a$. There are finitely many value patterns.

Choose one function from each nonempty pattern. Two functions with the same pattern differ by at most $a$ on the grid and by at most $3a$ throughout the cube.

For the resulting $\mathfrak n_{M,a}$ functions $h_1,\ldots,h_{\mathfrak n_{M,a}}$,
\begin{align*}
 \mathfrak d_T(\Omega)
 &\le\max_{\ell\le\mathfrak n_{M,a}}|\E h_\ell(\bm Z_T)-\E h_\ell(\bm Y_\Omega)|+6a\\
 &\quad+2\PP(\|\bm Z_T\|_\infty>M)+2\PP(\|\bm Y_\Omega\|_\infty>M).
 \end{align*}
For fixed $M,a$, the maximum tends to zero by weak convergence and bounded continuity. The vector laws are tight; increasing $M$ and decreasing $a$ prove the claim.

Clip every coordinate to $[-M,M]$ and write
\[
 A_M(\bm z)=\sum_jw_jg\{(-M)\vee(z_j\wedge M)\},\qquad
 |A_M(\bm z)-A_M(\bm z')|\le L_M\|\bm z-\bm z'\|.
\]
For $h>0$, the continuous function
\[
 \theta_{x,h}(u)=\begin{cases}
 1,&u\le x,\\
 (x+h-u)/h,&x<u<x+h,\\
 0,&u\ge x+h
 \end{cases}
\]
lies between the indicators at $x$ and $x+h$. It follows from \eqref{eq:BLerror} that
\[
 |\E\theta_{x,h}\{A_M(\bm Z_T)\}-\E\theta_{x,h}\{A_M(\bm Y_\Omega)\}|
 \le(1+L_M/h)\mathfrak d_T(\Omega).
\]
To control clipping, apply \eqref{eq:BLerror} to $r_M(\bm z)=1\wedge(\|\bm z\|_\infty-M+1)_+$. Then
\begin{align*}
 \PP(\|\bm Z_T\|_\infty>M)
 &\le\E r_M(\bm Z_T)\le\PP(\|\bm Y_\Omega\|_\infty>M-1)+\mathfrak d_T(\Omega),\\
 \PP(\|\bm Y_\Omega\|_\infty>M-1)
 &\le2J e^{-(M-1-\|\bm\delta_\Omega\|_\infty)^2/2}.
 \end{align*}
Combining the two smoothed-indicator inequalities at $x$ and $x-h$ with \eqref{eq:intervalprob} bounds the absolute CDF difference by
\[
 (2+L_M/h)\mathfrak d_T(\Omega)
 +4J e^{-(M-1-\|\bm\delta_\Omega\|_\infty)^2/2}+2L_Jh.
\]
Taking the infimum proves \eqref{eq:Ccdfbound}. For convergence, first fix $M,h$, then let $T\to\infty$, $M\to\infty$, and $h\downarrow0$ in that order.
\end{proof}

\subsection{Right-tail equivalence with a remainder}
Let $J\ge2$, $\rho_* =\max_{i\ne j}\Gamma_{ij}<1$ and $\kappa=(1-\rho_*)/4$. Put $X_j=w_jg(G_j)$ for $\bm G\sim \mathcal{N}_J(\bm0,\Gam)$.
\begin{lemma}[Cauchy right tail]\label{lem:tail}
As $x\to\infty$,
\begin{equation}\label{eq:tailrate}
 \PP\left(\sum_jX_j>x\right)
 =\frac1{\pi x}+O\{x^{-1-\kappa}(1+\log x)^2\}.
\end{equation}
\end{lemma}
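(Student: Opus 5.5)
We prove \eqref{eq:tailrate} in the Liu--Xie style. The first step isolates the event that a single component is large. Fix $\theta\in(0,1)$, to be chosen in terms of $\rho_*$, and put $y=x^{\theta}$. Consider the events $A_j=\{X_j>x(1+\epsilon)\}$ and $B_j=\{X_j>y\}$, where $\epsilon=x^{-\kappa}$ or a similar power. Three estimates carry the argument.

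\emph{Marginal tail.} For $z$ large, $g(z)=\cot\{\pi(1-\Phi(z))\}$, and Lemma~\ref{lem:normal} gives
\[
 \PP\{g(G_j)>u\}=\frac1\pi\arctan(1/u)=\frac1{\pi u}+O(u^{-3}),
\]
exactly. Hence $\PP(X_j>u)=w_j/(\pi u)+O(u^{-3})$, and $\sum_jw_j=1$ yields the leading term $1/(\pi x)$.

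\emph{Pairwise joint tail.} For $i\ne j$, the event $\{X_i>y,X_j>y\}$ requires $G_i,G_j>z_y$, where $z_y\sim\sqrt{2\log y}$ is defined by $1-\Phi(z_y)\asymp 1/y$. The bivariate normal with correlation $\Gamma_{ij}\le\rho_*<1$ satisfies
\[
 \PP(G_i>z,G_j>z)\le\PP(G_i+G_j>2z)\le e^{-z^2/(1+\rho_*)}
\]
by the second part of Lemma~\ref{lem:normal}. Since $\phi(z_y)\approx z_y/y$, this gives $\PP(G_i>z_y,G_j>z_y)\lesssim (z_y/y)^{2/(1+\rho_*)}$, which is $y^{-1-(1-\rho_*)/(1+\rho_*)}$ up to logarithmic factors. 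With $y=x^\theta$ for $\theta$ close to $1$, this is $O(x^{-1-\kappa}(\log x)^{2})$, because $(1-\rho_*)/(1+\rho_*)\ge(1-\rho_*)/2=2\kappa$. The logarithmic factors explain the power two.

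\emph{Negative part and moderate components.} When $G_j$ is very negative, $g(G_j)\approx -1/\{\pi\Phi(G_j)\}$ is heavy tailed on the left. This is the delicate point, because it can cancel a large positive component.

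The decomposition is as follows. For the upper bound,
\[
 \Bigl\{\sum_jX_j>x\Bigr\}\subset\bigcup_jA_j\cup\Bigl\{\sum_jX_j>x,\ \max_jX_j\le x(1+\epsilon)\Bigr\}.
\]
On the second event, either at least two components exceed $y$, which the pairwise bound controls, or exactly one component exceeds $y$ and lies in $(x-Jy,\,x(1+\epsilon)]$. The marginal tail shows that this last window has probability $O(\epsilon/x+y/x^2)$. The case where no component exceeds $y$ is impossible once $Jy<x$. For the lower bound, $\{\sum_jX_j>x\}\supset A_j\cap\{\sum_{i\ne j}X_i>-\epsilon x\}$. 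We therefore need
\[
 \PP\Bigl(A_j\cap\Bigl\{\textstyle\sum_{i\ne j}X_i\le-\epsilon x\Bigr\}\Bigr)=O(x^{-1-\kappa}).
\]
This intersection requires $G_j$ large while some $G_i$ is very negative. Because $\Gamma_{ij}>0$ by Lemma~\ref{lem:kernel}, the event is more than a pairwise upper-tail-type joint event: it is a large deviation in the direction $(+,-)$. Its variance is no larger than in the independent case, so a product bound of order $x^{-1}(\epsilon x)^{-1}$ holds up to logarithms.

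The main obstacle is making this lower-bound cancellation rigorous with the stated rate. We will handle it by bounding $\PP(G_j>a,\,G_i<-b)\le\PP(G_j-G_i>a+b)\le e^{-(a+b)^2/(2(2-2\Gamma_{ij}))}$. Here $a\approx\sqrt{2\log x}$ and $b\approx\sqrt{2\log(\epsilon x)}$, giving exponent at least $(a+b)^2/4\approx 2\log x$. This beats $x^{-1-\kappa}$ comfortably for $\epsilon=x^{-\kappa}$. Adding the errors $O(\epsilon/x)$, $O(y/x^2)$ and the pairwise term, and choosing $\theta$ and $\epsilon$ to balance them, gives $O\{x^{-1-\kappa}(1+\log x)^2\}$.
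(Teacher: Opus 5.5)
Your argument is correct. Its skeleton matches the paper's proof: the exact marginal identity $\PP(X_j>u)=\pi^{-1}\arctan(w_j/u)$, the Gaussian quantile $q\approx\sqrt{2\log v}$, a Chernoff bound on $G_i+G_j$ with variance at most $2(1+\rho_*)$, the threshold $v=x^{1-\kappa}$, and a one-big-jump sandwich.

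The routes differ only in how cancellation from the heavy left tail is handled.
\begin{itemize}
\item The paper defines its bad event $B_x$ through two-sided exceedances $|X_j|>v$. Because $\{|X_j|>v\}=\{|G_j|>q_j(v)\}$, it bounds $\PP(e_iG_i>q_*,e_jG_j>q_*)$ for all four sign pairs at once; the variance $2+2e_ie_j\Gamma_{ij}$ never exceeds $2(1+\rho_*)$ since $\Gamma_{ij}>0$. Outside $B_x$ every other coordinate lies in $[-v,v]$. A single shift $b_x=(J-1)v$ then gives both the upper and the lower bound, with no margin parameter.
\item You use one-sided events for the upper bound, correctly noting that negative components only help there. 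For the lower bound you add the margin $\epsilon=x^{-\kappa}$ and a separate $(+,-)$ Chernoff bound on $G_j-G_i$. Write out the Bonferroni subtraction of $\PP(A_i\cap A_j)$ that passing from a single $A_j$ to $\bigcup_j(A_j\cap E_j)$ requires; your upper-tail pairwise bound controls it, since $x(1+\epsilon)>y$.
\end{itemize}
Your route costs a second tuning parameter. In exchange, it shows that the cancellation term is of order $x^{-(1+\sqrt{1-\kappa})^2/2}$ up to logarithms, far below the target rate. Your estimate $(a+b)^2/4\approx 2\log x$ should read roughly $(2-\kappa)\log x$, which still suffices because $\kappa<1/4$.

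One arithmetic point needs fixing. The inequality $(1-\rho_*)/(1+\rho_*)\ge 2\kappa$ is too lossy to balance the errors. The window term $y/x^2=x^{\theta-2}$ forces $\theta\le 1-\kappa$. With the lossy exponent the pairwise term is only $x^{-\theta(1+2\kappa)}$, and $(1-\kappa)(1+2\kappa)=1+\kappa-2\kappa^2<1+\kappa$. The feasible range of $\theta$ is in fact narrow, roughly $[1-\kappa-2\kappa^2,\,1-\kappa]$, not ``close to $1$''. Take $\theta=1-\kappa$ and use the exact exponent, as the paper does:
$2(1-\kappa)/(1+\rho_*)-(1+\kappa)=(1-\rho_*)^2/\{4(1+\rho_*)\}\ge 0$.
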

\begin{proof}
The uniformity of $\Phi(G_j)$ gives an exact marginal identity. For $y>0$,
\begin{equation*}
 \PP(X_j>y)=\frac1\pi\arctan(w_j/y)
 =\frac{w_j}{\pi y}+r_j(y),\qquad
 |r_j(y)|\le\frac{w_j^3}{3\pi y^3}.
\end{equation*}
The remainder follows from $\arctan u-u=-\int_0^u t^2/(1+t^2)\,dt$.

For sufficiently large $v$, define $q_j(v)>1$ by
\[
 1-\Phi\{q_j(v)\}=\pi^{-1}\arctan(w_j/v).
\]
Since $g$ is odd and increasing, $\{|X_j|>v\}=\{|G_j|>q_j(v)\}$. For constants depending only on the fixed weights,
\[
 c_1/v\le1-\Phi\{q_j(v)\}\le c_2/v.
\]
The upper normal-tail bound gives $q_j(v)^2\le2\log v+C$. Using the lower bound in \eqref{eq:mills},
\[
 \frac{c_2}{v}\ge\frac{\phi\{q_j(v)\}}{2q_j(v)},\qquad
 q_j(v)^2\ge2\log v-2\log q_j(v)-C.
\]
Consequently,
\begin{equation}\label{eq:quantilebound}
 2\log v-\log\log v-C\le q_j(v)^2\le2\log v+C.
\end{equation}
Let $q_*(v)=\min_jq_j(v)$. For signs $e_i,e_j\in\{-1,1\}$, the event $e_iG_i>q_*(v)$, $e_jG_j>q_*(v)$ implies $e_iG_i+e_jG_j>2q_*(v)$.

This Gaussian sum has variance $2+2e_ie_j\Gamma_{ij}\le2(1+\rho_*)$. Lemma~\ref{lem:normal} therefore gives
\[
 \PP\{e_iG_i>q_*(v),e_jG_j>q_*(v)\}
 \le\exp\{-q_*(v)^2/(1+\rho_*)\}.
\]
Summing over the four sign pairs and applying \eqref{eq:quantilebound} yields
\begin{equation*}
 \PP(|X_i|>v,|X_j|>v)
 \le Cv^{-2/(1+\rho_*)}(1+\log v)^2.
\end{equation*}
This bound includes opposite signs and hence controls possible cancellation.

Put $a=(3+\rho_*)/4=1-\kappa$, $v=x^a$ and $b_x=(J-1)v$. Let $B_x$ mean that at least two coordinates satisfy $|X_j|>v$. Then
\[
 \PP(B_x)\le\sum_{i<j}\PP(|X_i|>v,|X_j|>v)
 \le Cx^{-2a/(1+\rho_*)}(1+\log x)^2.
\]
For large $x$, $Jv<x$. Outside $B_x$, at most one coordinate is large, and
\[
 \left\{\sum_jX_j>x\right\}\setminus B_x
 \subseteq\bigcup_j\{X_j>x-b_x\},\qquad
 \left(\bigcup_j\{X_j>x+b_x\}\right)\setminus B_x
 \subseteq\left\{\sum_jX_j>x\right\}.
\]
Pairwise intersections in the second union are contained in $B_x$. Therefore
\begin{equation}\label{eq:tailsandwich}
 \sum_j\PP(X_j>x+b_x)-C\PP(B_x)
 \le\PP\left(\sum_jX_j>x\right)
 \le\sum_j\PP(X_j>x-b_x)+\PP(B_x).
\end{equation}
Because $b_x\le x/2$ eventually,
\[
 \left|\frac1{x\pm b_x}-\frac1x\right|\le\frac{2b_x}{x^2},\qquad
 \sum_j\PP(X_j>x\pm b_x)=\frac1{\pi x}+O(v/x^2+x^{-3}),
\]
where $\sum_jw_j=1$ is used. The total error in \eqref{eq:tailsandwich} is bounded by
\[
 C\{x^{a-2}+x^{-3}+x^{-2a/(1+\rho_*)}(1+\log x)^2\}.
\]
Finally,
\[
 2-a=1+\kappa,\qquad
 \frac{2a}{1+\rho_*}-(1+\kappa)
 =\frac{(1-\rho_*)^2}{4(1+\rho_*)}\ge0.
\]
Every error term has the order asserted in \eqref{eq:tailrate}.
\end{proof}

\subsection{Proofs for the combined test}
\begin{proof}[Proof of Corollary~\ref{cor:Climit}]
Theorems~\ref{thm:null} and \ref{thm:alternative} identify the two Gaussian vector limits. Lemma~\ref{lem:transform} then gives, under the respective hypotheses,
\[
 \sup_x|\PP(C_T\le x)-\PP\{\mathscr C(\bm\delta_\Omega,\Gam)\le x\}|
 \le\mathfrak e_T^C(\Omega)\longrightarrow0.
\]
Lemma~\ref{lem:continuity} gives a continuous limit. The same bound holds for closed upper tails by taking left limits in the CDF inequality. At $x=\cot(\pi\eta)$ it proves \eqref{eq:Cpower}.

Nonnegativity of $\nu$ implies $\delta_\Omega(\lambda_j)\ge0$. Coupling both limits with the same $\bm G$ gives
\[
 \begin{aligned}
 \mathscr C(\bm\delta_\Omega,\Gam)-\mathscr C(\bm0,\Gam)
 &=\sum_jw_j\int_0^{\delta_\Omega(\lambda_j)}g'(G_j+t)\,dt\ge0.
 \end{aligned}
\]
Thus local rejection probability is at least the limiting null rejection probability at a common critical value.
\end{proof}

\begin{proof}[Proof of Theorem~\ref{thm:size}]
Suppose first that $J\ge2$. Lemma~\ref{lem:kernel} ensures $\rho_*<1$. Set $x_\eta=\cot(\pi\eta)$. As $\eta\downarrow0$,
\[
 x_\eta=\frac1{\pi\eta}+O(\eta),\qquad
 \frac1{\pi x_\eta}=\eta+O(\eta^3),\qquad
 x_\eta^{-1-\kappa}(1+\log x_\eta)^2
 \le C\eta^{1+\kappa}(1+|\log\eta|)^2.
\]
These expansions follow by dividing the Taylor formulas for $\cos u$ and $\sin u$ at $u=\pi\eta$. Lemma~\ref{lem:tail} yields
\begin{equation*}
 \left|\PP\{\mathscr C(\bm0,\Gam)>x_\eta\}-\eta\right|
 \le C\eta^{1+\kappa}(1+|\log\eta|)^2.
\end{equation*}
Equation~\eqref{eq:Ccdfbound} and the rejection event in \eqref{eq:Cauchy} give
\begin{equation*}
 \left|\PP_0(p_{C,T}\le\eta)-\eta\right|
 \le\mathfrak e_T^C(0)+C\eta^{1+\kappa}(1+|\log\eta|)^2.
\end{equation*}
For each fixed $\eta>0$, $\mathfrak e_T^C(0)\to0$. Dividing by $\eta$ after taking $T\to\infty$ proves \eqref{eq:sizebound}; then $\eta\downarrow0$ proves \eqref{eq:size}.

For a simultaneous sequence $\eta_T\downarrow0$, the bound in \eqref{eq:Cerrorbudget} requires $\mathfrak e_T^C(0)/\eta_T\to0$ in addition to the joint weak limit.

For $J=1$, $w_1=1$ and
\[
 \frac12-\frac1\pi\arctan g(G_1)
 =\frac12-\{\Phi(G_1)-1/2\}=1-\Phi(G_1)\sim\operatorname{Unif}(0,1).
\]
The limiting analytic rejection probability is exactly $\eta$ at each fixed level, proving the remaining statement.
\end{proof}

\section{Implementation identities for the simulations}\label{app:simulation}
\subsection{Residual-degrees-of-freedom normalization}\label{app:sim_df}
For the covariance estimator $\SSS$ in \eqref{eq:cov}, let $n=T-p-1$ and let $s_1,\ldots,s_n$ be its positive eigenvalues padded with zeros when necessary. The simulated marginal correction is
\begin{equation}\label{eq:sim_corrections}
	\begin{gathered}
		\omega_\ell(\lambda)
		= \frac{s_\ell}{s_\ell+\lambda},
		\qquad
		\widehat h_{n,T}(\lambda)
		= \frac{1}{n}\sum_{\ell=1}^n \omega_\ell(\lambda),
		\qquad
		\widehat b_{n,T}(\lambda)
		= 1-\widehat h_{n,T}(\lambda),
		\\[0.6em]
		\widehat\xi_{n,T}(\lambda)
		= \frac{\widehat h_{n,T}(\lambda)}
		{\widehat b_{n,T}(\lambda)},
		\qquad
		\widehat\Delta_{n,T}(\lambda)
		= \frac{2}
		{n\,\widehat b_{n,T}(\lambda)^4}
		\sum_{\ell=1}^n
		\bigl\{\omega_\ell(\lambda)
		-\widehat h_{n,T}(\lambda)\bigr\}^2 .
	\end{gathered}
\end{equation}

The statistic and marginal $p$-value used in the reported results are
\begin{equation}\label{eq:sim_statistic}
 Z_T^{\mathrm{df}}(\lambda)=
 \frac{\sqrt n\{W_T(\lambda)-\widehat\xi_{n,T}(\lambda)\}}
      {\sqrt{\widehat\Delta_{n,T}(\lambda)}},
 \qquad
 p_T^{\mathrm{df}}(\lambda)=1-\Phi\{Z_T^{\mathrm{df}}(\lambda)\}.
\end{equation}
Here $W_T(\lambda)$ is defined in \eqref{eq:W}. Define $C_T^{\mathrm{df}}$ by \eqref{eq:Cauchy}, replacing each $p_{j,T}$ by $p_T^{\mathrm{df}}(\lambda_{j,T})$. The experiments use the five scaled ridges in \eqref{eq:sim_scale}.

With $h=h_T(\lambda)$ and $v=v_T(\lambda)$ from \eqref{eq:marginalmoments}, the normalization changes these moments to $h_n=(T/n)h$ and $v_n=(T/n)v$. The rational bounds in Lemma~\ref{lem:normalization} give
\[
 \max_j\{|\widehat\xi_{n,T}(\lambda_j)-\widehat\xi_T(\lambda_j)|
       +|\widehat\Delta_{n,T}(\lambda_j)-\widehat\Delta_T(\lambda_j)|\}
 \le C/T
\]
on events whose probabilities tend to one. The exact relation
\[
 Z_T^{\mathrm{df}}(\lambda)
 =\sqrt{\frac nT}\sqrt{\frac{\widehat\Delta_T(\lambda)}
                                  {\widehat\Delta_{n,T}(\lambda)}}Z_T(\lambda)
 +\frac{\sqrt n\{\widehat\xi_T(\lambda)-\widehat\xi_{n,T}(\lambda)\}}
             {\sqrt{\widehat\Delta_{n,T}(\lambda)}}
\]
then gives $\max_j|Z_T^{\mathrm{df}}(\lambda_{j,T})-Z_T(\lambda_{j,T})|=O_p(T^{-1/2})$ on the grid in Corollary~\ref{cor:scale}. Since $g$ has a bounded derivative on every compact interval and both finite vectors are tight, $C_T^{\mathrm{df}}-C_T=O_p(T^{-1/2})$; hence the Cauchy limits and relative size conclusion are unchanged.

\begin{samepage}
\subsection{The leave-one-out comparator}\label{app:sim_loo}
This subsection specifies the simulation implementation of the selector in \eqref{eq:loo_rule}. The experiment replaces the original multiplier grid by $\mathcal A$ in \eqref{eq:sim_scale} and uses the common residual-degrees-of-freedom variance estimate:
\[
 \widehat a_{\mathrm L}\in\argmax_{a\in\mathcal A}
 \frac{\widetilde H_T(a\widehat s_T^{\mathrm{raw}})}
      {\sqrt{\widehat\Delta_{n,T}(a\widehat s_T^{\mathrm{raw}})}},
 \qquad \widehat\lambda_{\mathrm L,T}=\widehat a_{\mathrm L}\widehat s_T^{\mathrm{raw}}.
\]
Ties are resolved by the smallest multiplier. LOO rejects when $p_T^{\mathrm{df}}(\widehat\lambda_{\mathrm L,T})<\eta$, using \eqref{eq:sim_statistic}. The uncentered Gram matrix and candidate penalties retain their $T$-based scale. The regression and ridge scale are computed from the full sample, and no additional correction for selection is applied.

\end{samepage}

\subsection{The population-formula benchmark}\label{app:sim_formula}
For the formula optimum, the spectral law is uniform on $[1,2]$ and the generating allocation is $\vartheta(\tau)=1+\tau/2$. To express the criterion in residual-degrees-of-freedom covariance units, put $c_n=N/n$ and define, for $y>0$,
\begin{align*}
 \mathfrak b_n(y)&=1-c_n\int_1^2\frac{\tau}{y+\tau}\,d\tau,\\
 \mathfrak h_n(y)&=c_n\int_1^2\frac{\tau^2}{(y+\tau)^2}\,d\tau,
 \qquad
 m_0(y)=\int_1^2\frac{1+\tau/2}{y+\tau}\,d\tau.
 \end{align*}
The effective ridge $y$ corresponds to the nominal ridge $\ell=y\mathfrak b_n(y)$ for $\overline{\mathbf S}_T=(T/n)\SSS$. Over
\[
 \mathcal D_n=\{y>0:\mathfrak b_n(y)>0,\quad
             0<\mathfrak h_n(y)<1,\quad
             y\mathfrak b_n(y)\in[1.5\times10^{-5},150]\},
\]
choose
\begin{equation}\label{eq:sim_formula}
 y^\star\in\argmax_{y\in\mathcal D_n}
       m_0(y)\sqrt{\frac{1-\mathfrak h_n(y)}{2\mathfrak h_n(y)}},
 \qquad
 \lambda_{\mathrm F,T}^{\star}=\frac nT y^\star\mathfrak b_n(y^\star).
\end{equation}
The penalty $\lambda_{\mathrm F,T}^{\star}$ is expressed directly in the covariance units of $\SSS$. The factor $n/T$ converts the ridge from $\overline{\mathbf S}_T$ to $\SSS$; the common factor $d^{-1}$ in the population power objective does not affect its maximizer.

\end{document}